\definecolor{darkgreen}{rgb}{0,0.5,0}
\definecolor{darkblue}{rgb}{0,0,0.8}
\definecolor{darkred}{rgb}{0.8,0,0}
\newcommand\defeq{\stackrel{\mathclap{\normalfont{\mbox{\text{\tiny{def}}}}}}{=}}
\newtheorem{definition}{Definition}[section]
\newtheorem{lemma}[definition]{Lemma}
\newtheorem{theorem}[definition]{Theorem}
\newtheorem{corollary}[definition]{Corollary}
\newcommand{\poly}{\mathrm{poly}}
\newcommand{\bigo}{\mathcal{O}}
\newcommand{\ignore}[1]{}
\newcommand{\indicator}[1]{\ensuremath{\mathbbm{1}\!\left[#1\right]}}
\newcommand{\floor}[1]{\ensuremath{{\lfloor}#1{\rfloor}}}
\newcommand{\ceil}[1]{\ensuremath{{\lceil}#1{\rceil}}}
\newcommand{\fodd}{\mathtt{odd}}
\newcommand{\feven}{\mathtt{even}}
\newcommand{\ham}{\mathsf{Ham}}
\newcommand{\thr}{\mathsf{Thr}}
\newcommand{\dom}{\mathsf{Dom}}
\newcommand{\newfootnote}[1]{\renewcommand*{\thefootnote}{\fnsymbol{footnote}}\footnote{#1}\renewcommand*{\thefootnote}{\arabic{footnote}}}
\newcommand{\vprod}{\textsc{vprod}}
\newcommand{\conv}{\textsc{conv}}
\newcommand{\mprod}{\textsc{mprod}}
\newcommand{\PMHamming}{\textsc{HammingDistancePatternMatching}\xspace}
\newcommand{\PMLessThan}{\textsc{LessThanPatternMatching}\xspace}
\newcommand{\PMThreshold}{\textsc{ThresholdPatternMatching}\xspace}
\newcommand{\PMLOne}{\textsc{{\ensuremath{\textsc{L}_{1}}}PatternMatching}\xspace}
\newcommand{\PMLTwo}{\textsc{{\ensuremath{\textsc{L}_{2}}}PatternMatching}\xspace}
\newcommand{\PMHammingShort}{\textsc{HamPM}\xspace}
\newcommand{\PMLessThanShort}{\textsc{LessThanPM}\xspace}
\newcommand{\PMThresholdShort}{\textsc{ThrPM}\xspace}
\newcommand{\PMLoddShort}{\textsc{\ensuremath{\textsc{L}_{2p+1}}\hspace{0cm}PM}\xspace}
\newcommand{\PMLevenShort}{\textsc{\ensuremath{\textsc{L}_{2p}}PM}\xspace}
\newcommand{\PMLOneShort}{\textsc{{\ensuremath{\textsc{L}_{1}}}PM}\xspace}
\newcommand{\PMLTwoShort}{\textsc{{\ensuremath{\textsc{L}_{2}}}PM}\xspace}
\newcommand{\AllPairsDominance}{\textsc{AllPairsDominanceProducts}\xspace}
\newcommand{\AllPairsLOneDistances}{\textsc{AllPairs\ensuremath{\textsc{L}_{1}}Distances}\xspace}
\newcommand{\AllPairsLTwoDistances}{\textsc{AllPairs\ensuremath{\textsc{L}_{2}}Distances}\xspace}
\newcommand{\AllPairsThreshold}{\textsc{AllPairsThresholdProducts}\xspace}
\newcommand{\AllPairsHamming}{\textsc{AllPairsHammingDistances}\xspace}
\newcommand{\AllPairsDominanceShort}{\textsc{APDom}\xspace}
\newcommand{\AllPairsLOneDistancesShort}{\textsc{AP\ensuremath{\textsc{L}_{1}}}\xspace}
\newcommand{\AllPairsLTwoDistancesShort}{\textsc{AP\ensuremath{\textsc{L}_{2}}}\xspace}
\newcommand{\AllPairsLoddDistancesShort}{\textsc{AP\ensuremath{\textsc{L}_{2p+1}}\hspace{0cm}}\xspace}
\newcommand{\AllPairsLevenDistancesShort}{\textsc{AP\ensuremath{\textsc{L}_{2p}}}\xspace}
\newcommand{\AllPairsThresholdShort}{\textsc{APThr}\xspace}
\newcommand{\AllPairsHammingShort}{\textsc{APHam}\xspace}
\newcommand{\sparse}{\textsf{Sparse}}
\title{\bf Hamming distance completeness\\ and sparse matrix multiplication}
\author{\Large Daniel Graf}
\author{Karim Labib}
\author{Przemys\l{}aw~Uzna\'nski}
\affil{Department of Computer Science, ETH Z\"urich, Switzerland}
\date{}
\begin{document}
\maketitle

\thispagestyle{empty}

\abstract{
We show that a broad class of $(+,\diamond)$ vector products (for binary integer functions $\diamond$) are equivalent under one-to-polylog reductions to the computation of the Hamming distance. Examples include: the dominance product, the threshold product and $\ell_{2p+1}$ distances for constant $p$.
Our results imply equivalence (up to polylog factors) between complexity of computation of All Pairs: Hamming Distances, $\ell_{2p+1}$ Distances, Dominance Products and Threshold Products.
As a consequence, Yuster's~(SODA'09) algorithm improves not only Matou\v{s}ek's (IPL'91), but also the results of Indyk, Lewenstein, Lipsky and Porat (ICALP'04) and Min, Kao and Zhu (COCOON'09).
Furthermore, our reductions apply to the pattern matching setting, showing equivalence (up to polylog factors) between pattern matching under Hamming Distance, $\ell_{2p+1}$ Distance, Dominance Product and Threshold Product, with current best upperbounds due to results of Abrahamson (SICOMP'87), Amir and Farach (Ann.~Math.~Artif.~Intell.'91), Atallah and Duket (IPL'11), Clifford, Clifford and Iliopoulous (CPM'05) and Amir, Lipsky, Porat and Umanski (CPM'05).
The resulting algorithms for $\ell_{2p+1}$ Pattern Matching and All Pairs $\ell_{2p+1}$, for $2p+1 = 3,5,7,\dots$ are new.

Additionally, we show that the complexity of \AllPairsHamming (and thus of other aforementioned \textsc{AllPairs}- problems) is within $\poly \log n$ from the time it takes to multiply matrices $n \times (n\cdot d)$ and $(n\cdot d) \times n$, each with $(n\cdot d)$ non-zero entries. This means that the current upperbounds by Yuster (SODA'09) cannot be improved without improving the sparse matrix multiplication algorithm by Yuster and Zwick~(ACM TALG'05) and vice versa.
}

\clearpage

\section{Introduction}

In the last few decades, many classical algorithmic problems received new attention when formulated as algebraic problems.
In pattern matching, instead of looking for occurrences of a pattern as a substring of a text, we can define a similarity score between two strings and ask for this score between the pattern $\mathbf{P}$ of length $m$ and every $m$-substring of the text $\mathbf{T}$ of length $n \geq m$. For example, scores of Hamming distance or $L_1$ distance 
 between numerical strings generalize the classical pattern matching: the total score for a given alignment is zero iff the pattern occurs exactly there in the text. One can go in this framework even further, and consider functions that are not metric, e.g. \PMLessThan which outputs the number of coordinates for which the pattern element is no larger than the corresponding text element. However, all those problems share a common additive structure, where for an input pattern $\mathbf{P}$ and text $\mathbf{T}$, the score vector $\mathbf{O}$ is such that $\mathbf{O}[i] = \sum_j \mathbf{P}[j] \diamond \mathbf{T}[i+j]$ for some binary function $\diamond$.
 
 Just as those pattern matching generalizations are based on \emph{convolution}, there is a family of problems based on \emph{matrix multiplication}, varying in flavour according to the vector product used. There we are given two matrices $A$ and $B$ representing $n$ vectors of dimension $d$, and the output is the matrix $O[i][j] = \sum_k A[i][k] \diamond B[k][j]$.
This is equivalent to the computation of all pairwise $(+,\diamond)$-vector products for two vector families, the so called \textsc{AllPairs-} problems.

In both of those worlds, the complexity is spanned between  \emph{easy} and \emph{hard} cases. An easy case is observed for e.g. $(+,\times)$ products, which have upper bound $\bigo(n \log n)$ for convolution (the classical Discrete Fast Fourier Transform algorithm) or $\bigo(n^\omega)$ (where  $\omega < 2.373$ c.f. Le~Gall~\cite{LeGall:2014:PTF:2608628.2608664}) for matrix multiplication. A hard case is considered to be respectively either near quadratic time or near cubic time problems. In the world of $(+,\diamond)$ vector products, we have not observed problems of the hard type, and instead they are either easy, or admit some intermediate complexity.\newfootnote{To observe good candidates for \emph{hard} problems, we have to go beyond $(+,\diamond)$ products, and consider  either $(\min,+)$ convolution (c.f. \cite{DBLP:conf/icalp/CyganMWW17,DBLP:conf/icalp/KunnemannPS17}) or $(\min,+)$ matrix product (c.f. \cite{DBLP:conf/focs/WilliamsW10}).} For many pattern matching generalizations there are independently achieved algorithms of identical complexity $\bigo(n \sqrt{m \log m})$. Similarly, for many AllPairs- problems, the best algorithms are of complexity $\bigo(n^{(\omega+3)/2})$ or similar. Why are so many different problems of essentially the same complexity?

\subsection*{Our contribution:}
We show that there is a shared source of hardness to those problems. That is, we show that for a wide class of $(+,\diamond)$ products, the corresponding problems are of (almost) equivalent hardness. This class includes not only products like Hamming distance or Dominance, but in fact any piecewise polynomial function of two variables (for appropriate definition of piecewise polynomiality, c.f. Definition~\ref{def:piecewise}) excluding certain degenerate forms (e.g. polynomials).  Thus we show that we should not expect the problems based on $(+,\diamond)$ products to be significantly harder to compute than e.g. ones based on Hamming distance (given reasonable restrictions on $\diamond$). The reduction applies both to Pattern Matching setting and to All Pairs- setting alike. We refer to Table~\ref{tab:problems} for a summary of considered problems, to Table~\ref{tab:prior-results} for a summary of existing upper bounds, and to Figure~\ref{fig:diag} for a summary of the old and new reductions.

Our reductions imply that any improvement made to one problem from the family translates to every other problem: e.g.~Yuster \cite{DBLP:conf/soda/Yuster09} improved the exponent of \AllPairsDominance (when $d=n$) to less than $(3+\omega)/2$ and this improvement applies to all other \textsc{AllPairs}- problems considered here. Another example is  that the  tradeoff achieved for one problem (e.g.\AllPairsHamming) between vectors dimension $d$ and the exponent (c.f. \cite{DBLP:conf/cocoon/MinKZ09} and \cite{DBLP:conf/isaac/GoldS17}) applies by our results to all the other \textsc{AllPairs}- problems considered here. Similarly, consider the sparsity of the input, where the tradeoff between the number of relevant entries in the input and the runtime (c.f. Vassilevska \cite{vassilevska2008efficient}, Vassilevska~et. al. \cite{DBLP:journals/toc/VassilevskaWY09} and Duan and Pettie \cite{DBLP:conf/soda/DuanP09}) applies to all of the mentioned problems. (See Section~\ref{sec:related} for precise upper bounds.)

We thus observe that there is a shared barrier in a broad class of problems and one is unlikely to improve upon existing upper bounds without some significant breakthrough. For both pattern matching problems and geometric problems we consider here, existing runtimes come from a tradeoff between the number of buckets and the size of these buckets. Without a novel technique, this runtime is unlikely to be improved. Similarly, any lower bound proof for one of the listed problems would immediately apply to every other problem. 

To exemplify this, we provide a conditional lowerbound 
to \AllPairsHamming (and thus to other \textsc{AllPairs}- problems) of the following form, linking its complexity to one of a sparse rectangular matrix multiplication (c.f. Theorem~\ref{th:sparsity_hamming}). First, we show that instance of \AllPairsHammingShort can be naturally expanded to an instance of matrix multiplication, with only 0/1 on input that is sparse. The reduction is straightforward, however it is interesting to observe that applying the fastest existing sparse matrix multiplication algorithm (c.f. Yuster and Zwick~\cite{DBLP:journals/talg/YusterZ05}) to resulting instance results in the same runtime as solving it directly (c.f. \cite{DBLP:conf/soda/Yuster09}). This hints that those problems have deeper connection, which we formalize by showing a converse reduction. Existence of such reduction is less obvious, since it has to handle more arbitrary structure of input matrices, restricted only by a total number of non-zero elements on the input. 

\renewcommand{\arraystretch}{1.2}

\begin{table}[t!]
\centering
{
\small
\begin{tabular}{llll}
\toprule
Name & Score function & Pattern Matching problem & All Pairs problem \\
\midrule
Hamming & $\indicator{x \not= y}$& $\mathbf{O}[i] = |\{j : \mathbf{P}[j] \not= \mathbf{T}[i+j] \}|$  &$O[i][j] = |\{ k : \mathbf{A}_i[k] \not= \mathbf{B}_j[k]\}|$\\
Dominance & $\indicator{x \le y}$& $\mathbf{O}[i] = |\{j : \mathbf{P}[j] \le \mathbf{T}[i+j] \}|$ & $O[i][j] = |\{ k : \mathbf{A}_i[k] \le \mathbf{B}_j[k]\}|$\\
Threshold & $\indicator{|x-y| \ge \delta}$ & $\mathbf{O}[i] = |\{j : |\mathbf{P}[j] - \mathbf{T}[i+j]| > \delta \}|$ & $O[i][j] = |\{k :  |\mathbf{A}_i[k] - \mathbf{B}_j[k]| > \delta\}|$\\
$\ell_1$ distance & $|x-y|$&$\mathbf{O}[i] = \sum_j |\mathbf{P}[j] - \mathbf{T}[i+j]|$ &$O[i][j] = \sum_{k=1}^n |\mathbf{A}_i[k] - \mathbf{B}_j[k] |$\\
$\ell_2$ distance &$(x-y)^2$ & $\mathbf{O}[i] = \sum_j (\mathbf{P}[j] - \mathbf{T}[i+j])^2$ &$O[i][j] = \sum_{k=1}^n (\mathbf{A}_i[k] - \mathbf{B}_j[k])^2$\\
\bottomrule
\end{tabular}
}
\caption{Summary of different score functions and the corresponding problems. $\indicator{\varphi}$ is $1$ iff $\varphi$ and $0$ otherwise.}
\label{tab:problems}
\vspace{-5mm}
\end{table}

\subsection*{Further applications:}
We provide the following further applications of our reductions. 
\begin{itemize}
\item Since they preserve structural properties of inputs like \emph{size of RLE}\newfootnote{Run Length Encoding}, in \cite{icalp2018GU} they were used to establish equivalence between runtime of \PMHamming and \PMLOne on instances with bounded RLE.
\item In Censor{-}Hillel et al. \cite{DBLP:journals/corr/abs-1802-04789} authors analyze complexity of sparse matrix multiplication under restricted bandwidth all-to-all communication model (the so called \textsc{Congested Clique} model). We note that their analysis implies immediately bounds to computation of All Pairwise Hamming Distance (and thus other vector products as well) by presented here Theorem~\ref{th:hamming_to_sparse} and  Theorem~\ref{th:sparse_to_hamming}.
\item Consider problem of Image Template Matching, where one is given as an input a two-dimensional text $T$ and a pattern $P$ of dimensions $n\times n$ and $m \times m$ respectively. The goal is to compute the dissimilarity score between $P$ and all $m \times m$-subsquares of $T$. Atallah in \cite{DBLP:journals/tip/Atallah01} gives the $\widetilde\bigo(n^2 m)$ runtime algorithm for $L_1$ version of this problem (so called \emph{Sum of the Absolute Value} difference measure). We note, that by our reductions, an equivalence between $L_1$, Hamming, Dominance and Threshold versions of this problem are established.
\end{itemize}

\begin{table}[t!]
\centering
{
\small
\begin{tabular}{lllllll}
\toprule
Name & \multicolumn{3}{c}{Pattern Matching problem} & 
\multicolumn{3}{c}{All Pairs problem}\\
\midrule
Hamming & \PMHammingShort & $\bigo(n \sqrt{m \log m})$ & \cite{Abrahamson87} & \AllPairsHammingShort & $\bigo(n^{(\omega+3)/2})$& \cite{DBLP:conf/cocoon/MinKZ09}\\
Dominance & \PMLessThanShort & $\bigo(n \sqrt{m \log m})$ & \cite{Amir1991} & \AllPairsDominanceShort & $\bigo(n^\rho)$&\cite{DBLP:conf/soda/Yuster09}\\
Threshold & \PMThresholdShort & $\bigo(n \sqrt{m \log m})$ & \cite{DBLP:journals/ipl/AtallahD11} & \AllPairsThresholdShort & $\bigo(n^{(\omega+3)/2})$& \cite{DBLP:conf/icalp/IndykLLP04} \\
$\ell_1$ distance & \PMLOneShort & $\bigo(n \sqrt{m \log m})$ & \cite{DBLP:conf/cpm/CliffordCI05,Amir2005} & \AllPairsLOneDistancesShort & $\bigo(n^{(\omega+3)/2})$ & \cite{DBLP:conf/icalp/IndykLLP04}\\
$\ell_2$ distance & \PMLTwoShort & $\bigo(n \log m)$ & \cite{DBLP:journals/algorithmica/LipskyP11} & \AllPairsLTwoDistancesShort & $\bigo(n^\omega)$ &  \cite{DBLP:conf/icalp/IndykLLP04} \\
\bottomrule
\end{tabular}
}
\caption{Overview of the known results and of how we abbreviate the corresponding problem names. $\rho \le 2.6834$ is a solution to $\rho = \omega(1, 4-\rho, 1)$, where $\omega(a,b,c)$ is the exponent of fast multiplication of rectangular matrices $n^a \times n^b$ with $n^b \times n^c$.}
\label{tab:prior-results}
\vspace{-5mm}
\end{table}

\section{Related work}
\label{sec:related}
We now list different pattern matching problems that differ in their underlying score functions. 
The \textbf{\PMHamming}\newfootnote{Also known as Pattern Matching with Mismatches.} was studied by Abrahamson~\cite{Abrahamson87}, \textbf{\PMLessThan}  was introduced by Amir and Farach~\cite{Amir1991}, \textbf{\PMLOne} was studied independently by Clifford, Clifford and Iliopoulous~\cite{DBLP:conf/cpm/CliffordCI05} and Amir, Lipsky, Porat and Umanski~\cite{Amir2005} (although 2-dimensional version of this problem was studied earlier by Atallah~\cite{DBLP:journals/tip/Atallah01}) and \textbf{\PMThreshold} was studied by Atallah and Duket~\cite{DBLP:journals/ipl/AtallahD11}. 
For all those problems, the currently best known algorithms run in time $\bigo(n \sqrt{m \log m})$ using similar techniques: high/low frequency, bucketing and convolution. 

For \textbf{\PMLTwo}, since $\mathbf{O}[i] = \sum_j \mathbf{P}[j]^2 + \sum_j  \mathbf{T}[i+j]^2 - 2 \sum_j \mathbf{P}[j] \mathbf{T}[i+j]$, 
 the dominating term in the computation arises from computing a single convolution in time $\bigo(n \log m)$ via Fast Fourier Transform (FFT) as observed by Lipsky and Porat~\cite{DBLP:journals/algorithmica/LipskyP11}. This approach extends to \PMLevenShort, with runtime $\bigo(p^2 n \log m)$.

On the side of reductions, only little was known. Lipsky and Porat~\cite{DBLP:journals/ipl/LipskyP08a} showed that both \PMHammingShort and \PMLessThanShort reduce to \PMLOneShort showing that the latter problem is no easier than the former problems. The question of whether e.g.~\PMHammingShort could be substantially easier than \PMLOneShort remained open. The first non-trivial reduction (although not stated as a lower-bound type result) was provided by Zhang and Atallah~\cite{DBLP:journals/ipl/ZhangA17}, where they showed that \PMThresholdShort with threshold $\delta$ reduces to $\bigo(\log \delta)$ instances of \PMHammingShort(see Figure \ref{fig:diag}).

In computational geometry, a classical problem is to process a set of $n$ points given in $d$-dimensional space. One can consider e.g.~the metric space and ask for a pair of closest or farthest points. Some of those problems in low-dimensions (i.e.~$d = \poly \log n$) exhibit a structure that allows for solutions almost linear in $n$ for some metrics (see Williams~\cite{DBLP:conf/soda/Williams18}). However, in high-dimensional data, the situation is usually dire, as the so called \emph{curse of dimensionality} kicks in (c.f. \cite{DBLP:reference/ml/KeoghM17} and \cite{DBLP:journals/toc/Har-PeledIM12}) and for processing such spaces usually the fastest known approach is to compute all pairwise distances \cite{DBLP:conf/icalp/IndykLLP04}.

\textbf{\AllPairsDominance} was introduced by Matou\v{s}ek~\cite{dominance}, where he provided a solution working in time $\bigo(n^{(\omega+3)/2}) \subseteq \bigo(n^{2.687})$. 
Vassilevska~\cite{vassilevska2008efficient} and Vassilevska et al.~\cite{DBLP:journals/toc/VassilevskaWY09} considered dominance product on sparse inputs where we denote by $m_1$ and $m_2$ the number of entries in $A$ and $B$, respectively that contribute to the score. 
They obtain a bound of $\bigo(\min(n^\omega+\sqrt{m_1 m_2} \cdot n^{\frac{\omega - 1}{2}},n^2 + (m_1 m_2)^{\frac{\omega - 2}{\omega - \alpha - 1}} n^{\frac{2 - \alpha \omega}{\omega - \alpha - 1}})).$\newfootnote{$\alpha = \sup\{0 \le r \le 1 : \omega(1,r,1) = 2+o(1)\} \ge 0.31389$.}
Duan and Pettie~\cite{DBLP:conf/soda/DuanP09} simplified this analysis.
For $d \ll n$, Vassilevska and Williams \cite{Vassilevska:2006:FMW:1132516.1132550} and \cite{vassilevska2008efficient} gave an algorithm with a time bound of $\bigo(n^{\frac{2\omega - \omega \alpha - 2}{\omega - \alpha -1}} d^{\frac{2\omega - 4}{\omega - \alpha - 1}} + n^{2 + o(1)})$. 
Yuster~\cite{DBLP:conf/soda/Yuster09} improved the bound of the case $d = n$ to $\bigo(n^\rho)$, where $\rho$ is a solution to $\rho = \omega(1, 4-\rho, 1)$. The bound $\rho \le 2.6834$ is provided.
Recently, Gold and Sharir~\cite{DBLP:conf/isaac/GoldS17} presented an updated analysis of the time vs.~dimension tradeoff using newer bounds on rectangular matrix multiplication. For $d=n$, this gives a runtime of $\bigo(n^{2.6598})$.

\textbf{\AllPairsLOneDistances} was considered by Indyk et al. \cite{DBLP:conf/icalp/IndykLLP04},with an $\bigo(n^{(\omega+3)/2})$ algorithm for the case when $d=n$.
Although not stated as such, one algorithm presented in \cite{DBLP:conf/icalp/IndykLLP04} can be adapted to computing \textbf{\AllPairsThreshold}  in time $\bigo(n^{(3+\omega)/2})$. \cite{vassilevska2008efficient} introduced  \textbf{$\bm{(+,\max)}$-\textsc{MatrixProduct}} where the score matrix is $O[i][j] = \sum_{k=1}^n \max(\mathbf{A}_i[k],\mathbf{B}_j[k])$ and presented a bucketing solution with runtime $\bigo(n^{(\omega + 3) / 2})$. The algorithm follows in spirit the \AllPairsLOneDistancesShort algorithm from \cite{DBLP:conf/icalp/IndykLLP04} with a tweaked score contribution.
\textbf{\AllPairsHamming} was considered by Min et al.~\cite{DBLP:conf/cocoon/MinKZ09}. Inspired by the reduction from Hamming distance to $L_1$ in \cite{DBLP:journals/ipl/LipskyP08a}, they utilized the \AllPairsLOneDistancesShort algorithm from \cite{DBLP:conf/icalp/IndykLLP04}. This resulted in a $\bigo(n^{(\omega+3)/2})$ time algorithm when $d=n$. They also utilized rectangular matrix multiplication bounds to provide a tradeoff in the complexity when $d \ll n$. Writing their upper bound in a general form, the complexity is $\bigo(n^{1 + \omega(1,s,1) / 2}\sqrt{d})$ where $d = n^s$. Given the  improved bounds for rectangular matrix multiplication by Le~Gall~\cite{DBLP:conf/focs/Gall12} and subsequently by Le~Gall and Urrutia~\cite{DBLP:conf/soda/GallU18}, the bounds from \cite{DBLP:conf/cocoon/MinKZ09} are stronger.
\textbf{\AllPairsLTwoDistances} as observed by Indyk et al. \cite{DBLP:conf/icalp/IndykLLP04} reduces to a single matrix multiplication and thus admits a runtime of $\bigo(n^\omega)$. Similarly, \AllPairsLevenDistancesShort admits a runtime of $\bigo(p^2 n^\omega)$.

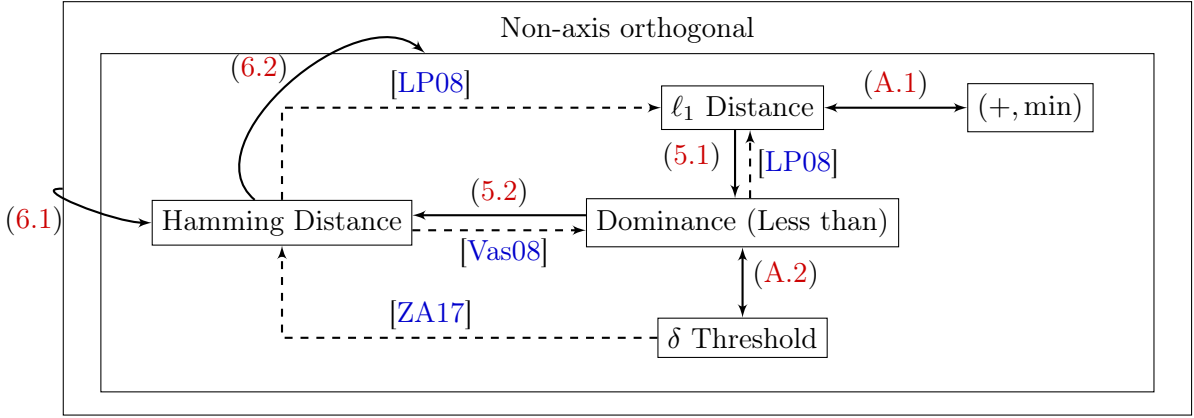
\begin{figure}[t!]
\tikzstyle{line} = [draw, -latex', thick, dashed]
\tikzstyle{lineNew} = [draw, -latex', thick]
\tikzstyle{block} = [draw, rectangle, text centered, node distance = 4em]
\tikzstyle{bigRectangle} = [draw, rectangle, minimum width = 15cm, minimum height = 5.5cm, font=\Large]

\tikzstyle{smallRectangle} = [draw, rectangle, minimum width = 14 cm, minimum height = 4.5cm, font=\Large]

\begin{tikzpicture}
\node [bigRectangle, anchor=center, label=Piecewise Polynomials] (PieceWise){};

\node[smallRectangle, anchor=center, label=Non-axis orthogonal, yshift= -0.5em] at (PieceWise.center) (NonAxisOrth){};
\node [block, xshift=4em] at (NonAxisOrth.center)(Dominance) {Dominance (Less than)};
\node [block, above of=Dominance] (L1Distance) {$\ell_1$ Distance};
\node [block, below of=Dominance] (Threshold) {$\delta$ Threshold};

\node [block, left of=Dominance,xshift=-12em] (HammingDistance) {Hamming Distance};
\node [block, right of=L1Distance,xshift=6em] (PlusMin) {$(+,\min)$};

\path [line] (HammingDistance) |- node[xshift=5 em,yshift=3mm]{\cite{DBLP:journals/ipl/LipskyP08a}}(L1Distance);
\path [line] ([yshift=-1mm]HammingDistance.east) -- node[xshift=0 em,yshift=-3mm]{\cite{vassilevska2008efficient}} ([yshift=-1mm]Dominance.west);
\path [line] ([xshift=1mm]Dominance.north) -- node[xshift=1.5 em]{\cite{DBLP:journals/ipl/LipskyP08a}} ([xshift=1mm]L1Distance.south);
\path [line] (Threshold) -| node[xshift = 5em, yshift=3mm]{\cite{DBLP:journals/ipl/ZhangA17}} (HammingDistance);

\path [lineNew] ([xshift=-1mm]L1Distance.south) -- node[xshift=-1.5 em,yshift=1mm]{\eqref{th:L1_to_Dominance}} ([xshift=-1mm]Dominance.north);	
\path[lineNew] (PieceWise) to[out=-182,in=180] node[xshift=-6mm,yshift=-2mm]{\eqref{th:completeness1}} (HammingDistance);
\path[lineNew] (HammingDistance) to[out=140,in=140] node[xshift=-4mm,yshift=2mm]{\eqref{th:completeness2}} (NonAxisOrth);

\path [draw, latex'-latex', thick] (L1Distance.east) -- node[xshift=0 em,yshift=3mm]{\eqref{lem:l1_equiv_min}} (PlusMin.west);	
\path [lineNew] ([yshift=1mm]Dominance.west) -- node[xshift=0 em,yshift=3mm]{\eqref{th:Dominance_to_Hamming}} ([yshift=1mm]HammingDistance.east);
\path [draw, latex'-latex', thick] (Threshold.north) -- node[xshift=1.5 em,yshift=1mm]{\eqref{lem:Dom_eq_Thr}} (Dominance.south);
\end{tikzpicture}
\caption{Existing and new reductions between problems, together with problem classes.}
\label{fig:diag}
\vspace{-4mm}
\end{figure}

We observe that $L_2$ is an ''easy'' score function. For every other score function mentioned, all solutions presented use a bucketing or a high/low frequency technique to decompose the problem into ones solvable by convolution (for Pattern Matching problems) or matrix multiplication (for All Pairs problems).
We refer to Tables~\ref{tab:problems} and~\ref{tab:prior-results} for a summary.
There are several related problems that use the aforementioned problems as subroutines. 
\textbf{Weighted Pattern Matching} in the most general setting asks for $\mathbf{O}[i] = \sum_j w(P[j],T[i+j])$ for some weight function $w$. In \cite{DBLP:journals/algorithmica/LipskyP11} authors presented a simple $\bigo(|\Sigma| n \log m)$ algorithm.
\textbf{Pattern Matching with Wildcards} admits a simple determinsitic $\bigo(n \log m)$ solution via weighted \PMLTwoShort, as shown by Clifford and Clifford~\cite{DBLP:journals/ipl/CliffordC07}.
\textbf{Closest $L_\infty$ Pair} was considered by Indyk et al. \cite{DBLP:conf/icalp/IndykLLP04} where the presented algorithm uses binary search on top of \AllPairsThresholdShort. The total runtime is $\bigo(n^{(\omega+3)/2} \log D)$, where $D$ is the diameter of the input point set.

It is possible to speed up matrix multiplication beyond  $\bigo(n^\omega)$ time when the input matrices are sparse, i.e. out of $n^2$ entries, they have only $m_1$ and $m_2$ entries that are non-zero. If $m_1$ and $m_2$ are sufficiently small, a faster algorithm is possible. Yuster and Zwick~\cite{DBLP:journals/talg/YusterZ05} presented an algorithm with runtime
$\bigo(\min((m_1 m_2)^\frac{ \omega-2}{ \omega-1-\alpha} n^{\frac{2 - \alpha\omega}{\omega-1-\alpha}} + n^{2+o(1)}, m_1 n, m_2 n, n^{\omega}))$. 
We denote $\sparse(a,b,c;m_1,m_2)$ as the time of optimal algorithm for multiplication sparse matrices $a\times b$ and $b\times c$, with $m_1$ and $m_2$ nonzero entries respectively, so for instance \cite{DBLP:journals/talg/YusterZ05} states upperbound on $\sparse(n,n,n;m_1,m_2)$.

\section{Preliminaries}
Previously discussed problems have at their core the computation of $(+,\diamond)$ vector product, that is $\sum_i x_i \diamond y_i$ for some binary function $\diamond$. 
Formally, for vectors $\mathbf{A},\mathbf{B}$ and matrices $\mathcal{A},\mathcal{B}$, we denote
the $(+,\diamond)$ vector product as $\vprod(\diamond, \mathbf{A}, \mathbf{B}) \ \defeq\  \sum_{i} \mathbf{A}[i] \diamond \mathbf{B}[i]$, the $(+,\diamond)$ convolution as $\conv(\diamond, \mathbf{A},\mathbf{B}) = \mathbf{C} \text{\ \ such that\ \ } \mathbf{C}[k] = \sum_{i+j = k} \mathbf{A}[i] \diamond \mathbf{B}[j]$ and
the $(+,\diamond)$ matrix product as $\mprod(\diamond,\mathcal{A},\mathcal{B}) = \mathcal{C} \text{\ \ such that\ \ } \mathcal{C}[i,j] = \sum_{k} \mathcal{A}[i,k] \diamond \mathcal{B}[k,j]$.

Thus, e.g.~since $\ham(x,y) \defeq \indicator{ x \not= y }$,
then  $\vprod(\ham,\mathbf{X},\mathbf{Y})$ is the Hamming Distance between $\mathbf{X}$ and $\mathbf{Y}$, \PMHammingShort is essentially $\conv(\ham,\mathbf{X},\mathbf{Y}^R)$, and \AllPairsHammingShort between vectors $\{X_1,\ldots,X_n\}$ and $\{Y_1,\ldots,Y_n\}$ is $\mprod\Big(\ham,\begin{bmatrix} X_1& \cdots& X_n \end{bmatrix}^T, \begin{bmatrix} Y_1& \cdots& Y_n \end{bmatrix}\Big)$. 

We now shift our attention to the relations between the binary functions.
\begin{definition}
\label{def:reduction}
We say that $\diamond$ reduces preserving linearity to instances of $\square_1, \dots,\square_K$, if there are functions $f_1,\ldots,f_K$ and $g_1,\ldots,g_K$ and coefficients $\alpha_1,\ldots,\alpha_K$, such that for any $x,y$:\newfootnote{For the sake of simplicity, we are omitting in the definition the post-processing function necessary e.g.~$(\ \cdot\ )^{1/p}$ for $L_p$ norms.}
$$
x \diamond y = \sum_{i} \alpha_i \cdot \Big(f_i(x)\ \square_i\ g_i(y)\Big).
$$
\end{definition}
A one-to-many reduction from $\diamond$ to $\square$ is also a one-to-many reduction from $(+,\diamond)$ vector product/convolution/matrix multiplication to $(+,\square)$ vector product/convolution/matrix multiplication. Indeed, given Definition \ref{def:reduction}, we have for any vectors $\mathbf{A},\mathbf{B}$ and matrices $\mathcal{A},\mathcal{B}$:
$\vprod(\diamond,\mathbf{A},\mathbf{B}) = \sum_{i} \alpha_i \cdot \vprod(\square_i,f_i(\mathbf{A}),g_i(\mathbf{B}))$,
$\conv(\diamond,\mathbf{A},\mathbf{B}) = \sum_{i} \alpha_i \cdot \conv(\square_i,f_i(\mathbf{A}),g_i(\mathbf{B}) )$ and
$\mprod(\diamond,\mathcal{A},\mathcal{B}) = \sum_{i} \alpha_i \cdot \mprod(\square_i,f_i(\mathcal{A}), g_i(\mathcal{B}))$, where $f(\mathbf{A})$ and $f(\mathcal{A})$ denotes a coordinate-wise application of $f$ to vector $\mathbf{A}$ and matrix $\mathcal{A}$, respectively.

\section{Main results}

\emph{Remark:}
We assume that all input values and coefficients are integers bounded in absolute value by $\poly(n)$.

\begin{definition}
\label{def:piecewise}
For integers $A,B,C$ and polynomial $P(x,y)$ we say that the function $P(x,y) \cdot \indicator{A x + B y + C > 0}$ is \emph{halfplane polynomial}.
We call a sum of halfplane polynomial functions a \emph{piecewise polynomial}.
We say that a function is \emph{axis-orthogonal piecewise polynomial}, if it is piecewise polynomial and for every $i$, $A_i = 0$ or $B_i = 0$.
\end{definition}

Observe that $\ham(x,y) = \indicator{x > y} + \indicator{x < y}$, $\max(x,y) = x \cdot \indicator{x \ge y} + y \cdot \indicator{x < y}$, $|x-y|^{2p+1} = (x-y)^{2p+1} \cdot \indicator{x > y} + (y-x)^{2p+1} \cdot \indicator{x < y}$, and $\thr_\delta(x,y) \defeq \indicator{ \lvert x - y \rvert \geq \delta} = \indicator{x \leq y-\delta} + \indicator{x \geq y+\delta}$.

\begin{theorem}
\label{th:intro2}
Let $\diamond$ be a piecewise polynomial of constant degree and $\poly \log {n}$ number of summands.
\begin{itemize}
\item If $\diamond$ is axis orthogonal, then $\diamond$ is ``easy'': $(+,\diamond)$ convolution takes $\widetilde{O}(n)$ time, $(+,\diamond)$ matrix multiplication takes $\widetilde{O}(n^\omega)$ time.
\item Otherwise, $\diamond$ is \emph{Hamming distance complete}: under one-to-polylog reductions, on inputs bounded in absolute value by $\poly(n)$, $(+,\diamond)$ product is equivalent to Hamming distance, $(+,\diamond)$ convolution is equivalent to \PMHammingShort and $(+,\diamond)$ matrix multiplication is equivalent to \AllPairsHammingShort.
\end{itemize}
\end{theorem}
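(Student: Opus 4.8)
The plan is to split the statement into the ``easy'' direction and the ``hard'' (completeness) direction, and in each case to work at the level of the binary function $\diamond$, invoking the linearity-preserving reduction mechanism of Definition~\ref{def:reduction} so that a single identity in $x,y$ automatically transfers to vector products, convolutions and matrix products alike. Throughout I will use that $\diamond$ is a sum of $\poly\log n$ halfplane polynomials $P_t(x,y)\cdot\indicator{A_t x+B_t y+C_t>0}$ of constant degree, so it suffices to handle a single halfplane polynomial and then sum the $\poly\log n$ contributions, absorbing the extra factor into the $\widetilde O(\cdot)$.

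\textbf{The easy (axis-orthogonal) case.} Here every term is $P_t(x,y)\cdot\indicator{A_t x+C_t>0}$ or $P_t(x,y)\cdot\indicator{B_t y+C_t>0}$. Expand $P_t(x,y)=\sum_{a,b} c_{ab}x^a y^b$ (constantly many monomials). A term of the first type contributes $\sum_{a,b} c_{ab}\,\bigl(x^a\indicator{A_t x+C_t>0}\bigr)\cdot y^b$, which is already a sum of \emph{separable} products $f(x)\cdot g(y)$ with $f(x)=c_{ab}x^a\indicator{A_t x+C_t>0}$ and $g(y)=y^b$; the indicator depends only on $x$ and (since inputs are integers bounded by $\poly(n)$) $f$ is a well-defined integer-valued function computable in $\widetilde O(1)$ per entry. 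Symmetrically for terms of the second type. So $\diamond$ reduces, preserving linearity, to $O(\poly\log n)$ instances of the ordinary product $\square=\times$. Now apply the remark following Definition~\ref{def:reduction}: $\conv(\diamond,\cdot,\cdot)$ becomes $\poly\log n$ many $(+,\times)$-convolutions, each computable in $O(n\log n)$ time by FFT, giving $\widetilde O(n)$; and $\mprod(\diamond,\cdot,\cdot)$ becomes $\poly\log n$ many ordinary matrix products, giving $\widetilde O(n^\omega)$. (One should note the coefficients and intermediate values stay $\poly(n)$ in bit-length up to constant-degree blowup, so arithmetic is unit-cost up to logs.)

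\textbf{The hard (Hamming-complete) case.} This is really two reductions. For ``$\diamond$ reduces to Hamming'': since $\diamond$ is a fixed constant-degree piecewise polynomial with integer coefficients and inputs in a $\poly(n)$ range, I expect to express $\diamond$ by a linearity-preserving reduction to a $\poly\log n$-size combination of Hamming-type indicators --- concretely, a non-axis-orthogonal halfplane polynomial $P(x,y)\indicator{Ax+By+C>0}$ with $A,B\neq 0$ can, after the substitution $y\mapsto -y$ (absorbed into $g$) be handled by the already-known chain in Figure~\ref{fig:diag}: dominance reduces to Hamming \eqref{th:Dominance_to_Hamming}, threshold is equivalent to dominance \eqref{lem:Dom_eq_Thr}, and the polynomial prefactor $P(x,y)$ is dealt with by writing $x^a y^b\indicator{\cdot}$ as a $\poly\log n$ linear combination of shifted indicators via finite-difference / prefix-sum identities over the bounded integer range (this is where the ``excluding degenerate forms'' hypothesis is used to guarantee $A,B\neq0$ so that the halfplane genuinely mixes $x$ and $y$ and a Hamming/dominance-style comparison appears). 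For the converse, $\ham(x,y)=\indicator{x>y}+\indicator{x<y}$ exhibits Hamming itself as a non-axis-orthogonal piecewise polynomial, and more to the point one must show Hamming reduces back to $(+,\diamond)$; the idea is that from the non-axis-orthogonal structure of $\diamond$ one can, by taking a suitable bounded linear combination of shifts/evaluations of $\diamond$ (again a $\poly\log n$-size finite-difference argument over the integer grid), isolate a single clean comparison indicator $\indicator{x>y}$ or $\indicator{x<y}$, i.e.\ recover dominance, and then invoke dominance-to-Hamming equivalence and the reduction from Hamming to dominance implicit in \eqref{th:completeness2}/\cite{vassilevska2008efficient}. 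Composing the two directions, and invoking the remark after Definition~\ref{def:reduction} at each step, yields equivalence of $(+,\diamond)$ product with Hamming distance, of $(+,\diamond)$ convolution with \PMHammingShort, and of $(+,\diamond)$ matrix multiplication with \AllPairsHammingShort, up to $\poly\log n$ factors.

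\textbf{Main obstacle.} The routine parts are the FFT/matrix-product bookkeeping and the bit-length bounds. The real work --- and what I would spend the most care on --- is the second bullet's forward direction: showing that an \emph{arbitrary} constant-degree non-axis-orthogonal halfplane polynomial can be written as a $\poly\log n$-size linearity-preserving combination of Hamming (equivalently dominance/threshold) instances. This requires (i) reducing the polynomial prefactor $P(x,y)$ to comparisons, which one handles by expressing each monomial-times-indicator as a telescoping sum of $O(\log(\text{range}))$ scaled threshold indicators (a dyadic decomposition of the $\poly(n)$-bounded value range), and (ii) reducing a general halfplane $Ax+By+C>0$ with $|A|,|B|\le\poly(n)$ to $\indicator{x>y}$-type comparisons, which needs a rescaling/substitution argument to line up the coefficients, again at a $\poly\log n$ multiplicative cost. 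I expect the cleanest route is to prove these as the separately-labelled lemmas \eqref{th:completeness1}, \eqref{th:completeness2}, \eqref{th:Dominance_to_Hamming}, \eqref{lem:Dom_eq_Thr}, \eqref{th:L1_to_Dominance} referenced in Figure~\ref{fig:diag} and then assemble Theorem~\ref{th:intro2} as a corollary of that collection together with the remark following Definition~\ref{def:reduction}.
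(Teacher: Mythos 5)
Your overall architecture is the same as the paper's: handle the axis-orthogonal case by separability plus FFT/matrix multiplication (this matches Lemma~\ref{th:polynomials_are_easy}), and obtain the second bullet from two reductions (piecewise polynomial $\to$ Hamming, and Hamming $\to$ non-axis-orthogonal $\diamond$), assembled through the linearity-preserving reduction mechanism; this is exactly how the paper derives Theorem~\ref{th:intro2} from Theorems~\ref{th:completeness1} and~\ref{th:completeness2}. In the forward direction your treatment of the prefactor differs from the paper: you propose a dyadic decomposition of the monomial into $\poly\log n$ weighted comparisons, whereas the paper runs a scaling recursion (Lemma~\ref{th:monomials}) that trades degree against value range and offloads the equality terms to weighted Hamming (Lemma~\ref{lem:weighted_hamming}). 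Your route can be made to work, but as written it is missing the step that makes it legitimate: a term like $2^{i+j}\,\indicator{\mathrm{bit}_i(x^a)=1}\cdot\indicator{\mathrm{bit}_j(y^b)=1}\cdot\indicator{x<y}$ is not yet a dominance instance; you must fold the bit-conditions into the comparison via filtering (the $\star$/``don't care'' device, or sentinel values pushed outside the range), i.e.\ you need the analogue of Lemma~\ref{lem:weighted_hamming} for dominance, and you need the dominance-to-Hamming step to tolerate such filtered inputs. Also note the count is $\bigo(\log^2 n)$ per monomial (two dyadic decompositions), not $\bigo(\log(\text{range}))$.

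The genuine gap is in the converse direction (Hamming reduces to $\diamond$). You assert that ``a suitable bounded linear combination of shifts/evaluations of $\diamond$'' plus finite differences isolates a clean indicator $\indicator{x<y}$, but this only works when $\diamond$ has a \emph{single} boundary line. A general non-axis-orthogonal piecewise polynomial has up to $\poly\log n$ halfplane boundaries with different slopes, and on any translated window several of them cut through the region, so finite differencing of $\diamond$ mixes contributions from all of them and does not produce a function that is constant on $\{x<y\}$ and on $\{x>y\}$. The paper's proof needs a dedicated geometric step, Lemma~\ref{lem:lines_lemma}: choose one non-axis-orthogonal line $\lambda_i$ and an \emph{affine rescaling} $x\mapsto\alpha x+\gamma$, $y\mapsto\beta y+\delta$ with $\alpha,\beta$ proportional to $B_i,-A_i$ and large enough that (a) every line not parallel to $\lambda_i$ misses the whole grid, and (b) every line parallel to $\lambda_i$ separates the grid points with $x<y$ from those with $x>y$. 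Only after this does $(\alpha x+\gamma)\diamond(\beta y+\delta)$ collapse to $Q_>\indicator{x>y}+Q_=\indicator{x=y}+Q_<\indicator{x<y}$, at which point discrete differentiation applies; and even then one needs $Q_<\not\equiv Q_>$ (secured in the paper by taking a minimal representation of $\diamond$) and a final rescaling $\dom(x,y)=G(3dx,3dy+d)$ to step over the fuzzy band of width $d$ around the diagonal. Pure shifts cannot achieve (b) (parallel lines stay at fixed distance), and your sketch addresses neither the multi-line issue nor the nondegeneracy of $Q_<-Q_>$; without these the claim ``recover dominance'' is unsupported for general $\diamond$.
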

Theorem~\ref{th:intro2} follows from two technical results presented in Section~\ref{sec:hdc}, Theorem~\ref{th:completeness1} and Theorem~\ref{th:completeness2}.

\begin{corollary}
\label{cor:intro1}
The following problems are equivalent under one-to-polylog reductions: \PMHammingShort, \PMLessThanShort, \PMLoddShort for a constant integer $p\ge0$, \PMThresholdShort and $(+,\max)$-\textsc{Convolution}.
\end{corollary}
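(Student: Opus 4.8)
The plan is to read the corollary off the dichotomy of Theorem~\ref{th:intro2} together with transitivity of one-to-polylog reductions. It suffices to check, for each of the five score functions $\ham$, $\indicator{x\le y}$, $|x-y|^{2p+1}$, $\thr_\delta$, and $\max$, that it is a piecewise polynomial (Definition~\ref{def:piecewise}) of constant degree with $\bigo(1)$ summands and that it is \emph{not} axis-orthogonal. Theorem~\ref{th:intro2} then yields that each of the corresponding $(+,\diamond)$-convolution problems is equivalent to \PMHammingShort, and since ``equivalent'' here means reductions in both directions, composing reductions shows that the problems in the statement are pairwise equivalent. Recall from the Preliminaries that $\conv(\ham,\mathbf{X},\mathbf{Y}^R)$ is exactly \PMHammingShort and, more generally, that $(+,\diamond)$-convolution is the $\diamond$-pattern matching problem; so the convolution clause of Theorem~\ref{th:intro2} is literally the statement we want.

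The first step is to exhibit the piecewise-polynomial forms. These are essentially the identities recorded just before Theorem~\ref{th:intro2}, after turning the non-strict inequalities into strict ones using that all inputs are integers: $\ham(x,y)=\indicator{x-y>0}+\indicator{y-x>0}$; $\max(x,y)=x\cdot\indicator{x-y+1>0}+y\cdot\indicator{y-x>0}$; $|x-y|^{2p+1}=(x-y)^{2p+1}\cdot\indicator{x-y>0}+(y-x)^{2p+1}\cdot\indicator{y-x>0}$; $\indicator{x\le y}=\indicator{y-x+1>0}$; and $\thr_\delta(x,y)=\indicator{y-x-\delta+1>0}+\indicator{x-y-\delta+1>0}$. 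In each case there are at most two summands, and the polynomial factors have degree $0$ (Hamming, dominance, threshold), $1$ ($\max$), or $2p+1$ ($\ell_{2p+1}$ for constant $p$), so the hypotheses ``constant degree'' and ``$\poly\log n$ summands'' of Theorem~\ref{th:intro2} hold with room to spare.

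The key point is that every halfplane constraint appearing above has the form $\pm(x-y)+C>0$, so its $x$-coefficient and its $y$-coefficient both equal $\pm1$ and in particular are nonzero; hence none of these functions is axis-orthogonal. Applying the ``otherwise'' branch of Theorem~\ref{th:intro2} gives that $\ham$, $\indicator{x\le y}$, $|x-y|^{2p+1}$, $\thr_\delta$, and $\max$ are all Hamming-distance complete, so \PMHammingShort, \PMLessThanShort, \PMLoddShort (for each fixed $p\ge0$), \PMThresholdShort, and $(+,\max)$-\textsc{Convolution} are each equivalent to \PMHammingShort and therefore pairwise equivalent. For $\ell_{2p+1}$ pattern matching the only additional ingredient is the coordinatewise $(\cdot)^{1/(2p+1)}$ applied to the output vector, which is exactly the harmless post-processing permitted by the footnote to Definition~\ref{def:reduction} and costs only $\widetilde{O}(n)$.

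There is no genuine obstacle here: all the work is carried by Theorem~\ref{th:intro2} (equivalently Theorems~\ref{th:completeness1} and~\ref{th:completeness2}), and the corollary amounts to verifying hypotheses. The only two spots that deserve a line of care are the integrality trick used to put $\le$ and $\ge$ into the strict $>0$ normal form of Definition~\ref{def:piecewise}, and---if one wants the statement phrased purely in pattern-matching terms---the identification of $(+,\diamond)$-convolution with $\diamond$-pattern matching via string reversal noted in the Preliminaries.
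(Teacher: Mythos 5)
Your proposal is correct and matches the paper's (implicit) approach: the paper states Corollary~\ref{cor:intro1} with no separate proof, relying on the dichotomy of Theorem~\ref{th:intro2} together with the piecewise-polynomial decompositions of $\ham$, $\max$, $|x-y|^{2p+1}$, and $\thr_\delta$ displayed just before it, plus $\indicator{x\le y}$ for dominance. Your two points of care---rewriting $\le$ and $\ge$ as strict $>0$ constraints using integrality, and identifying $(+,\diamond)$-convolution with the corresponding pattern-matching problem via reversal---are exactly what is needed to make the corollary a mechanical consequence.
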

\begin{corollary}
\label{cor:intro2}
The following problems are equivalent under one-to-polylog reductions: \AllPairsHammingShort, \AllPairsDominanceShort, \AllPairsLoddDistancesShort for a constant integer $p\ge0$, \AllPairsThresholdShort and $(+,\max)$-\textsc{MatrixProduct}.
\end{corollary}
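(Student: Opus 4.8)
The plan is to route every problem through \AllPairsHammingShort via Theorem~\ref{th:intro2}. For each of the four remaining score functions --- $\diamond_{\mathrm{Dom}}(x,y)=\indicator{x\le y}$, $\diamond_{\mathrm{Thr}}=\thr_\delta$, $\diamond_{\ell}=|x-y|^{2p+1}$, $\diamond_{\max}=\max(x,y)$ --- I would verify the two hypotheses of Theorem~\ref{th:intro2}: that $\diamond$ is a piecewise polynomial of constant degree with a constant (hence $\poly\log n$) number of summands, and that $\diamond$ is \emph{not} axis-orthogonal. Granting both, Theorem~\ref{th:intro2} gives that $\mprod(\diamond,\cdot,\cdot)$ is equivalent to \AllPairsHammingShort under one-to-polylog reductions (the ``$\le$'' direction from Theorem~\ref{th:completeness1}, the ``$\ge$'' direction from Theorem~\ref{th:completeness2}). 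Since $\mprod(\ham,\cdot,\cdot)$ is \AllPairsHammingShort itself --- and the post-processing $(\cdot)^{1/(2p+1)}$ passing from $\mprod(|x-y|^{2p+1},\cdot,\cdot)$ to \AllPairsLoddDistancesShort is a harmless $\widetilde{O}(n^2)$ step (cf.\ the footnote to Definition~\ref{def:reduction}) --- all five products named in the statement become individually equivalent to \AllPairsHammingShort, and since the composition of two one-to-polylog reductions is one-to-polylog, transitivity makes all five mutually equivalent, which is the corollary. (As $2p+1=1$ at $p=0$, the $\ell_1$ case is included, consistent with the $\ell_1$--$(+,\min)$--$(+,\max)$ equivalences of Figure~\ref{fig:diag}.)

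For the first hypothesis I would use the explicit halfplane decompositions recorded just before Theorem~\ref{th:intro2}, rewriting each integer (non-)strict inequality into the canonical form $Ax+By+C>0$: over the integers, $\indicator{x\le y}=\indicator{-x+y+1>0}$ is a single halfplane polynomial with constant polynomial $1$; $\thr_\delta(x,y)=\indicator{x-y-\delta+1>0}+\indicator{y-x-\delta+1>0}$, the two events being disjoint for $\delta\ge 1$; $|x-y|^{2p+1}=(x-y)^{2p+1}\indicator{x-y>0}+(y-x)^{2p+1}\indicator{y-x>0}$; and $\max(x,y)=x\cdot\indicator{x-y+1>0}+y\cdot\indicator{y-x>0}$. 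Each is a sum of at most two halfplane polynomials of degree at most $2p+1=O(1)$, with polynomial coefficients $\pm\binom{2p+1}{k}=O(1)$; and since all inputs and the constant $\delta$ are $\poly(n)$-bounded integers by the standing assumption, every constant term $C$ above is $\poly(n)$-bounded. This settles the first hypothesis.

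The second hypothesis --- \emph{genuine} non-axis-orthogonality --- is the only real point, because the dichotomy in Theorem~\ref{th:intro2} is a property of the function and not of a chosen representation: exhibiting a non-axis-orthogonal form does not suffice, one must rule out \emph{every} axis-orthogonal representation. I would argue uniformly. Suppose $f=\sum_{i=1}^{M}P_i(x,y)\,\indicator{L_i>0}$ with each affine $L_i$ depending on only one of $x,y$; then every line $L_i=0$ is axis-parallel, so these $M$ lines cut the plane into finitely many axis-parallel cells, on the interior of each of which all indicators are constant and hence $f$ agrees with one fixed polynomial. In particular, taking $a,a'$ larger than all thresholds, $f$ equals a single polynomial $P$ on the ``far northeast'' cell $\{x>a\}\cap\{y>a'\}$. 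But on that cell each of our four functions is forced onto two \emph{distinct} polynomials along two full-dimensional subregions: for $\diamond_{\ell}$, it is $(x-y)^{2p+1}$ on $\{x>y\}$ and $(y-x)^{2p+1}$ on $\{x<y\}$; for $\diamond_{\max}$, $x$ on $\{x>y\}$ and $y$ on $\{x<y\}$; for $\diamond_{\mathrm{Dom}}$, $0$ on $\{x>y\}$ and $1$ on $\{x<y\}$; and for $\diamond_{\mathrm{Thr}}$ (with $\delta\ge 1$, enlarging $a,a'$ if needed), $1$ on $\{x-y\ge\delta\}$ and $0$ on $\{0\le x-y<\delta\}$. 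A single bivariate polynomial cannot agree with two distinct polynomials on two full-dimensional subsets of $\mathbb{R}^2$, so no such $P$ exists and $f$ is not axis-orthogonal. The rest --- tracking constant degree and $O(1)$ summands, and composing the resulting one-to-polylog reductions through \AllPairsHammingShort as the hub --- is routine bookkeeping, and the identical argument with \conv in place of \mprod and \PMHammingShort in place of \AllPairsHammingShort proves Corollary~\ref{cor:intro1}.
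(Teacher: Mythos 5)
Your proposal is correct and follows the paper's intended route: the corollary is meant to follow from Theorem~\ref{th:intro2} together with the piecewise-polynomial decompositions of $\dom$, $\thr_\delta$, $|x-y|^{2p+1}$ and $\max$ that the paper records immediately before the theorem. Where you go beyond the paper is in the second hypothesis check: you correctly observe that ``axis-orthogonal'' in Definition~\ref{def:piecewise} must be read as a property of the \emph{function} (admitting \emph{some} axis-orthogonal representation), not of a particular decomposition, and that simply exhibiting a non-axis-orthogonal representation does not by itself place a function on the hard side of the dichotomy. Your ``far northeast cell'' argument --- on the unbounded quadrant beyond all axis-parallel thresholds an axis-orthogonal piecewise polynomial must coincide with a single polynomial, whereas each of the four score functions takes two distinct polynomial values on two full-dimensional subregions of that quadrant --- is exactly the missing verification and it is sound (and the integer-domain caveat is harmless, since agreement on a full-dimensional integer grid forces polynomial identity). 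The paper leaves this step implicit, so your proof is the same route with that gap rigorously filled; the remaining bookkeeping (canonical $Ax+By+C>0$ forms, disjointness of the two $\thr_\delta$ halfplanes for $\delta\ge1$, transitivity through \AllPairsHammingShort as a hub, and the note that the identical argument with $\conv$ gives Corollary~\ref{cor:intro1}) all match the paper's intent.
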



Taking advantage of the completeness of \AllPairsHammingShort among all considered \textsc{AllPairs}- problems, we link the complexity of all those problems to the problem of multiplying of rectangular sparse matrices.
\begin{theorem}
\label{th:sparsity_hamming}
The time complexity of \AllPairsHammingShort on $n$ vectors of dimension $d$ is (under randomized Las Vegas reductions)
within $\poly \log n$ from $\sparse(n,\min(d^2,nd),n;nd;nd)$.
\end{theorem}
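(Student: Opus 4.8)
\emph{Proof plan.}
The claim is an equivalence, so two reductions are needed: one that turns an instance of \AllPairsHamming on $n$ vectors of dimension $d$ into the multiplication of two sparse $0/1$ matrices of shapes $n\times \min(d^2,nd)$ and $\min(d^2,nd)\times n$ with $\bigo(nd)$ non-zeros each, and one going back; composing them, the two running times agree up to $\poly\log n$ factors. Randomization, and hence the Las Vegas guarantee, is needed only for the second, harder, direction.

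For the first direction, write $O[i][j]=d-|\{k:A_i[k]=B_j[k]\}|$, so it suffices to count coordinate-wise agreements, and observe that this count is an ordinary matrix product of two $0/1$ matrices: index the middle dimension by pairs $(k,v)$ with $k$ a coordinate and $v$ a value occurring there, and set $A'[i,(k,v)]=\indicator{A_i[k]=v}$ and $B'[(k,v),j]=\indicator{B_j[k]=v}$, so that $(A'B')[i][j]$ is exactly the number of agreements. Each of $A',B'$ has exactly $nd$ non-zeros, one per input entry, and its number of columns is the number of occurring pairs $(k,v)$; relabelling the at most $2n$ values occurring in each coordinate to an initial segment of the integers makes this $\bigo(nd)$, which already settles $d\ge n$, and for $d<n$ one additionally hashes, per coordinate and with fresh randomness, the values occurring there into $\Theta(d)$ buckets, obtaining $\bigo(d^2)$ columns at the price of spurious agreements produced by hash collisions, which are removed on a sparser residual instance of the same kind exactly as in the converse below.

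For the converse, let $\mathcal A$ ($n\times m$) and $\mathcal B$ ($m\times n$) with $m=\min(d^2,nd)$ be $0/1$ matrices with at most $nd$ non-zeros each, and let $S_i,T_j\subseteq[m]$ be the supports of the $i$-th row of $\mathcal A$ and the $j$-th column of $\mathcal B$, so that $(\mathcal A\mathcal B)[i][j]=|S_i\cap T_j|$; by the identity $|S\triangle T|=|S|+|T|-2|S\cap T|$ (and since all $|S_i|,|T_j|$ are known) this is exactly an \AllPairsHamming instance on the indicator vectors of the $S_i$ and $T_j$, except in dimension $m$ instead of $d$, so the whole task is to compress the dimension from $m$ down to $d$. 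First balance the instance: a vector with more than $d$ ones is cut into $\bigo(1+(\text{number of ones})/d)$ vectors with disjoint supports; since the total number of ones is at most $nd$ this still leaves $\bigo(n)$ vectors, each with $\bigo(d)$ ones, and all pairwise intersections are recovered by summing over the pieces. Then hash $[m]$ randomly into $\Theta(d\,\poly\log n)$ cells and replace each balanced vector by the length-$\Theta(d\,\poly\log n)$ vector that records, for each cell, the unique support element landing there (a per-vector, per-cell poison symbol if at least two land there, and $0$ if none do); for every pair, the Hamming distance of the new vectors differs from the desired $|S_i\cap T_j|$, up to the known size terms, only by a correction counting cells in which the two supports meet at distinct elements, and one then argues that collecting these corrections over all pairs is itself an instance of the same problem whose total size is geometrically smaller, so that iterating $\bigo(\log n)$ times leaves only a trivial residual instance while every sub-instance is an \AllPairsHamming on $\bigo(n)$ vectors of dimension $\bigo(d\,\poly\log n)$ — hence costs at most $\poly\log n$ times \AllPairsHamming on $n$ vectors of dimension $d$ — and rounds where the size fails to drop are simply re-run, making the reduction Las Vegas.

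The real work, and the step I expect to be the main obstacle, is this dimension compression in the converse: one must show that the hash-collision correction is genuinely an instance of the same problem (equivalently, a sparse $0/1$ matrix product) whose total sparsity shrinks by a $\poly\log n$ factor each round — this is where the balancing step and the choice of $\Theta(d\,\poly\log n)$ cells have to be used carefully — so that the recursion stops after $\bigo(\log n)$ rounds and the blow-ups in both the dimension and the number of vectors stay within $\poly\log n$. Once that bookkeeping is in place the Las Vegas (rather than Monte Carlo) guarantee is automatic, since each random choice can be verified cheaply and re-drawn on failure.
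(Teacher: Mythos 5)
Your overall architecture (value-indicator expansion for one direction, bucketing the $N$ middle indices into roughly $N/n$ cells for the other) matches the paper's, but the mechanism you use to handle collisions is where the argument actually lives, and there it has a genuine gap. In your converse reduction you replace a cell containing two or more support elements by a poison symbol and promise to recover the lost intersections via a ``correction'' that is ``an instance of the same problem whose total size is geometrically smaller''; this is exactly the step you flag as the main obstacle, and it is not established. Note that the correction is pair-dependent: for a pair $(i,j)$ you must count the elements of $S_i\cap T_j$ lying in cells that are bad for $i$ \emph{or} bad for $j$, which forces an inclusion--exclusion into several sub-instances in which only one side's sparsity has shrunk, plus a separate correction for cells that are empty on both sides (your symbol $0$ creates matches there, and the number of doubly-empty cells is again a pair-dependent quantity, not recoverable from $|S_i|$ and $|T_j|$ alone). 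Controlling the growth in the number of vectors and sub-instances across $\Theta(\log n/\log\log n)$ rounds is precisely the bookkeeping you did not do. The paper avoids all of this with one idea you are missing: instead of poisoning a colliding cell, it \emph{splits} the row (column) into $c_i=\max_k r_{i,k}$ copies so that every cell of every copy holds at most one element, and it stores in the cell the \emph{identity} $j$ of the element (filling unused cells with globally unique values). A match between a copy of row $i$ and a copy of column $j'$ then occurs iff the same original index sits in the same cell, so summing $d-\ham$ over the $c_i\cdot d_{j'}$ pairs of copies gives $|S_i\cap T_{j'}|$ exactly --- no spurious matches, no correction, no recursion; randomness (and the Las Vegas character) enters only through the Chernoff bound that $\sum_i c_i=\widetilde\bigo(n)$.

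The forward direction has the same issue in the regime $d<n$: your per-coordinate hashing into $\Theta(d)$ buckets creates spurious agreements whose removal you again delegate to the unproven recursive machinery. The paper instead handles this case deterministically: after the value-indicator expansion to $N=\bigo(nd)$ columns it invokes the Yuster--Zwick observation (Lemma~\ref{yusterzwicklemma}), reorders columns of $A$ and rows of $B$ by non-increasing $|A_{*i}|\cdot|B_{i*}|$, keeps only the $d^2$ heaviest, and computes the contribution of the truncated part by brute force in $\bigo(|A|\cdot|B|/d^2)=\bigo(n^2)$ time, which is within the trivial output size. So while your first direction is fine for $d\ge n$, to make the theorem as stated you need either the deterministic truncation argument or a fully worked-out version of your correction recursion; as written, both the $\min(d^2,nd)$ bound and the converse rest on a step that is asserted rather than proved.
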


\section{Warm-up}

We start by showing a reduction from $L_1$ distance to $\bigo(\log^2 n)$ instances of Hamming distance. This is a reduction  that is fully contained in our general reduction, that is Theorem~\ref{th:intro2}. However, since it uses similar techniques, and already has a nontrivial consequence (e.g.~collapsing hardness of \PMLOneShort and \PMHammingShort), we present it separately.
\paragraph*{Scaling:} Observe that for many ``natural'' functions $\diamond$ and integers $x,y$, $x \diamond y$ is approximated by $\lfloor x/2 \rfloor \diamond  \lfloor y/2 \rfloor$ (up to some fixed multiplicative factor). This allows us to unwind $x \diamond y$ into a weighted sum of $\bigo(\log (\max(|x|,|y|))$ corrective terms. For example, if for some constant $C$, integers $x,y \ge 0$ and some corrective function $\xi$:\quad
$x \diamond y = C \cdot (\lfloor x/2 \rfloor \diamond  \lfloor y/2 \rfloor) + \xi(x,y)$
then naturally
$
x \diamond y = 0 \diamond 0 + \sum_{i \ge 0} C^i \cdot \xi(\lfloor x/2^i \rfloor, \lfloor y/2^i \rfloor).
$
\paragraph*{Sparsity:}
We consider a generalized version of the input with special ``ignore'' marks $\star$ as possible elements. Those elements of the input never contribute to the final score of the $(+,\diamond)$ product. Formally, we operate on $\mathbb{Z}+\{\star\}$, with special arithmetic rules (unless stated otherwise):
\begin{itemize}
\item for any \emph{single argument} function: $f(\star) = \star$,
\item for any \emph{double argument} function: $g(\star,\star) = g(\star,y) = g(x,\star) = 0$.\newfootnote{We have to keep in mind that whether a function is a single or double argument is context dependent: e.g. writing: $\indicator{x\not=y}\  =\  1 - \indicator{x=y}$, we have to treat $1$ as a function of $x$ and $y$ as well.}
\end{itemize}
The goal of this formalism is twofold. The first one is to handle sparse inputs formally (i.e.~vectors with $\bigo(n^{1-\varepsilon})$ relevant entries). The second one is that such ``ignore'' marks coupled with filtering (defined below) allows us to split the input based on properties of its values.
We note that these ``ignore'' marks do not increase the computational complexity of Hamming distance (see Lemma~\ref{lem:nonsparse_hamming} in the Appendix).
\paragraph*{Filtering:}
We define the following functions: 
$$\feven(x) \defeq \begin{cases}x \quad\text{ if }x\text{ is even}\\ \star\quad\text{ otherwise } \end{cases} \quad\quad \fodd(x) \defeq \begin{cases}x \quad\text{ if }x\text{ is odd}\\ \star\quad\text{ otherwise } \end{cases}$$

Those functions, when applied to a vector or a matrix, allows us to filter values according to parity, e.g. for $\mathbf{A} = [1,2,3,4]$ one gets $\feven(\mathbf{A}) = [\star,2,\star,4]$.

We now give two reductions that illustrate the usefulness of these techniques.
Both reductions are illustrated in the Appendix~\ref{sec:examples} (Figures~\ref{fig:example-l1-to-less-than} and~\ref{fig:example-less-than-to-ham}).
\begin{theorem}\label{th:L1_to_Dominance}
The $L_1$ distance reduces to $\bigo(\log n)$ instances of dominance.
\end{theorem}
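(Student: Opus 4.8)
The plan is to produce an explicit linearity‑preserving reduction in the sense of Definition~\ref{def:reduction}: I will exhibit $O(\log n)$ single‑argument functions $f_k, g_k$ (allowed to output the mark $\star$) and coefficients $\alpha_k \in \{\pm 2^j\}$ with
\[
|x-y| \;=\; \sum_k \alpha_k \cdot \indicator{f_k(x) \le g_k(y)}
\]
for every pair of integers $x,y$ that can occur in the input, where $\indicator{\,\cdot \le \cdot\,}$ is exactly the dominance score. Once this identity holds, the reduction of the $(+,|x-y|)$ vector product, convolution and matrix product to the corresponding dominance problems is automatic by the remark following Definition~\ref{def:reduction}. Since all input values are integers bounded by $\poly(n)$ in absolute value and $|x-y|$ is translation invariant, I first compose every $f_k,g_k$ with one fixed shift so that we may assume $x,y \in \{0,1,\dots,N-1\}$ with $N = \poly(n)$; put $L = \lceil \log_2 N\rceil = O(\log n)$.

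The core is a scaling step. Writing $x = 2\lfloor x/2\rfloor + (x \bmod 2)$ and likewise for $y$, the parity difference $(x\bmod 2)-(y\bmod 2)\in\{-1,0,1\}$ can only flip the sign of $x-y$ when $\lfloor x/2\rfloor=\lfloor y/2\rfloor$, so a short case analysis on the sign of $\lfloor x/2\rfloor-\lfloor y/2\rfloor$ gives
\[
|x-y| \;=\; 2\cdot\bigl|\lfloor x/2\rfloor - \lfloor y/2\rfloor\bigr| \;+\; \xi(x,y),
\]
where the correction $\xi$ is nonzero only when $x$ and $y$ differ in their least significant bit, and then equals $2T-1$ for an order test $T$ between $\lfloor x/2\rfloor$ and $\lfloor y/2\rfloor$ whose direction ($\ge$ versus $\le$) depends on which of $x,y$ carries the $1$‑bit. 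Unrolling this $L$ times — the remainder $2^L\,|\lfloor x/2^L\rfloor-\lfloor y/2^L\rfloor|$ vanishes because $x,y<N$ — gives $|x-y| = \sum_{i=0}^{L-1} 2^i\,\xi_i(x,y)$, where $\xi_i$ is the level‑$i$ correction: a function of bit $i$ of $x$, bit $i$ of $y$, and the order between the high parts $\lfloor x/2^{i+1}\rfloor$ and $\lfloor y/2^{i+1}\rfloor$.

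It remains to realize each $\xi_i$ as a bounded $\pm$‑combination of dominance tests, and this is the step that actually needs the filtering machinery. For each level $i$ define $f_i^{\,1}(x)=\lfloor x/2^i\rfloor$ if bit $i$ of $x$ equals $1$ and $f_i^{\,1}(x)=\star$ otherwise, and symmetrically let $f_i^{\,0}(x)$ keep $\lfloor x/2^i\rfloor$ exactly when bit $i$ of $x$ is $0$. Because a double‑argument score with a $\star$ argument is $0$, the test $\indicator{f_i^{\,1}(x)\le f_i^{\,0}(y)}$ is automatically $0$ unless bit $i$ of $x$ is $1$ and bit $i$ of $y$ is $0$; and in that surviving case $\lfloor x/2^i\rfloor$ is odd while $\lfloor y/2^i\rfloor$ is even, so comparing $\lfloor x/2^i\rfloor$ with $\lfloor y/2^i\rfloor$ is exactly comparing $\lfloor x/2^{i+1}\rfloor$ with $\lfloor y/2^{i+1}\rfloor$ (the parity‑forced off‑by‑one is precisely what converts a strict inequality into a non‑strict one and conversely). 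Hence each of the four ``bit‑pattern $\wedge$ order'' indicators occurring in $\xi_i$ equals one dominance test $\indicator{f_k(x)\le g_k(y)}$ for suitable filtered $f_k,g_k$ — strict inequalities obtained by shifting $g_k$ by $1$ (and $\star-1=\star$), and ``$\ge$'' tests by negating both arguments ($-\star=\star$); one can equally well replace $\star$ throughout by a large sentinel value $\pm M$ with $M=\poly(n)$. Substituting, $\xi_i$ becomes a $\pm1$‑combination of four dominance scores.

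Putting the pieces together, $|x-y| = \sum_{i=0}^{L-1}\sum_{t=1}^{4}(\pm 2^i)\,\indicator{f_{i,t}(x)\le g_{i,t}(y)}$, which is a linearity‑preserving reduction to $4L=O(\log n)$ instances of dominance with all coefficients bounded by $N=\poly(n)$, as required by the standing remark; the lift to \conv\ and \mprod\ then follows from the discussion after Definition~\ref{def:reduction}. The one delicate point is the third step: $\xi_i$ is naturally a \emph{product} of a bit‑pattern condition and an order condition, and the trick that turns it into a fixed‑size combination of pure dominance tests is to push the bit condition into the compared values via $\star$‑filtering, after which — conditioned on the surviving bit pattern — the comparison of $\lfloor x/2^i\rfloor$ with $\lfloor y/2^i\rfloor$ already encodes the comparison of the high parts. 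The sign case‑analysis behind the recursion and the $\le$/$<$/$\ge$ off‑by‑one bookkeeping are routine.
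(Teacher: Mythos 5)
Your proof is correct and is essentially the same as the paper's: you use the same scaling identity $|x-y| = 2\,|\lfloor x/2\rfloor - \lfloor y/2\rfloor| + \xi(x,y)$, unroll it over $O(\log n)$ levels, and realize each correction term as a $\pm 1$-combination of four $\star$-filtered dominance tests, which is exactly the paper's decomposition of $\eta$ via $\fodd,\feven$ applied to $\pm x, \pm y$. The only cosmetic differences are that you phrase the filter as acting on $\lfloor x/2^i\rfloor$ by its parity (equivalent to applying $\fodd/\feven$ after the shift-down) and you additionally note that $\star$ can be eliminated by sentinel values, a point the paper leaves implicit here.
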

\begin{proof}

Since $L_1$ distance is shift-invariant, i.e. $|(x+\Delta) - (y+\Delta)| = |x-y|$ for any $\Delta$, we can assume that $0 \le x,y < M$ for some $M = \poly(n)$.
Observe that for $x,y \ge 0$, $
	|x - y| = 2 \cdot \Big|\floor{x/2} - \floor{y/2}\Big| + \eta(x,y),
	$ 	where, denoting $\dom(x,y) \defeq \indicator{x \le y}$,
\begin{align*}
\eta(x,y) &= \indicator{(x\text{ is odd}) \wedge(y\text{ is even}) \wedge (x\geq y)} - \indicator{(x\text{ is even}) \wedge(y\text{ is odd}) \wedge (x\geq y)}\\ 
&+\indicator{(y\text{ is odd}) \wedge(x\text{ is even}) \wedge (y\geq x)} - \indicator{(y\text{ is even}) \wedge(x\text{ is odd}) \wedge (y\geq x)}\\
&=\dom(\fodd(-x), \feven(-y)) - \dom(\feven(-x),\fodd(-y) )\\
&+\dom( \feven(x), \fodd(y)) - \dom( \fodd(x), \feven(y)).
\end{align*}
By unwinding, we get $|x - y| = \sum_{i=0}^{{\log M} } 2^i \cdot \eta( \floor{x/2^i}, \floor{y/2^i} )$ which completes the reduction.
\end{proof}

\begin{theorem}
	\label{th:Dominance_to_Hamming}
Dominance reduces to $\bigo(\log n)$ instances of Hamming distance and multiplication.
\end{theorem}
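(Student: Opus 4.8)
**Plan for proving Theorem~\ref{th:Dominance_to_Hamming}: Dominance reduces to $\bigo(\log n)$ instances of Hamming distance and multiplication.**

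The plan is to express $\dom(x,y) = \indicator{x \le y}$ in terms of Hamming-distance-type indicators via a bit-by-bit (scaling) argument, exactly mirroring the technique already used in Theorem~\ref{th:L1_to_Dominance}. Since the input values are integers bounded by $\poly(n)$, and $\indicator{x \le y}$ is shift-invariant under adding the same constant to both arguments, I would first assume $0 \le x,y < M$ for $M = \poly(n)$, so that $x$ and $y$ have $\log M = \bigo(\log n)$ bits. The key recursive identity to establish is one of the form
$$
\indicator{x \le y} = \indicator{\floor{x/2} \le \floor{y/2}} + \zeta(x,y),
$$
where the corrective term $\zeta$ depends only on the relation of $\floor{x/2}$ to $\floor{y/2}$ and on the lowest bits $x \bmod 2$, $y \bmod 2$. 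Concretely, $x \le y$ fails to follow from $\floor{x/2} \le \floor{y/2}$ precisely in the boundary case $\floor{x/2} = \floor{y/2}$ with $x$ odd and $y$ even; conversely $x \le y$ can hold even when $\floor{x/2} > \floor{y/2}$ is false only in a symmetric situation — so $\zeta(x,y)$ is a $\{-1,0,+1\}$-valued function supported on the ``diagonal block'' $\floor{x/2}=\floor{y/2}$, determined by parities. This is the analogue of the function $\eta$ in the previous proof.

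The next step is to realize each such corrective term using the filtering operators $\feven,\fodd$ together with $\dom$, Hamming, and ordinary multiplication. The condition $\floor{x/2} = \floor{y/2}$ is an equality test, i.e. $1 - \ham(\floor{x/2},\floor{y/2})$, and the parity conditions are handled by composing with $\feven$ and $\fodd$ as in Theorem~\ref{th:L1_to_Dominance} (recalling the $\star$-arithmetic: a term vanishes whenever a filtered argument becomes $\star$). So $\zeta(\floor{x/2^i},\floor{y/2^i})$ is a fixed linear combination of a constant number of terms, each either a Hamming indicator on filtered, bit-shifted copies of the inputs, a dominance indicator on such copies, or a product of an indicator with a $\{0,1\}$-valued filtered-equality indicator — and products of two $\{0,1\}$-valued vectors are computed by a single integer multiplication (Boolean matrix / convolution multiplication), which is why the statement mentions ``and multiplication''. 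Unwinding the recursion down to the base case $\indicator{0 \le 0} = 1$ gives
$$
\indicator{x \le y} = 1 + \sum_{i=0}^{\log M} \zeta\big(\floor{x/2^i},\floor{y/2^i}\big),
$$
and since each $\zeta$-term expands into $\bigo(1)$ instances of Hamming distance / multiplication, the whole reduction uses $\bigo(\log M) = \bigo(\log n)$ instances, and it preserves linearity in the sense of Definition~\ref{def:reduction}, so it lifts automatically to convolution and matrix-multiplication versions.

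The main obstacle I expect is pinning down the exact form of the corrective term $\zeta$ and confirming that every piece of it is expressible using only the allowed primitives — in particular handling the equality test $\floor{x/2^i}=\floor{y/2^i}$. The subtlety is that an equality indicator is $1 - \ham$, but a \emph{conditional} equality (equality \emph{and} a parity condition) seems to require multiplying an equality indicator by a dominance-or-parity indicator; one must check that this multiplication is legitimately an instance of ``multiplication'' in the $(+,\times)$ sense (it is, since both factors are $\{0,1\}$-valued functions of the filtered inputs) and that the $\star$-marks propagate correctly so that filtered-out positions contribute $0$. A secondary, purely bookkeeping concern is getting the four sign cases of $\zeta$ right (the $x$-odd/$y$-even versus $x$-even/$y$-odd cases, on both sides of the diagonal), which is a finite case analysis analogous to the one carried out for $\eta$ in Theorem~\ref{th:L1_to_Dominance}; I would verify it by checking the identity on all four parity combinations of $(x \bmod 2, y \bmod 2)$ in each of the two cases $\floor{x/2} \lessgtr \floor{y/2}$ and on the diagonal.
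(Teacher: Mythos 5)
Your proposal follows essentially the same route as the paper's proof: the same scaling recursion $\dom(x,y)=\dom(\floor{x/2},\floor{y/2})+\zeta(x,y)$ with the correction supported on the diagonal case ($x$ odd, $x=y+1$), realized via parity filtering, an equality test written as $1-\ham$, and multiplication, then unwound over $\bigo(\log M)=\bigo(\log n)$ bit levels with the $\star$-marks removed by Lemma~\ref{lem:nonsparse_hamming}. The only cosmetic differences are that the paper splits off the purely-$x$ part $1-\sum_i\indicator{\floor{x/2^i}\text{ is odd}}$ and handles it by Lemma~\ref{th:polynomials_are_easy}, and that no dominance indicators are ever needed in the correction (it is exactly $-\indicator{x\text{ odd}}\cdot\indicator{x=y+1}$), so your hedge about them can be dropped.
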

\begin{proof}
Since dominance is shift-invariant, w.l.o.g. we assume that $0 \le x,y < M$ for some $M = \poly(n)$.
Observe the following recurrence relation, for $x,y \ge 0$:
\begin{align*}
\dom(x,y) &= \dom(\floor{x/2}, \floor{y/2}) - \indicator{(x\text{ is odd})  \wedge (x = y+1)}\\ 
&= \dom(\floor{x/2}, \floor{y/2}) - \indicator{x\text{ is odd}} + \indicator{x\text{ is odd}}\cdot \ham(x,y+1)
\end{align*}
By unwinding, we get:
$$
\dom(x,y) = 1 - \sum_{i=0}^{{\log M} } \indicator{\lfloor x/2^i \rfloor \text{ is odd}} + \sum_{i=0}^{{\log M}} \indicator{\lfloor x/2^i \rfloor \text{ is odd}} \cdot \ham(\lfloor x/2^i \rfloor, \lfloor y/2^i \rfloor+1).
$$
Using filtering notation, this becomes
$$
\dom(x,y)\quad =\quad\underbrace{1 - \sum_{i=0}^{{\log M}} \indicator{\lfloor x/2^i \rfloor \text{ is odd}}}_{(\ast)} +\underbrace{\sum_{i=0}^{{\log M}} \ham(\fodd(\lfloor x/2^i \rfloor), \lfloor y/2^i \rfloor+1)}_{(\ast\ast)}
$$

Now observe, that $(\ast)$ is purely a function of $x$. If $x$ is guaranteed to be an integer, then evaluating it as part of an operator (i.e. inside convolution or matrix-multiplication) is trivial.
As $y$ is never mapped to $\star$ in $(\ast\ast)$, treating $(\ast)$ as a single argument function suffices.

The second term $(\ast\ast)$ uses our filtering function and the convention that $\ham$ evaluates to $0$ if at least one of its inputs is $\star$. Thus $(\ast\ast)$ is a sum of $\bigo(\log n)$ Hamming distances on inputs from $\mathbb{Z}\cup\{\star\}$. By Lemma~\ref{lem:nonsparse_hamming}, each of those reduces to two instances of Hamming distance on inputs from $\mathbb{Z}$.
\end{proof}

\emph{Remark:}
In general, we have to take into account that both $x,y \in \mathbb{Z}\cup \{\star\}$.
Thus, we have to treat term $(\ast)$ as a function of \emph{both} $x$ and $y$, that is evaluating to $0$ if $x = \star$ or $y = \star$.
In general, $(\ast)$ reduces to evaluating, after the reduction step, some polynomial $Q(x',y') = f(x')$ (where $y'$ might be $\star$) with $f(x') = 1 - \sum_{i=0}^{\log M} \indicator{\lfloor x'/2^i \rfloor \text{ is odd}}$.
By Lemma~\ref{th:polynomials_are_easy}, $f(x')$ can be done in the time of a regular convolution or matrix multiplication
and thus the computation time for $(\ast)$ is dominated by $(\ast\ast)$, that is \PMHammingShort and \AllPairsHammingShort, respectively.

\section{Hamming distance completeness}
\label{sec:hdc}
The goal of this section is to prove Theorem~\ref{th:intro2}. We achieve this by showing two separate reductions, one from \emph{all piecewise polynomial functions} to Hamming distance and one from Hamming distance to all non axis-orthogonal piecewise polynomials.

\begin{theorem}
\label{th:completeness1}
If $\diamond$ is a piecewise polynomial of degree $d$ with $c$ summands then it reduces to $\bigo(c \cdot d^2 \cdot \log^{d+1} n)$ instances of Hamming distance. Reduction works even if we allow ``don't care'' symbols.
\end{theorem}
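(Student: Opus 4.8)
The plan is to reduce an arbitrary piecewise polynomial $\diamond = \sum_{\ell=1}^{c} P_\ell(x,y)\cdot\indicator{A_\ell x + B_\ell y + C_\ell > 0}$ to Hamming distance in two stages: first strip off the polynomial factors $P_\ell$, reducing everything to indicators of halfplanes (this costs a $\poly(d,\log n)$ blow-up via the scaling trick), and then reduce a single halfplane indicator $\indicator{A x + B y + C > 0}$ to $\bigo(\log n)$ Hamming distances, in the spirit of Theorem~\ref{th:Dominance_to_Hamming}. Because ``reduces preserving linearity'' composes (a reduction of $\diamond$ to the $\square_i$ composed with reductions of each $\square_i$ to Hamming distance yields a reduction of $\diamond$ to Hamming distance, with the number of instances multiplying), it suffices to handle each stage separately, and linearity lets us treat the $c$ summands independently and sum the counts.

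For the second (core) stage, fix integers $A,B,C$. Since inputs are bounded by $\poly(n)$, after applying $f(x) = Ax$ and $g(y)=-By$ (coordinate-wise, respecting $\star$), the quantity $Ax+By+C$ becomes $x' - y' + C$ with $x',y'$ still $\poly(n)$; so it suffices to reduce $\dom_C(x,y)\defeq\indicator{x \le y + C}$ — equivalently a shifted dominance — to Hamming distance, and $\dom_C$ is a shift of ordinary dominance, which Theorem~\ref{th:Dominance_to_Hamming} already handles in $\bigo(\log n)$ Hamming instances (plus a $(\ast)$-type term that is purely a function of $x$, handled by Lemma~\ref{th:polynomials_are_easy} within the cost of a single convolution / matrix multiplication). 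One subtlety: if $A=0$ the halfplane is axis-orthogonal in $y$ alone, so the indicator is a single-argument function and is trivial; similarly if $B=0$. So the interesting case genuinely is $A,B\neq0$, where the substitution above applies.

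For the first stage, I would imitate the Scaling paragraph. For a fixed halfplane polynomial $P(x,y)\cdot\indicator{A x+B y+C>0}$ with $P$ of degree $\le d$, I want a recurrence of the form $P(x,y)\cdot\mathbb{1}[\cdots] = (\text{scaled version at }\lfloor x/2\rfloor,\lfloor y/2\rfloor) + \xi(x,y)$ where $\xi$ depends only on the low bits $x \bmod 2$, $y \bmod 2$ and can therefore be expanded, via the $\feven/\fodd$ filters, into a bounded number of terms each of the form $(\text{polynomial in }x,y)\cdot(\text{shifted dominance indicator})$. Writing $x = 2\lfloor x/2\rfloor + (x\bmod 2)$ and expanding $P$ by the binomial theorem produces $\bigo(\log^{?} n)$ such correction terms over the $\log M$ scales; the monomials $x^a y^b$ surviving after filtering are themselves handled, since polynomials alone are ``easy'' (Lemma~\ref{th:polynomials_are_easy}) — so a term like $x^a y^b \cdot \indicator{x \le y}$ reduces to dominance after further decomposition, again using scaling/filtering to peel the polynomial coefficient off the indicator. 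Iterating this peeling $d$ times (once per degree reduction) gives the $\log^{d+1} n$ factor, and the $d^2$ factor comes from the number of monomials of degree $\le d$ in two variables together with the binomial expansions. The $\star$ symbols are carried through transparently because $f_i(\star)=\star$ and double-argument functions vanish on $\star$, exactly as set up in the Sparsity paragraph.

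The main obstacle I expect is the bookkeeping in the first stage: making the peeling of a degree-$d$ polynomial factor off a halfplane indicator precise, so that each peeling step (i) genuinely reduces the polynomial degree, (ii) only introduces indicators that are shifts of the \emph{same} halfplane (or axis-orthogonal ones, which are free), and (iii) keeps the instance count at $\bigo(d^2\log^{d+1} n)$ rather than something exponential in $d$. Handling the $\star$ symbols correctly inside these polynomial manipulations — in particular ensuring that a term which is ``a function of $x$ only'' is still treated as a two-argument function vanishing when $y=\star$, as flagged in the Remark after Theorem~\ref{th:Dominance_to_Hamming} — is the other place where care is needed, but it is routine once the scalar identities are fixed.
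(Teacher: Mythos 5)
Your outline assembles the right ingredients and, for the outer layer, matches the paper: treat each summand separately, dispose of axis-orthogonal summands by Lemma~\ref{th:polynomials_are_easy}, and for $A,B\neq 0$ use a linear substitution in each argument so that the indicator becomes a plain dominance $\indicator{x<y}$ (the paper uses $u=-Ax$, $v=By+C$ and expands $P$ in the basis $(-Ax)^i(By+C)^j$, which is where the $\bigo(c\,d^2)$ factor comes from). But the core of the theorem is precisely the part you defer: how to handle a term $x^a y^b\cdot\indicator{x<y}$, i.e.\ a polynomial weight that is glued to the indicator, within $\bigo(\log^{a+b+1}n)$ Hamming instances. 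Your ``stage 1'' (peel the polynomial off first, then reduce a bare halfplane indicator) cannot be run as two independent stages, because linearity does not let you decouple $P(x,y)$ from $\indicator{Ax+By+C>0}$; the paper instead runs a single recursion (Lemma~\ref{th:monomials}) that simultaneously halves the values and lowers the degree, and the claimed bound hinges on solving the resulting two-parameter recurrence $T(a,b,m)\le \bigo(m(a+b))+T(a,b,m-1)+\sum 3\,T(i,j,m)$, shown to be $\bigo(m(a+b)\binom{a+b+m}{a,b,m}4^{a+b})$, i.e.\ polynomial in $m=\log M$ for fixed degree. You explicitly flag exactly this bookkeeping (that the count stays $\bigo(d^2\log^{d+1}n)$ and does not blow up exponentially) as an ``expected obstacle,'' so the decisive step of the proof is missing rather than proved.

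A second, smaller inaccuracy: you describe the correction terms $\xi$ arising at each scale as ``polynomial times shifted dominance indicators.'' That is not quite what appears. The term $\indicator{x<y}-\indicator{\lfloor x/2\rfloor<\lfloor y/2\rfloor}$ is supported on $\feven(x)=\fodd(y)-1$, i.e.\ it is an \emph{equality}-type term with a polynomial weight, which the paper handles separately via character-weighted matches (Lemma~\ref{lem:weighted_hamming}, using the bit-filtering $w_i$ and $\indicator{x=y}=1-\ham(x,y)$), while the genuinely lower-degree corrections are dominance on parity-filtered inputs such as $\indicator{\fodd(x)<\feven(y)}$. Without naming and disposing of the equality-type term, even the structure of one recursion step is incomplete. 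Your treatment of $\star$ symbols and the remark that single-argument leftovers must still vanish on $\star$ are fine and agree with the paper; the gap is the missing Lemma~\ref{th:monomials}-style decomposition and its counting argument.
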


\begin{theorem}
\label{th:completeness2}
If $\diamond$ is a piecewise polynomial of degree $d$ but is not axis-orthogonal piecewise polynomial, then Hamming distance reduces to $\bigo(d^2)$ instances of $\diamond$ and multiplication.
\end{theorem}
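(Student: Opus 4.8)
The plan is to show that if $\diamond$ is piecewise polynomial but not axis-orthogonal, then Hamming distance reduces to a few instances of $\diamond$ and of multiplication (i.e.\ of the $(+,\times)$ product, which is "easy"). The starting point is the decomposition $\diamond(x,y) = \sum_{t=1}^c P_t(x,y)\cdot\indicator{A_t x + B_t y + C_t > 0}$. Since $\diamond$ is \emph{not} axis-orthogonal, at least one summand, say the $t_0$-th, has $A_{t_0}\neq 0$ \emph{and} $B_{t_0}\neq 0$; this "genuinely diagonal" halfplane is the only feature of $\diamond$ we can exploit, so the first step is to \emph{isolate} it. The idea is to use the scaling/filtering machinery from the warm-up: by substituting appropriately rescaled and shifted arguments $x\mapsto a x + b$, $y\mapsto a' y + b'$ (with $a,a'$ polynomially bounded, which is allowed since inputs are $\poly(n)$-bounded) and by taking finite differences of $\diamond$ evaluated at $O(1)$ such shifted argument pairs, one can cancel all the other halfplane terms on the relevant range of inputs and also kill the polynomial factor $P_{t_0}$ down to a constant, leaving essentially a clean indicator $\indicator{A_{t_0} x + B_{t_0} y + C_{t_0} > 0}$ of a line of nonzero, non-vertical, non-horizontal slope. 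Here "finite differences" means: because $P_t$ has degree $\le d$, evaluating at $d+1$ shifts in the $x$-direction and $d+1$ in the $y$-direction and taking the corresponding alternating sums annihilates any polynomial of degree $\le d$ off its own halfplane's boundary region — this is where the $\bigo(d^2)$ blow-up in the statement comes from. The polynomial pieces that are not annihilated (those supported on thin strips near boundaries) are themselves polynomials in $x$ alone or $y$ alone on those strips, hence "easy" by Lemma~\ref{th:polynomials_are_easy}, and get absorbed into the "and multiplication" part of the conclusion.

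Once we have reduced computing Hamming distance to computing (a constant number of rescaled/shifted copies of) the single-line indicator $D(x,y) := \indicator{A x + B y + C > 0}$ with $AB\neq 0$, the second step is to reduce Hamming distance to $D$. Observe $D(x,y)$ is, up to the affine change of variables $u = A x$, $v = -B y$, just a dominance-type indicator $\indicator{u > v - C}$ (the sign of $B$ only swaps $>$ and $\ge$ and reverses an inequality, which is harmless). So effectively we have the Dominance function $\dom$ available. It remains to reduce Hamming to Dominance — but that is immediate from the identity $\ham(x,y) = \indicator{x>y} + \indicator{x<y} = \indicator{x \ge y+1} + \indicator{y \ge x+1} = \dom(y+1,x) + \dom(x+1,y)$ already recorded in the excerpt ($\ham(x,y)=\indicator{x>y}+\indicator{x<y}$), which is a one-to-two reduction with affine argument maps. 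Chaining: Hamming $\to$ two Dominance $\to$ constant many $D$ $\to$ $\bigo(d^2)$ instances of $\diamond$ plus multiplication.

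I would then verify that the whole chain is a reduction "preserving linearity" in the sense of Definition~\ref{def:reduction}: every step expresses the target function as a fixed integer-coefficient linear combination of $O(1)$ (resp.\ $\bigo(d^2)$) copies of the source function applied to coordinatewise affine maps of the inputs, plus polynomial terms that Lemma~\ref{th:polynomials_are_easy} handles inside convolution/matrix multiplication; and that all introduced constants and argument values stay $\poly(n)$-bounded so the reduction is legitimate on the stated input range. By the remark after Definition~\ref{def:reduction}, a linearity-preserving one-to-many reduction on the scalar functions lifts automatically to vector products, convolutions and matrix products, giving the claimed reductions for \PMHammingShort and \AllPairsHammingShort as well.

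The main obstacle, I expect, is the first step: cleanly isolating one genuinely diagonal halfplane from an arbitrary piecewise-polynomial $\diamond$ while controlling the polynomial factors. Several summands may share collinear or nearly collinear boundary lines, and their finite-difference residues near boundaries interact; one has to choose the scaling factors and the set of shifts so that the boundary of the chosen line $t_0$ is "resolved" at the integer scale while the boundaries of the competing summands are pushed far enough away (or are annihilated because they are axis-parallel and thus fall under the polynomial/"easy" case) that their contribution on the working range is either zero or a univariate polynomial. Making this simultaneously work for \emph{all} non-axis-orthogonal $\diamond$ — rather than just Dominance/Threshold — is the technical heart, and I would expect the proof to set it up via a careful inductive peeling of summands ordered by slope, with the $\log^{d+1}n$-type overhead of the companion Theorem~\ref{th:completeness1} replaced here by the cruder $\bigo(d^2)$ since we only need a reduction \emph{from} Hamming, not a tight one.
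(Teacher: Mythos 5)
Your skeleton matches the paper's actual proof: affinely reparametrize the inputs to isolate a non-axis-parallel boundary line, annihilate the polynomial factors by discrete differencing at most $d$ times in each variable (whence the $\bigo(d^2)$), and finish by reducing Hamming to Dominance. But the technical heart, which you yourself flag as the ``main obstacle,'' is left unproved, and the plan you sketch for it would fail as stated. You propose to push the boundaries of all competing summands ``far enough away'' from the working range. That is impossible for summands whose boundary lines are \emph{parallel} to the chosen line (same ratio $A:B$, different $C$): no affine substitution that keeps the chosen line resolved at integer scale can move a parallel line off a grid whose diagonal lies on that line. The paper's Lemma~\ref{lem:lines_lemma} is exactly the missing ingredient: it chooses $\alpha=3MB_i$, $\beta=-3MA_i$ and offsets $\gamma,\delta$ so that every non-parallel boundary line misses the grid $\{(\alpha x+\gamma,\beta y+\delta): x,y\in[0 \dots N]\}$ entirely (its indicator is constant there), while every parallel boundary line separates $\{x>y\}$ from $\{x<y\}$ on the grid; thus the parallel boundaries collapse onto the diagonal and the restricted function takes the form $Q_{>}(x,y)\indicator{x>y}+Q_{=}(x,y)\indicator{x=y}+Q_{<}(x,y)\indicator{x<y}$ with $Q_{>}\not\equiv Q_{<}$ (this last non-degeneracy needs the minimality assumption on the representation of $\diamond$, which you never invoke but implicitly rely on when you assume differencing leaves a nonzero constant).

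Second, differencing does not produce a ``clean indicator.'' Applying $D_x^aD_y^b$ to $F-Q_{>}$ (note the prior subtraction of the explicit polynomial $Q_{>}$, handled by Lemma~\ref{th:polynomials_are_easy}, so that one side becomes identically zero) only guarantees the value $1$ when $y-x>d$ and $0$ when $y-x<-d$; on the band $|y-x|\le d$ the value is uncontrolled. Your fix --- that the non-annihilated residues on thin strips near boundaries are polynomials in $x$ alone or $y$ alone and hence ``easy'' --- is false for the strip along the chosen line, which is not axis-parallel: the residue there is genuinely bivariate and still carries indicator discontinuities. The paper avoids the band altogether by composing with stretched and shifted arguments, $\dom(x,y)=G(3d\cdot x,\,3d\cdot y+d)$, so that queried integer points never land in it. Without these two ingredients (the grid-embedding lemma for parallel lines, and the band-avoiding rescaling) the first step of your chain does not go through, even though the final step Hamming $\to$ Dominance and the $\bigo(d^2)$ accounting are correct and coincide with the paper's.
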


To prove Theorem~\ref{th:completeness1}, we  consider every summand separately. We show that summands with ``simple'' conditions (that depend on only one argument) are no harder than simple multiplication. Every other summand with conditional term $\indicator{A_i x + B_i y + C_i > 0}$ reduces under linear transformations of its arguments to $\indicator{x < y}$. It is thus enough to consider terms of the form $x^a y^b \indicator{x < y}$. We decompose such terms recursively into a sum of: terms with smaller values ($x/2, y/2$ instead of $x,y$), terms of smaller degree, and terms with a conditional term of a simpler form of $\indicator{x = y}$. Exhaustively applying this decomposition  leaves us with a polylog number of terms of the form $w(x) \cdot  \indicator{x=y}$, with which we deal separately (those decompose into a logarithmic number of regular Hamming distances).

\begin{lemma}
\label{lem:weighted_hamming}
	For an integer weight function $w$, the character weighted matches, that is $w(x)\cdot \indicator{x=y}$, reduce to $\bigo(\log n)$ instances of Hamming distance and multiplication.
\end{lemma}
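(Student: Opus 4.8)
The plan is to reduce $w(x) \cdot \indicator{x=y}$ to Hamming distance by a bit-by-bit (binary expansion) argument, combined with the scaling/unwinding idea already used in the warm-up section. First I would observe that $\indicator{x=y}$ can itself be expressed via Hamming distance on a constant number of transformed inputs: if $D$ is the (polynomially bounded) number of distinct possible values, then $\indicator{x=y} = 1 - \ham(x,y)$ interpreted over the relevant alphabet, so character-equality matches reduce to Hamming with multiplicative correction; the real work is in reintroducing the weight $w(x)$. Since $w$ takes values bounded by $\poly(n)$, I would write $w(x) = \sum_{\ell=0}^{O(\log n)} 2^\ell \cdot b_\ell(x)$ where $b_\ell(x) \in \{0,1\}$ is the $\ell$-th bit of $w(x)$ (handling the sign of $w$ separately by splitting into the positive and negative parts, or by shifting). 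Then $w(x)\cdot\indicator{x=y} = \sum_\ell 2^\ell \cdot b_\ell(x) \cdot \indicator{x=y}$, and each term $b_\ell(x)\cdot\indicator{x=y}$ is a "filtered" equality indicator.

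Next I would turn $b_\ell(x)\cdot\indicator{x=y}$ into a Hamming-distance instance using the "don't care" / filtering machinery. Define a single-argument filter $\phi_\ell$ that maps $x \mapsto x$ if $b_\ell(x)=1$ and $x \mapsto \star$ otherwise; then $b_\ell(x)\cdot\indicator{x=y} = \indicator{\phi_\ell(x) = y}$ under the convention that any comparison involving $\star$ contributes $0$. This is exactly an equality-match with don't-cares on the $x$ side, which by the relation $\indicator{u=v} = 1 - \ham(u,v)$ (valid coordinatewise, with the $1$ suppressed on $\star$ coordinates) reduces to a Hamming-distance instance on inputs from $\mathbb{Z}\cup\{\star\}$; by Lemma~\ref{lem:nonsparse_hamming} such instances reduce to a constant number of Hamming instances over $\mathbb{Z}$. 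The constant $1$ that appears (the "$1-{}$" part) is a single-argument function of $x$ — it survives only on coordinates where $\phi_\ell(x)\neq\star$ — so, exactly as in the remark following Theorem~\ref{th:Dominance_to_Hamming}, it is a polynomial in $x$ and by Lemma~\ref{th:polynomials_are_easy} (invoked as stated) can be evaluated within the cost of one ordinary convolution or matrix multiplication, hence is dominated by the Hamming cost.

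Summing over the $O(\log n)$ bit-planes $\ell$ gives $w(x)\cdot\indicator{x=y}$ as a $\mathbb{Z}$-linear combination of $O(\log n)$ Hamming-distance instances (plus $O(\log n)$ polynomial-in-$x$ correction terms folded into a single multiplication/convolution), which is precisely the claimed bound. I expect the main obstacle to be purely bookkeeping rather than conceptual: making sure the "$1-{}$" corrective constants interact correctly with the $\star$ convention (so that they are genuinely single-argument and do not accidentally fire on don't-care coordinates), and handling negative weight values cleanly — both are dispatched by the same split-and-shift trick already in play in the warm-up, so the plan carries no serious risk.
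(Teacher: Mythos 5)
Your proposal is correct and follows essentially the same route as the paper: a bitwise decomposition of $w(x)$ into $\bigo(\log n)$ planes, a filtering function mapping to $\star$ on the zero bits, and the identity $\indicator{u=v}=1-\ham(u,v)$ together with Lemma~\ref{lem:nonsparse_hamming} (the paper filters both arguments via $w_i$, you filter only $x$, but this is an immaterial variant). Your extra care about the ``$1-{}$'' correction and negative weights is sound bookkeeping and consistent with the paper's conventions.
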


\begin{proof}
Let $W$ be the upper bound on all values of $w$ in the considered domain of inputs.
	Given two integers $x,y$, we observe the following equality:
	$$
	w(x) \cdot [x = y]\quad=\quad \sum_{i=0}^{{\log W}} 2^i \cdot \indicator{w_i(x) = w_i(y)}
	$$
	where the filtering function $w_i$ is defined based on $w$:
 $$w_i(x) = \begin{cases}x&\quad i\text{-th bit of }w(x)\text{ is }1\\\star&\quad\text{otherwise.}\end{cases} $$
 Observing that $\indicator{x=y} = 1 - \ham(x,y)$ finishes the proof.
\end{proof}

\begin{lemma}
\label{th:polynomials_are_easy}
An axis-orthogonal piecewise polynomial $\diamond$ of $c$ summands of degree $d$ reduces to $\bigo(d^2 c)$ multiplications.
\end{lemma}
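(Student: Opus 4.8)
The plan is to reduce the computation of an axis-orthogonal piecewise polynomial $\diamond$ to a bounded number of ordinary (i.e.\ $(+,\times)$) convolutions or matrix multiplications, since each monomial convolution $\sum_{i+j=k} x_i^a\, y_j^b$ can be computed by applying the coordinatewise maps $x\mapsto x^a$, $y\mapsto y^b$ and invoking one FFT-based convolution (resp.\ one fast matrix multiplication). First I would handle a single summand of the form $P(x,y)\cdot\indicator{Ax+By+C>0}$ where, by the axis-orthogonality hypothesis, either $A=0$ or $B=0$; say $B=0$ (the other case is symmetric), so the indicator is $\indicator{Ax+C>0}$, a function of $x$ alone. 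Writing $P(x,y)=\sum_{a+b\le d}\gamma_{a,b}\,x^a y^b$, the summand becomes $\sum_{a,b}\gamma_{a,b}\,\bigl(x^a\cdot\indicator{Ax+C>0}\bigr)\cdot y^b$. Thus, after applying to the $x$-side the single-argument map $x\mapsto x^a\cdot\indicator{Ax+C>0}$ (legal because it depends on $x$ only, and respects the $\star$ convention: $\star\mapsto\star$) and to the $y$-side the map $y\mapsto y^b$, each pair $(a,b)$ contributes exactly one ordinary multiplication. There are at most $\binom{d+2}{2}=O(d^2)$ such monomials per summand.

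Next I would take the linear combination over all $c$ summands: by Definition~\ref{def:reduction} this is a linearity-preserving reduction, so the $(+,\diamond)$ convolution/matrix product equals the corresponding weighted sum of the $O(d^2 c)$ ordinary convolutions/matrix products, each computable in $\widetilde O(n)$ time (via FFT) respectively $O(n^\omega)$ time. One technical point to address is the $\star$ ("don't care") symbols: a single-argument map sends $\star$ to $\star$, and in the final $(+,\times)$ product the convention makes any term touching $\star$ contribute $0$, which is precisely what is wanted; since we are told (Lemma~\ref{lem:nonsparse_hamming} analogue, or rather directly) that ordinary multiplication with $\star$-entries reduces to ordinary multiplication with the $\star$'s replaced by $0$, this adds only a constant factor. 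I should also note that all intermediate values stay bounded by $\poly(n)$ — inputs are $\poly(n)$ and $d$ is constant, so $x^a$ is still $\poly(n)$ — so the arithmetic cost per operation is polylogarithmic and the $\widetilde O$ absorbs it.

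The main obstacle, and the only place where the axis-orthogonality hypothesis is genuinely used, is the step that replaces the two-variable indicator by a one-variable map that can be pushed onto a single side of the product: if the halfplane were genuinely two-dimensional (some $A_i\neq 0$ \emph{and} $B_i\neq 0$), then $\indicator{A_i x+B_i y+C_i>0}$ could not be folded into the coordinatewise maps $f_i,g_i$ of Definition~\ref{def:reduction}, and indeed Theorem~\ref{th:completeness2} shows such functions are Hamming-complete rather than easy. So the proof is really an observation that axis-orthogonality is exactly the condition under which every summand factors as $(\text{function of }x)\times(\text{function of }y)$ after expanding $P$ into monomials; once that is in place, counting monomials gives the $O(d^2 c)$ bound and the rest is the standard reduction to $(+,\times)$ products.
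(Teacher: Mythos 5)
Your proof is correct and follows essentially the same approach as the paper: for each summand with axis-orthogonal condition, expand $P$ into $O(d^2)$ monomials $x^a y^b$, fold the one-variable indicator into the $x$-side map (paper uses $\star$, you use $0$ — equivalent for multiplication), and observe each monomial becomes a single ordinary product. The extra remarks on why axis-orthogonality is essential and on value bounds are sound but beyond what the paper's one-line proof spells out.
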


\begin{proof}
	Given an axis orthogonal piecewise polynomial $F(x,y) =\sum_{i = 1}^{c}P_i(x,y)\cdot \indicator{A_ix + B_iy + C_i > 0}$ of degree $d$. Consider summand $P_i(x,y) \indicator{A_i x + C_i > 0}$ (w.l.o.g. we assume that $B_i = 0$ and $A_i \not=0$).  Consider a monomial of $P_i(x,y)$, e.g. $x^ay^b$. Define $x' = x^a$ iff $A_i x + C_i > 0$ and $x' = \star$ otherwise, and $y' = y^b$. Then $x^ay^b \cdot \indicator{A_i x + C_i > 0} = x' y'$.
\end{proof}
Axis orthogonal piecewise polynomial $\diamond$ are no harder than multiplication  in e.g. vector convolution or matrix multiplication. By Lemma~\ref{th:polynomials_are_easy} it reduces to multiplication in $\mathbb{Z} \cup \{\star\}$, which in turn reduces to multiplication in $\mathbb{Z}$. Indeed, it is enough to consider a map $\mathbb{Z} \cup \{\star\} \to \mathbb{Z}$ that is identity on $\mathbb{Z}$ and maps $\star \to 0$.

\begin{lemma}
\label{th:monomials}
Given integers $a,b \ge 0$, the binary function $x^a y^b \cdot\indicator{x<y}$ reduces to $\bigo(\log^{a+b+1} n)$ instances of Hamming distance and multiplication.
\end{lemma}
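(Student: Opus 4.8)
\emph{Proof proposal.} The plan is to peel off the least significant bit of both arguments at once and recurse, in the spirit of the ``unwinding'' used in Theorems~\ref{th:L1_to_Dominance} and~\ref{th:Dominance_to_Hamming}, but now also recursing on the total degree $a+b$. First I would dispose of signs: since all values are bounded by some $M_0=\poly(n)$ in absolute value, the input map $x\mapsto x+M_0$, $y\mapsto y+M_0$ preserves $\indicator{x<y}$, and expanding $x^ay^b=\sum_{i\le a,\,j\le b}\binom{a}{i}\binom{b}{j}(-M_0)^{a+b-i-j}(x+M_0)^i(y+M_0)^j$ rewrites $x^ay^b\indicator{x<y}$ as a sum of $\bigo(1)$ functions of the same shape, with arguments now in $[0,M)$ for $M=2M_0=\poly(n)$ and a unique summand of top degree $a+b$. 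So it suffices to handle nonnegative inputs, and I will induct on $D=a+b$.

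For $D=0$ we use $\indicator{x<y}=\dom(x,y)-1+\ham(x,y)$: Theorem~\ref{th:Dominance_to_Hamming} handles $\dom$ in $\bigo(\log n)$ Hamming/multiplication instances (the remark following it, together with Lemma~\ref{lem:nonsparse_hamming}, takes care of the constant and of ``don't care'' symbols), giving $\bigo(\log^{0+0+1}n)$. For $D\ge 1$, write $u=\floor{x/2}$, $v=\floor{y/2}$, $r_x=x\bmod 2$, $r_y=y\bmod 2\in\{0,1\}$. A short case analysis yields the conditional identity
\[
\indicator{x<y}\;=\;\indicator{u<v}\;+\;\indicator{u=v}\cdot\indicator{r_x=0}\cdot\indicator{r_y=1},
\]
and, since $r_x^k=r_x$ for all $k\ge 1$, the expansions $x^a=2^au^a+r_x\,A(u)$ and $y^b=2^bv^b+r_y\,B(v)$ with $\deg A\le a-1$, $\deg B\le b-1$. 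Multiplying out, $x^ay^b\indicator{x<y}$ equals
\[
2^{a+b}\,u^av^b\,\indicator{u<v}\;+\;R(x,y)\;+\;W(u)\cdot\indicator{u=v},
\]
where $R$ is a sum of at most $(a+1)(b+1)-1=\bigo(1)$ terms, each of the form $\alpha\cdot u^{a'}v^{b'}\indicator{u<v}$ with $a'+b'\le D-1$ and with $x$ and/or $y$ first run through the parity filter $\fodd$ (a filtered coordinate becomes a ``don't care'' and contributes $0$, which models the factors $r_x,r_y$), while $W$ is a degree-$D$ integer polynomial with $\poly(n)$-bounded values and in the last term $x$ is pre-filtered with $\feven$ and $y$ with $\fodd$.

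It remains to bound each piece. The term $W(u)\indicator{u=v}$ reduces to $\bigo(\log n)$ Hamming/multiplication instances by Lemma~\ref{lem:weighted_hamming} (the parity pre-filters and the map $x\mapsto\floor{x/2}$ are applied coordinatewise beforehand). Each of the $\bigo(1)$ summands of $R$ is, after its parity pre-filter, an instance of the same problem of total degree $\le D-1$ on inputs over $\integ\cup\{\star\}$, hence reduces to $\bigo(\log^{D}n)$ instances by the induction hypothesis. Finally $2^{a+b}u^av^b\indicator{u<v}$ is the original function evaluated on the inputs transformed coordinatewise by $\floor{\,\cdot\,/2}$, over the halved range $[0,M/2)$; applying the decomposition to it repeatedly, after $\log_2 M=\bigo(\log n)$ rounds the range collapses to $\{0\}$, where $0^a0^b\indicator{0<0}=0$, so the recursion terminates. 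Each round contributes one $W(\cdot)\indicator{\cdot=\cdot}$ term ($\bigo(\log n)$ instances) and $\bigo(1)$ terms of degree $\le D-1$ ($\bigo(\log^{D}n)$ instances each, by induction). Writing $N_D$ for the total, we get $N_D\le\bigo(\log n)\cdot\bigl(N_{D-1}+\bigo(\log n)\bigr)$ with $N_0=\bigo(\log n)$, hence $N_D=\bigo(\log^{D+1}n)$; adding the $\bigo(1)$ terms from the sign-reduction gives the claimed $\bigo(\log^{a+b+1}n)$.

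I expect the main obstacle to be the bookkeeping around the three pieces: verifying that after one bit is peeled, every surviving summand is either (i) of strictly smaller total degree (the cross terms $u^aB(v)$, $A(u)v^b$, $A(u)B(v)$), (ii) a character-weighted equality $W(u)\indicator{u=v}$, hence Hamming-easy by Lemma~\ref{lem:weighted_hamming}, or (iii) the original function on a strictly smaller range with a clean zero leaf; and making sure the parity ``don't care'' marks are threaded correctly through all the filters and resolved at the leaves via Lemma~\ref{lem:nonsparse_hamming}. One also has to check that each unit descent in degree costs exactly one $\bigo(\log n)$ factor, which is precisely what produces the exponent $a+b+1$.
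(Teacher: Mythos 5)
Your proposal is correct and follows essentially the same route as the paper's proof: peel off the least significant bit, leaving a main term $2^{a+b}u^av^b\indicator{u<v}$ on halved inputs, a weighted-equality correction handled by Lemma~\ref{lem:weighted_hamming}, and parity-filtered terms of strictly smaller total degree, with don't-care marks resolved via Lemma~\ref{lem:nonsparse_hamming}. Your recurrence $N_D\le\bigo(\log n)\bigl(N_{D-1}+\bigo(\log n)\bigr)$, organized by inducting on total degree and unrolling the range, is just a cleaner packaging of the paper's count $T(a,b,m)$ and yields the same $\bigo(\log^{a+b+1}n)$ bound.
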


\begin{proof}
Denote $\text{MDom}_{a,b}(x,y) = x^a y^b \cdot\indicator{x<y}$, $\text{MEq}_{a}(x,y) = x^a \cdot \indicator{x=y}$.
First, we argue that w.l.o.g. $x,y \ge 0$. Indeed, observe that $\text{MDom}_{a,b}(x - \Delta,y - \Delta) = (x-\Delta)^a (y-\Delta)^b \cdot\indicator{x<y}$, thus for large enough $\Delta$, the computation of $\text{MDom}_{a,b}$ on inputs of arbitrary sign reduces to $\bigo(ab)$ instances of $\text{MDom}$ on non-negative inputs. Thus we assume that $0 \le x,y \le M$ for some $M = \poly(n)$.

We proceed with the following decomposition, where $u = \lfloor \frac{x}2 \rfloor$ and $v = \lfloor \frac{y}2 \rfloor$. 
\begin{align}
\label{eq:term1}\tag{*}\text{MDom}_{a,b}(x,y)\quad =\quad &(2u)^a (2v)^b \cdot \indicator{u < v} \\
\label{eq:term2}\tag{**}+\quad&(2u)^a(2v)^b \cdot \Big(\indicator{x < y} - \indicator{u < v}\Big) \\
\label{eq:term3}\tag{***}+\quad&\left(x^ay^b - (2 u)^a  (2v)^b\right) \cdot \indicator{x < y}
\end{align}

Simplifying those terms separately, we have
\begin{align*}
\eqref{eq:term1} =\ &2^{a+b} \cdot \text{MDom}_{a,b}(u,v),\\
\eqref{eq:term2} =\ &x^a (y-1)^b \cdot \indicator{\feven(x) = \fodd(y)-1} = \textrm{MEq}_{a+b}(\feven(x), \fodd(y)-1),\\
\eqref{eq:term3}=\ &P_{a,b}(x,y)  \cdot \indicator{\fodd(x) < \feven(y)} + Q_{a,b}(x,y) \cdot \indicator{\feven(x) < \fodd(y)} + R_{a,b}(x,y) \cdot \indicator{\fodd(x) < \fodd(y)},
\end{align*}
where
$P_{a,b}(x,y) = (x^ay^b - (x - 1)^ay^b)$, $Q_{a,b}(x,y) = (x^ay^b - x^a(y-1)^b)$ and $R_{a,b}(x,y) = (x^ay^b - (x - 1)^a(y-1)^b).$

All in all, our recursion decomposes $\text{MDom}_{a,b}(x,y)$ into several terms -- either with the inputs reduced by a factor of $2$, the test for dominance replaced with a test for equality, or to monomials of smaller degree (observe that each of $P_{a,b}(x,y)$, $Q_{a,b}(x,y)$ and $R_{a,b}(x,y)$ is of degree at most $a+b-1$). Let $T(a,b,m)$ denote the number of instances of Hamming distance that a single instance of $\text{MDom}_{a,b}$, with inputs bounded in value by $2^m$, is reduced to. Since by  Lemma~\ref{lem:weighted_hamming}, $\text{MEq}_{a+b}$  reduces to $\bigo(m\cdot(a+b))$ instances of Hamming distance, there is
$$T(a,b,m) \le \bigo(m \cdot (a+b)) + T(a,b,m-1) +  \sum_{\substack{0 \le i \le a\\0 \le j \le b\\ (i,j) \not= (a,b)}} 3 T(i,j,m), $$
which is satisfied (for some constant $C$) by
$T(a,b,m) \le C \cdot m \cdot (a+b) \cdot {a+b+m \choose a,b,m} \cdot 4^a \cdot 4^b$.
For fixed values $a,b$ this is $\bigo( \log^{a+b+1} M )$. 
\end{proof}

\begin{proof}[Proof of Theorem~\ref{th:completeness1}]
Consider an arbitrary piecewise-polynomial binary function $\diamond$.
Consider its summand $P(x,y) \cdot \indicator{Ax+By+C>0}$.
If $A = 0$ or $B = 0$ then it reduces to a binary function of degenerated form $P(x,y) \cdot \indicator{Ax + C > 0}$ which in turn reduces to $\bigo(d^2)$
 multiplications by Lemma \ref{th:polynomials_are_easy}.

Otherwise, if $A \not= 0$ and $B \not= 0$, then there is a one-to-one linear input reduction, $u = -Ax$ and $v = By + C$, that reduces from $(-Ax)^i(By+C)^j \cdot \indicator{Ax + By + C> 0}$ to $u^i v^j \cdot \indicator{u<v}$. Note that any polynomial of degree $a$ and $b$ over $x$ and $y$ is a linear combination of $(-Ax)^i(By+C)^j$ for $0 \le i \le a$ and $0 \le j \le b$.

By applying those reductions to each summand and applying Theorem \ref{th:monomials} to each monomial of the summand, we reach the claimed bound.
\end{proof}

To prove Theorem~\ref{th:completeness2}, we need following technical Lemma:

\begin{lemma}
\label{lem:lines_lemma}
Consider a family of distinct lines $\Lambda = \{\lambda_i\}_{i=1}^{|\Lambda|}$, $\lambda_i = \{x,y : A_i x + B_i y + C_i = 0\}$ for integers $A_i,B_i,C_i$ such that $|A_i|,|B_i|,|C_i| \le M$. If there is at least one $\lambda \in \Lambda$ that is not axis-orthogonal, then
there exists $\lambda_i \in \Lambda$ and $\alpha,\beta,\gamma,\delta$ such that:
\begin{itemize}
\item for any line $\lambda_j$ that is not parallel to $\lambda_i$, the set $\{ ( \alpha x + \gamma,\beta  y + \delta) : x,y \in [0 \dots N]\}$ lies \emph{on the same side} of $\lambda_j$,
\item for any line $\lambda_j$ that is parallel to $\lambda_i$, the sets $\{ ( \alpha x + \gamma,\beta  y + \delta) : x > y \}$ and $\{ ( \alpha x + \gamma,\beta  y + \delta): x < y \}$ are separated by $\lambda_j$.
\end{itemize}
Moreover, $|\alpha|,|\beta|,|\gamma|,|\delta| \le \poly(M,N)$.
\end{lemma}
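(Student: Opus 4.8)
The plan is to work in the dual/parameter space where each line $\lambda_j$ is determined by $(A_j,B_j,C_j)$, and to find a small affine box $B = \{(\alpha x + \gamma, \beta y + \delta) : x,y \in [0\dots N]\}$ that is ``combinatorially trivial'' with respect to all of $\Lambda$ except along one chosen direction. First I would split $\Lambda$ according to slope: fix some non-axis-orthogonal line $\lambda_i \in \Lambda$ (which exists by hypothesis), let $D$ be its direction, and partition $\Lambda = \Lambda_{\parallel} \sqcup \Lambda_{\not\parallel}$ into lines parallel to $\lambda_i$ and lines not parallel to $\lambda_i$. Since $|A_j|,|B_j|,|C_j|\le M$, there are only $O(M^2)$ distinct slopes and, for each fixed non-parallel slope, the lines of that slope partition the plane into $O(M)$ slabs; so the non-parallel lines of $\Lambda$ partition the plane into $O(M^4)$ cells, and there is a cell $\mathcal{C}$ which is a (possibly unbounded) convex region of ``diameter in the $D$ direction'' at least $\Omega(1/M^{O(1)})$ measured appropriately — more precisely, one of these cells contains a segment in direction $D$ of length $\ge c/M^{O(1)}$ for an absolute constant $c$. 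This is the key geometric input: a finite family of lines with bounded integer coefficients cannot chop up every direction-$D$ segment too finely, because two parallel lines with coefficients bounded by $M$ are at distance $\ge 1/(M\sqrt{2})$ apart, and the relevant region has bounded extent $\le O(M N)$ once we also use $\lambda_i$.

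Next I would choose the affine map. Orient coordinates so that moving along $D$ is, say, the direction $x>y$ versus $x<y$ split relative to $\lambda_i$ — i.e. I want the image of $\{x>y\}$ and $\{x<y\}$ under $(x,y)\mapsto(\alpha x+\gamma,\beta y+\delta)$ to land on the two sides of every line parallel to $\lambda_i$, while the whole box $B$ stays inside the single non-parallel cell $\mathcal{C}$ found above. Concretely: pick a rational point $p=(p_1,p_2)$ strictly inside $\mathcal C$ and lying on $\lambda_i$ (this is where non-axis-orthogonality of $\lambda_i$ is used — both the $x$- and $y$-coordinates of $\lambda_i$ vary, so I can slide along $\lambda_i$ to reach the interior of $\mathcal C$); then set $\alpha,\beta$ so that the vector $(\alpha, -\beta)$ points along $D$ and has length $\approx \varepsilon$ for a sufficiently small $\varepsilon = 1/\poly(M,N)$, and set $\gamma,\delta$ so that the point $x=y$ (the ``center'' of the box along the split) maps to $p$. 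Then for $x,y\in[0\dots N]$ the image point is $p + (x\alpha - \text{stuff}, \dots)$ at distance $O(\varepsilon N)$ from $p$, hence still inside $\mathcal C$ for $\varepsilon$ small enough, which gives the first bullet; and the sign of $A_j(\alpha x+\gamma)+B_j(\beta y+\delta)+C_j$ for $\lambda_j\parallel\lambda_i$ is, up to the $O(\varepsilon N)$ perturbation, the sign of $(x-y)$ times a nonzero quantity (because $\lambda_j$ is parallel to $\lambda_i$ but — being a distinct line — does not pass through $p$, OR passes through $p$; in the latter case I perturb $p$ slightly off all the finitely many parallel lines first), which gives the second bullet. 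Finally, clearing denominators in all the chosen rationals keeps $|\alpha|,|\beta|,|\gamma|,|\delta| \le \poly(M,N)$ since every quantity involved was a ratio of integers bounded by $\poly(M,N)$.

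The main obstacle I expect is making the counting in the first step fully rigorous and quantitative: showing that the non-parallel lines of $\Lambda$, together with $\lambda_i$, leave a cell that is genuinely ``wide in the $D$ direction'' by an inverse-polynomial amount, and simultaneously that this cell is bounded enough (using $\lambda_i$ and the range $[0\dots N]$) that a box of side $\poly(M,N)$-small in the transformed coordinates fits inside. The cleanest way is probably to first intersect everything with the strip of width $O(MN)$ around the relevant portion of $\lambda_i$, note this strip meets only $O(MN)$ of the parallel lines and still only $O(M^4)$ non-parallel cells, and then use a pigeonhole / volume argument on a sufficiently long sub-segment of $\lambda_i$ to locate a point $p$ whose $\Omega(1/\poly)$-neighborhood avoids all non-parallel lines. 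The parallel lines are handled separately and more easily since they are all ``perpendicular'' to the split direction by construction. I would also need the minor technical fuss of perturbing $p$ off the finitely many lines parallel to $\lambda_i$ that happen to pass through the naive choice of $p$, but this costs only another $1/\poly$ shift and does not affect the bounds.
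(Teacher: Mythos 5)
There is a genuine gap, and it is in the second bullet (the lines parallel to $\lambda_i$). Your construction shrinks the image grid to a box of diameter $O(\varepsilon N)$ with $\varepsilon = 1/\poly(M,N)$ and centers it at a point $p$ on (or, after your proposed perturbation, \emph{near}) $\lambda_i$ inside one cell of the arrangement of the non-parallel lines. That can only handle the first bullet. For the second bullet you need \emph{every} line of $\Lambda$ parallel to $\lambda_i$ to pass between the images of $\{x>y\}$ and $\{x<y\}$. But two distinct parallel lines with integer coefficients bounded by $M$ are at distance at least $1/(M\sqrt{2})$ from each other, so a box of diameter $O(\varepsilon N)\ll 1/M$ can meet at most one member of the parallel class; every other parallel $\lambda_j$ leaves the whole image grid on a single side, and the separation requirement fails. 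This is not a degenerate corner case: e.g.\ for the threshold function the two boundary lines $x-y+\delta=0$ and $x-y-\delta=0$ are parallel, and your box can thread only one of them. Your suggested fix of ``perturbing $p$ off all the finitely many parallel lines'' goes in the wrong direction: you \emph{need} those lines to cross between the two half-grids, and perturbing off $\lambda_i$ by more than the box's transverse extent would break the bullet even for $\lambda_i$ itself. A smaller but real additional problem is the final ``clear denominators'' step: multiplying $\alpha,\beta,\gamma,\delta$ by a common integer $s$ changes $A_j(\alpha x+\gamma)+B_j(\beta y+\delta)+C_j$ into $s\bigl(A_j(\alpha x+\gamma)+B_j(\beta y+\delta)\bigr)+C_j$, so sidedness with respect to the fixed lines is not preserved.

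The paper's proof does essentially the opposite of shrinking: it makes the grid \emph{coarse} and \emph{far away}. It takes $\alpha=3MB_i$, $\beta=-3MA_i$ and $(\gamma,\delta)\in\lambda_i$, so the image of the diagonal $x=y$ lies exactly on $\lambda_i$ and the transverse grid step satisfies $|\alpha A_i|=|\beta B_i|>2M$, which exceeds the maximal difference of the constant terms; hence \emph{all} lines parallel to $\lambda_i$ fall strictly within one grid step of the diagonal and therefore separate the images of $\{x>y\}$ and $\{x<y\}$. The non-parallel lines are then handled by translating along $\lambda_i$ until both coordinates of the grid exceed roughly $2M^2+|\alpha| MN$, i.e.\ past every possible intersection point with $\lambda_i$ (and with its shifts across the grid), which pins the whole grid on one side of each non-parallel line and keeps $|\alpha|,|\beta|,|\gamma|,|\delta|\le\poly(M,N)$ with integer values. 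If you want to salvage your cell-based viewpoint, you would have to replace the tiny box by a grid whose transverse spacing is larger than $2M$ and argue the far-away placement anyway, at which point you have reproduced the paper's argument.
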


\begin{proof}
Pick $\lambda_i$ that is not axis-orthogonal, that is $A_i,B_i \not=0$. 

Let us denote the grid $\mathcal{G} = \{(\alpha x + \gamma, \beta y + \delta) : x,y \in [0 \dots N]\}$. To guarantee that main diagonal of $\mathcal{G}$ lies on $\lambda_i$, we need to have $\alpha = B_i \cdot k$ and $\beta = -A_i \cdot k$ for some nonzero integer $k$, and select values of $\gamma,\delta$ accordingly so that $(\gamma,\delta) \in \lambda_i$.

For non-parallel $\lambda_i,\lambda_j$, the coordinates of intersection point are:
$$x_{i,j} = - \begin{vmatrix}C_i & B_i \\ C_j & B_j\end{vmatrix} / \begin{vmatrix}A_i & B_i \\ A_j & B_j\end{vmatrix} \quad \quad y_{i,j} = - \begin{vmatrix}A_i & C_i \\ A_j & C_j\end{vmatrix} / \begin{vmatrix}A_i & B_i \\ A_j & B_j\end{vmatrix}.$$

To guarantee that whole $\mathcal{G}$ lies on the same side of $\lambda_j$, it is enough to make sure that all $4$ corners are on the same side. However, we observe that iff e.g. corners $(\gamma, \delta)$ and $(\alpha N + \gamma, \delta)$ are separated by $\lambda_j$, it means that for some $r \in [0,1]$ lines  $\lambda_j$ and $A_i (x-r\alpha N) + B_i y + C_i = 0$ (that is $\lambda_i$ shifted in $x$ by $+r \alpha N$) intersect on point with $x = \delta$. 
To satisfy the first condition of the lemma, it is enough if every point of the convex closure of $\mathcal{G}$ has $x$ coordinate with absolute value at least $2M^2 + |\alpha| M N$, since that is larger than any possible intersection point as described above (condition (a)). Similarly for $y$ coordinate it should be at least $2M^2 + |\beta| MN$ (condition (b)).

Take $\lambda_j$ parallel to $\lambda_i$, that is they differ only on value of $C$. We first make sure  that all such $\lambda_j$ fall between lines $\{(\alpha t + \gamma, \beta (t+1) + \delta) : t \in \mathbb{R}\}$ and $\lambda_i$ or $\lambda_i$ and $\{(\alpha (t+1) + \gamma, \beta t + \delta) : t \in \mathbb{R}\}$ (those lines are $\lambda_i$ ``shifted'' one step up or down in the grid), by making sure $\alpha$ and $\beta$ are large enough in absolute value. Indeed, it is enough to have $|\alpha A_i| = |\beta B_i| > 2 M$ being largest possible difference between two values of $C$. It is enough to select $k = 3M$, and $\alpha = 3MB_i, \beta = -3MA_i$.

We then select $\gamma$ and $\delta$ as smallest in absolute value points of $\lambda_i$ such that conditions (a) and (b) are satisfied. 
\end{proof}

\begin{proof}[Proof of Theorem~\ref{th:completeness2}]
Let us take the binary function $x\diamond y = \sum_i P_i(x,y) \cdot \indicator{A_i x + B_i y + C_i > 0}$ as in the theorem statement, assuming it is of the simplest form (no redundant terms and minimal number of summands possible).  We construct a reduction from Hamming distance to $\diamond$ by a series of intermediate operators. 

Let $d$ be the highest degree of any $P_1,P_2,\ldots$. Consider all the lines being borders of regions, that is $\lambda_i = \{ (u,v) : A_i u + B_i v + C = 0 \}$ (as elements of the continuous Euclidean plane). 

We now apply Lemma~\ref{lem:lines_lemma}, with $N = 3dM+2d$. Consider $F(x,y) \defeq (\alpha x + \gamma) \diamond (\beta y + \delta)$. Limited to $x,y \in [0 \dots N]$, $F(x,y)$ is piecewise linear of a much simpler form:
$$F(x,y) = Q_{>}(x,y) \cdot \indicator{x > y} + Q_{=}(x,y)\cdot \indicator{x = y} + Q_{<}(x,y)\cdot\indicator{x < y}$$
for $Q_{>},Q_{=},Q_{<}$ being polynomials of degree at most $d$, and $Q_{<}\not\equiv Q_{>}$. Let $D_x, D_y$ be the operators of discrete differentiation, that is $D_x F(x,y) \defeq F(x+1,y) - F(x,y)$, $D_y F(x,y) \defeq F(x,y+1) - F(x,y)$. There are integers $0 \le a,b \le d$ such that $D_x^a D_y^b (Q_{<}(x,y) - Q_{>}(x,y)) \equiv c$ for some constant $c \not= 0$. Thus if we consider the function:
$$G(x,y) \defeq \frac1c \cdot D_x^a D_y^b (F(x,y) - Q_{>}(x,y)),$$
it has the following properties on $x,y \in [0 \dots N-d]$: for $y-x > d$: $G(x,y) = 1$, and for $y - x < -d$: $G(x,y) = 0$. We observe that for $x,y \in [0 \dots M]$, there is
$\dom(x,y) = G(3d\cdot x,3d\cdot y+d).$
All in all, $\ham$ reduces to $\bigo(d^2)$ instances of $\diamond$ and a single evaluation of a fixed polynomial $Q_{>}(x,y)$, which reduces to $\bigo(d^2)$ multiplications.
\end{proof}

\section{Sparse matrix multiplication and \textsc{AllPairs}- problems}
\label{sec:smm}
We devote this section to proving Theorem~\ref{th:sparsity_hamming}. We start with the following reduction, which we believe to be a folklore result. Here, by 0/1 matrices we mean matrices with integer entries being either 0 or 1 (but all arithmetic is performed in the ring $\mathbb{Z}$). Let $|\cdot|$ denote the number of nonzero entries in a matrix/vector/set of entries, and $A_{i*}$ and $A_{*i}$ denote $i$-th row and $i$-th column of $A$, respectively.

\begin{lemma}[folklore]
\label{lem:01matrices}
Multiplication of (sparse) integer matrices has the same complexity as multiplication of (sparse) 0/1 matrices (up to $\poly\log n$ factors).
\end{lemma}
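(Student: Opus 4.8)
The plan is to prove Lemma~\ref{lem:01matrices} by showing both directions of the reduction, the nontrivial one being the expansion of an integer matrix into a collection of 0/1 matrices whose products reconstruct the original product. One direction is immediate: a 0/1 matrix is in particular an integer matrix, so multiplying 0/1 matrices is a special case of multiplying integer matrices, and sparsity is preserved verbatim. So the content is entirely in simulating an integer matrix product by 0/1 matrix products.

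\medskip

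\noindent\textbf{The expansion.}
First I would recall that all entries are integers bounded in absolute value by $\poly(n)$, so each entry fits in $b = \bigo(\log n)$ bits. Given $A$ (of size $a \times b$, with $m_1$ nonzero entries) I would write $A = \sum_{t=0}^{b-1} 2^t A^{(t)}$, where $A^{(t)}$ is the 0/1 matrix carrying the $t$-th bit of each entry; to handle signs, split $A$ into its positive part $A^+$ and negative part $A^-$ (so $A = A^+ - A^-$) and bit-decompose each, and do likewise for $B$. Each $A^{(t)}$ has at most $m_1$ nonzero entries (a nonzero bit can only occur where the entry itself is nonzero), and analogously for the $B^{(s)}$; so every one of the $\bigo(\log^2 n)$ pairwise products $A^{(t)} B^{(s)}$ is a product of 0/1 matrices with the required sparsity bounds. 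Then
$$
AB \;=\; \bigl(A^+ - A^-\bigr)\bigl(B^+ - B^-\bigr) \;=\; \sum_{t,s} 2^{t+s}\Bigl( A^{+(t)}B^{+(s)} - A^{+(t)}B^{-(s)} - A^{-(t)}B^{+(s)} + A^{-(t)}B^{-(s)} \Bigr),
$$
so $AB$ is recovered by $\bigo(\log^2 n)$ 0/1 matrix multiplications followed by $\bigo(\log^2 n)$ scalar-weighted additions of $a\times c$ matrices, the latter costing $\bigo(ac \cdot \log^2 n)$, which is absorbed into the $\poly\log n$ overhead (and is dominated by the cost of even reading the output). This establishes $\mathrm{time}(\text{integer}) \le \poly\log n \cdot \mathrm{time}(\text{0/1})$, and combined with the trivial direction gives the claimed equivalence.

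\medskip

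\noindent\textbf{Main obstacle.}
The only subtlety I expect is bookkeeping around the sparsity parameters: I must make sure the 0/1 instances produced have the \emph{same} sparsity bounds $m_1, m_2$ (up to constants) as the original, not merely ``some'' sparsity, since Theorem~\ref{th:sparsity_hamming} will invoke this lemma with specific values $m_1 = m_2 = nd$ and matrix dimensions like $n \times \min(d^2, nd)$. The bit-decomposition does preserve the support exactly, so $|A^{(t)}| \le |A| = m_1$ and $|B^{(s)}| \le |B| = m_2$ hold with no loss, which is the point; one should also note the dimensions $a,b,c$ are unchanged. A second minor point is that I am assuming the cost model in which the output must be written down, so the additive $\bigo(ac\,\poly\log n)$ term is never the bottleneck; if one instead wants the reduction purely in terms of the multiplication subroutine, one observes that $ac \le \sparse(a,b,c;m_1,m_2)$ trivially (one cannot multiply faster than reading the output), so the overhead is genuinely a $\poly\log n$ factor. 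Everything else is routine arithmetic in $\mathbb{Z}$, and since the paper works throughout in the ring $\mathbb{Z}$ there are no issues with modular reductions or carries interfering with the linearity of the decomposition.
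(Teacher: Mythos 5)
Your proposal is correct, and it rests on the same core decomposition as the paper's proof: bit-slice each matrix into 0/1 matrices $A^{(t)} = \mathrm{bit}_t(A)$, handle signs by splitting into nonnegative parts $A = A^+ - A^-$, and recover $AB$ from the slice products weighted by $2^{t+s}$; your observations that the support (hence $m_1,m_2$) and the dimensions are preserved, and that the $\bigo(ac\,\poly\log n)$ recombination cost is dominated by writing the output, are exactly the points that make the reduction go through. The only divergence is in packaging: you issue $\bigo(\log^2 n)$ separate 0/1 products of the original dimensions, whereas the paper stacks all bit-slices into a single product of two 0/1 matrices whose dimensions (and nonzero counts) grow by an $\bigo(\log n)$ factor, namely $A' \in \{0,1\}^{(n\log M)\times n}$ against $B' \in \{0,1\}^{n\times(n\log M)}$, reading off the blocks $A_i \times B_j$ from the result. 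Your version keeps the parameters of each $\sparse(a,b,c;m_1,m_2)$ call unchanged, which makes the sparsity bookkeeping you worry about entirely transparent; the paper's version produces one instance (convenient when a single matrix-product instance is to be fed into a downstream reduction) at the price of a log-factor blowup in dimension and sparsity, which is likewise absorbed into the $\poly\log n$ slack. Either way the lemma follows.
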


\begin{proof}
Consider the multiplication of two integer matrices with nonnegative entries $A \times B$, bounded in value by $M$. For integer $k$ we define $\textrm{bit}_k(x)$ to be the value of $k$-th bit of $x$. Denote $A_k = \textrm{bit}_k(A)$ to be the 0/1 matrix selecting $k$-th bit of $A$ entries, and $B_k = \textrm{bit}_k(B)$. Consider 

\begin{align*}
A' &= \begin{bmatrix} A_{0} \\ \vdots \\ A_{\log M} \end{bmatrix} \in \mathbb{Z}^{ (n \log M) \times n} \text{ and } B' = \begin{bmatrix} B_{0}&\cdots&B_{\log M} \end{bmatrix} \in \mathbb{Z}^{ n \times (n \log M)} \text{ and their product }\\
A' \times B' &= \begin{bmatrix} A_0 \times B_0& \cdots& A_0 \times B_{\log M}\\ \vdots& \ddots& \vdots\\ A_{\log M} \times B_0& \cdots& A_{\log M} \times B_{\log M} \end{bmatrix} \in \mathbb{Z}^{ (n \log M) \times (n \log M) }.
\end{align*}

Since $A = \sum_i 2^i A_i$ and $B = \sum_i 2^i B_i$, there is $A \times B = \sum_i \sum_j 2^{i+j} A_i \times B_j$, meaning that $A \times B$ follows from the product of two 0/1 matrices of dimensions that are larger by a factor of $\bigo(\log n)$. To get rid of the nonnegativity assumption, we can represent any integer matrices $A,B$ as $A=A_1 - A_2$ and $B=B_1-B_2$ where $A_1,A_2,B_1,B_2$ are nonnegative, and consider the product $\begin{bmatrix} A_1\\ A_2 \end{bmatrix} \times \begin{bmatrix} B_1 \ B_2 \end{bmatrix}$.
\end{proof}

An easy reduction shows that \AllPairsHammingShort reduces to the multiplication of sparse 0/1 rectangular matrices. This follows in spirit the ideas used in \cite{DBLP:conf/soda/Yuster09} (where it was used in the context of dominance), but instead of packing only the ``dense'' part of the computation into a matrix multiplication problem, we put it all.

\begin{lemma}[c.f. \cite{DBLP:journals/talg/YusterZ05}~Lemma 3.2.]
\label{yusterzwicklemma}
Consider multiplication $\sparse(n_1, N, n_2; m_1, m_2)$ of sparse matrices $A$ and $B$, and permutation $\pi$ is so that $|A_{*\pi(i)}| \cdot |B_{\pi(i)*}|$ are non-increasing with $i$. Then for any $1 \le \ell \le N$ there is $\sum_{j>\ell} |A_{*\pi(j)}| \cdot |B_{\pi(j)*}| \le \frac{m_1,m_2}{\ell}$.
\end{lemma}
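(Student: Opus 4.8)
The plan is to reduce the statement to the single inequality $\ell \sum_{j > \ell} |A_{*\pi(j)}|\,|B_{\pi(j)*}| \le m_1 m_2$ and to prove that by a double-counting argument. First I would write $c_j := |A_{*\pi(j)}|$ and $d_j := |B_{\pi(j)*}|$ for $1 \le j \le N$, so that the products $p_j := c_j d_j$ are non-increasing in $j$ by the choice of $\pi$, while $\sum_{j=1}^{N} c_j = m_1$ and $\sum_{j=1}^{N} d_j = m_2$ because each nonzero entry of $A$ lies in exactly one column and each nonzero entry of $B$ in exactly one row.

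The heart of the argument is a pointwise estimate comparing each ``tail'' product $p_j$ (with $j > \ell$) against each of the $\ell$ ``head'' indices $i \le \ell$: for $i \le \ell < j$ one has
$$
c_i d_j + c_j d_i \ \ge\ c_j d_j = p_j .
$$
This holds because $p_i \ge p_j$ forces $c_i \ge c_j$ or $d_i \ge d_j$ (if instead $c_i < c_j$ and $d_i < d_j$ then $p_i < p_j$, a contradiction); in the first case $c_i d_j \ge c_j d_j$, in the second $c_j d_i \ge c_j d_j$, and the omitted summand is nonnegative --- equivalently one may invoke AM--GM, $c_i d_j + c_j d_i \ge 2\sqrt{p_i p_j} \ge 2 p_j$. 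I would then sum this inequality over all $i \in \{1,\dots,\ell\}$ and $j \in \{\ell+1,\dots,N\}$. The right-hand side totals $\ell \sum_{j > \ell} p_j$. The left-hand side equals $\sum c_a d_b$, where $(a,b)$ ranges over the union of the two index-pair families $\{(i,j) : i \le \ell < j\}$ and $\{(j,i) : i \le \ell < j\}$; these are disjoint subsets of $\{1,\dots,N\}^2$, so, all terms being nonnegative, the sum is at most $\left(\sum_{a=1}^{N} c_a\right)\left(\sum_{b=1}^{N} d_b\right) = m_1 m_2$. Dividing by $\ell$ yields the claim.

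The main obstacle is recognising why the obvious approaches fall short: since $(c_j)_j$ and $(d_j)_j$ need not be individually monotone --- only their product is --- bounding the single term $p_\ell$ and summing the tail, or the Cauchy--Schwarz estimate $\sum_{j>\ell} p_j \le \sqrt{p_\ell}\sum_{j>\ell}\sqrt{c_j d_j} \le \sqrt{p_\ell\, m_1 m_2} \le m_1 m_2 / \sqrt{\ell}$, each lose a factor of $\sqrt{\ell}$. Recovering the full $1/\ell$ requires charging every tail product to all $\ell$ head indices through the cross-term inequality above, which is the counting trick underlying Lemma~3.2 of Yuster and Zwick~\cite{DBLP:journals/talg/YusterZ05}. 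Degenerate indices with $p_j = 0$ (all-zero columns of $A$ or rows of $B$) contribute nothing and need no separate treatment, and the case $\ell = N$ is vacuous.
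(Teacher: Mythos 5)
Your proof is correct. Note first that the paper itself gives no proof of this lemma --- it is imported by citation from Yuster and Zwick (their Lemma~3.2), and the displayed bound $\frac{m_1,m_2}{\ell}$ is a typo for $\frac{m_1 m_2}{\ell}$, which you correctly read through. Your route is genuinely different from the classical one: Yuster--Zwick work with square roots, observing $\sqrt{p_j}\le\frac1\ell\sum_{i\le\ell}\sqrt{p_i}\le\frac{\sqrt{m_1m_2}}{\ell}$ for $j>\ell$ (Cauchy--Schwarz on $\sum_i\sqrt{c_id_i}$) and then $\sum_{j>\ell}p_j\le\frac{\sqrt{m_1m_2}}{\ell}\sum_{j>\ell}\sqrt{p_j}\le\frac{m_1m_2}{\ell}$ by a second application of Cauchy--Schwarz, whereas you charge each tail product to every head index via the cross-term inequality $c_id_j+c_jd_i\ge c_jd_j$ (valid since $p_i\ge p_j$ forces $c_i\ge c_j$ or $d_i\ge d_j$) and absorb the resulting disjoint index pairs into the full product $\bigl(\sum_a c_a\bigr)\bigl(\sum_b d_b\bigr)=m_1m_2$. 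Your argument is entirely elementary (no square roots, integer-friendly) and all its steps check out, including the disjointness of the two pair families and the handling of zero columns/rows. The one inaccuracy is the closing side remark that ``the Cauchy--Schwarz estimate loses a factor of $\sqrt{\ell}$'': that is true only for the particular naive application you write down (bounding $p_\ell\le m_1m_2/\ell$ and then taking a square root); the Cauchy--Schwarz argument applied at the level of $\sqrt{p_j}$, as in Yuster--Zwick, recovers the full $1/\ell$. This does not affect the validity of your proof.
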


\begin{theorem}
\label{th:hamming_to_sparse}
\AllPairsHammingShort on vectors of dimension $d$ reduces deterministically to $\sparse(n,N,n;M;M)$ for some $M,N = \bigo(n d)$. Furthermore, for $d<n$, $N$ can be as small as $\bigo(d^2)$.
\end{theorem}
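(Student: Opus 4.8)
The plan is to give an explicit "expansion" of each Hamming-distance instance into a single sparse 0/1 matrix multiplication. Recall \AllPairsHammingShort on $n$ vectors $\mathbf{A}_1,\dots,\mathbf{A}_n$ and $\mathbf{B}_1,\dots,\mathbf{B}_n$, each of dimension $d$, asks for $O[i][j]=|\{k:\mathbf{A}_i[k]\neq \mathbf{B}_j[k]\}|$. The key identity is $O[i][j] = d - |\{k : \mathbf{A}_i[k]=\mathbf{B}_j[k]\}|$, so it suffices to compute the number of coordinate-wise \emph{matches}. The number of matches is exactly $\sum_k \indicator{\mathbf{A}_i[k] = \mathbf{B}_j[k]}$, which we realize as an inner product: for each vector $\mathbf{A}_i$ and each coordinate $k$, introduce the "token" $(k,\mathbf{A}_i[k])$, i.e.\ a pair encoding position and value. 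Let $U$ be the (finite) set of such tokens that actually occur in the input; since there are $nd$ entries in each side, $|U|\le 2nd$, and we can relabel $U$ as $\{1,\dots,N\}$ with $N\le 2nd$. Now define the 0/1 matrix $\mathcal{A}\in\{0,1\}^{n\times N}$ by $\mathcal{A}[i,(k,v)] = \indicator{\mathbf{A}_i[k]=v}$ and $\mathcal{B}\in\{0,1\}^{N\times n}$ by $\mathcal{B}[(k,v),j] = \indicator{\mathbf{B}_j[k]=v}$. Then $(\mathcal{A}\times\mathcal{B})[i,j] = \sum_{(k,v)} \indicator{\mathbf{A}_i[k]=v}\cdot\indicator{\mathbf{B}_j[k]=v} = |\{k : \mathbf{A}_i[k]=\mathbf{B}_j[k]\}|$, from which $O$ follows by subtracting from $d$.

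\textbf{Counting nonzeros.} Each row of $\mathcal{A}$ has exactly $d$ ones (one token per coordinate of $\mathbf{A}_i$), so $|\mathcal{A}| = nd$; symmetrically $|\mathcal{B}| = nd$. Hence $M = nd$ in both arguments, and we obtain a reduction to $\sparse(n, N, n; nd; nd)$ with $N\le 2nd$. The relabeling of tokens to $\{1,\dots,N\}$ is a trivial sort/hashing step costing $\widetilde{O}(nd)$, which is dominated by the output size $n^2\le (nd)^2$ of the matrix product (or by the product time itself), and the construction of $\mathcal{A},\mathcal{B}$ and the final subtraction $O[i][j] = d - (\mathcal{A}\mathcal{B})[i,j]$ are all $\widetilde{O}(n^2 + nd)$ deterministic work. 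So the reduction is deterministic.

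\textbf{The case $d<n$.} Here we want to shrink the middle dimension from $nd$ to $O(d^2)$. The observation is that for a \emph{fixed} coordinate $k$, the only tokens $(k,v)$ that can ever contribute are those values $v$ that appear as $\mathbf{A}_i[k]$ for some $i$ \emph{and} as $\mathbf{B}_j[k]$ for some $j$ — tokens appearing on only one side contribute $0$. A priori this does not yet bound things by $d^2$. The right move is to first rank-reduce: for each coordinate $k$ separately, replace the multiset of values $\{\mathbf{A}_i[k]\}_i\cup\{\mathbf{B}_j[k]\}_j$ (at most $2n$ numbers) by their ranks in $\{1,\dots,2n\}$; equality is preserved. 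This alone does not help. Instead, the standard trick (as in \cite{DBLP:conf/soda/Yuster09}) is: if $d<n$, for each coordinate $k$ there are at most $n$ distinct values, but we only need, per coordinate, \emph{one} token per distinct value that occurs on the $A$-side, and we can afford at most $d$ coordinates; more carefully, group identical columns. Actually the clean bound comes from noting that after rank-reduction within each coordinate, the number of \emph{distinct} tokens is at most $\sum_k (\text{number of distinct values in coordinate } k)$; to force this to $O(d^2)$ we observe that we may assume each coordinate has at most $d$ distinct values — if some coordinate $k$ has more, split it. I expect \textbf{this $d<n$ optimization to be the main obstacle}: it requires a careful accounting argument (essentially a bucketing of coordinates by the number of distinct values they contain, and re-encoding high-cardinality coordinates across multiple "virtual" coordinates) to simultaneously keep $N=O(d^2)$, keep the number of nonzeros $O(nd)$, and keep the reduction deterministic. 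The remaining parts — the token-encoding identity, the nonzero count, and the $d-(\mathcal{A}\mathcal{B})$ post-processing — are all routine.
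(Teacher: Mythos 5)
Your first part (the token encoding $(k,v)$, the 0/1 matrices with $nd$ nonzeros each, middle dimension $N=\bigo(nd)$, and the post-processing $O[i][j]=d-(\mathcal{A}\mathcal{B})[i,j]$) is correct and is essentially the paper's construction.

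For the second claim ($N=\bigo(d^2)$ when $d<n$) there is a genuine gap, and the route you sketch would not close it. Reducing values to ranks per coordinate, or insisting that each coordinate carry at most $d$ distinct values by ``splitting'' coordinates, cannot shrink the middle dimension: the number of distinct (position, value) tokens that occur on both sides can genuinely be $\Theta(nd)$ (e.g.\ all $n$ vectors pairwise disagreeing in every coordinate with $n$ distinct values per coordinate), and splitting a coordinate only redistributes its tokens without reducing their count. The missing idea is that one does not try to re-encode all tokens into $\bigo(d^2)$ columns; instead one \emph{discards} most of them and pays for the discarded part separately. Concretely, the paper invokes the Yuster--Zwick weight lemma (Lemma~\ref{yusterzwicklemma}): permute the $N$ middle indices so that $|\mathcal{A}_{*\pi(i)}|\cdot|\mathcal{B}_{\pi(i)*}|$ is non-increasing, keep only the first $\ell=d^2$ columns of $\mathcal{A}$ and rows of $\mathcal{B}$ for the call to $\sparse(n,d^2,n;nd,nd)$, and handle the tail directly: the total number of contributing pairs in the truncated indices is $\sum_{j>\ell}|\mathcal{A}_{*\pi(j)}|\cdot|\mathcal{B}_{\pi(j)*}|\le (nd)^2/d^2=n^2$, and each such pair adds one unit to a single output entry, so the tail is accounted for by explicit enumeration in $\bigo(n^2)$ deterministic time outside the matrix product. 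You correctly flagged this step as the main obstacle, but the accounting you anticipated (bucketing coordinates by cardinality) points in the wrong direction; the truncate-and-enumerate argument is what actually yields $N=\bigo(d^2)$ while keeping the reduction deterministic.
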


\begin{proof}
Let $U = \begin{bmatrix} u_1& \cdots&u_n \end{bmatrix}^T$ and $V=  \begin{bmatrix}v_1&\cdots&v_n \end{bmatrix}^T$ for $u_1,\ldots,u_n,v_1,\ldots,v_n \in \mathbb{Z}^d$. W.l.o.g. we assume that entries of those vectors are actually from $[2n]$ -- if it is otherwise, we can scan each coordinate separately and rename the entries.  Consider $A \in \{0,1\}^{n \times N}$ for $N = 2nd$, defined as such: $A[i, j + k \cdot d] = 1$ iff $U[i,j] = k+1$, for $1 \le i \le n$, $1 \le j \le d$ and $0 \le k < 2n$. Similarly, we construct $B \in \{0,1\}^{n \times N}$ from $V$.

We now observe that $C = A \times B^T$ allows us to compute $\mprod(\ham,U,V)$, since for any $i,j \in [n]$, $C[i,j] = d - \ham(u_i,v_j)$.

Now to reduce the value of $N$, we Lemma~\ref{yusterzwicklemma}: rearrange columns of $A$ and the rows of $B$ simultaneously so that $|A_{*i}| \cdot |B_{i*}|$ is non-increasing (this preserves the product). We then truncate $A$  to keep only the first $d^2$ columns and truncate $B$ to keep only the first $d^2$ rows. By Lemma~\ref{yusterzwicklemma}, the time needed to compute the contribution of the truncated rows/columns is $\bigo(|A| \cdot |B| / d^2) = \bigo(n^2)$. 
\end{proof}

Observe that regardless of whether one solves the \AllPairsHammingShort instance by adapting \cite{DBLP:conf/soda/Yuster09}, or by using Theorem~\ref{th:hamming_to_sparse} and \cite{DBLP:journals/talg/YusterZ05}, the resulting computation is roughly similar, thus it is no surprise that the resulting runtime is identical. We now present a converse argument, that the multiplication of \emph{arbitrary} sparse matrices is no harder than the corresponding \AllPairsHammingShort. Here the reduction is a little bit more tricky, since e.g.~the 0/1 matrices resulting from Theorem~\ref{th:hamming_to_sparse} have a combinatorial inner structure that arbitrary instances of matrix multiplication might not have.

\begin{theorem}
\label{th:sparse_to_hamming}
For $N \ge n$, the multiplication problem in $\sparse(n,N,n;N,N)$ reduces under a randomized (Las Vegas) reduction to an \AllPairsHammingShort instance with $\widetilde\bigo(n)$ vectors of dimension $\widetilde\bigo(N/n)$.
\end{theorem}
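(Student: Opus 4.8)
The plan is to reverse the reduction of Theorem~\ref{th:hamming_to_sparse}: there, an \AllPairsHammingShort instance was turned into a sparse $0/1$ matrix product by one-hot encoding each coordinate. Here I need to take an arbitrary sparse instance $A\in\mathbb{Z}^{n\times N}$, $B\in\mathbb{Z}^{N\times n}$ with $|A|,|B|\le N$ and realize $A\times B$ as (a small number of) Hamming-distance computations between $\widetilde\bigo(n)$ vectors of dimension $\widetilde\bigo(N/n)$. First I would invoke Lemma~\ref{lem:01matrices} to assume $A,B$ are $0/1$, at the cost of only a $\poly\log n$ blow-up in dimensions and number of nonzeros, so it suffices to compute, for each pair $(i,j)$, the number $C[i][j]=|\{k:A[i][k]=1 \text{ and } B[k][j]=1\}|$. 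The idea is to encode, for each row $i$ of $A$, its support $S_i=\{k:A[i][k]=1\}$, and for each column $j$ of $B$, its support $T_j=\{k:B[k][j]=1\}$, and then recover $|S_i\cap T_j|$ from Hamming distances of short vectors indexed by ``slots'' rather than by the full range $[N]$.

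The key device is a random bucketing of the index set $[N]$. The obstacle — and the reason the naive one-hot encoding of Theorem~\ref{th:hamming_to_sparse} does not already give the bound — is that an arbitrary sparse matrix can have a single heavy row of $A$ (or column of $B$) with up to $N$ nonzeros, so we cannot afford to store each support explicitly in a vector of length $\widetilde\bigo(N/n)$; the average support size is $N/n$, but individual ones may be far larger. To handle this I would first split indices $k\in[N]$ into $\bigo(\log N)$ classes by the ``degree'' $|A_{*k}|\cdot|B_{k*}|$ (powers of two), treating each class separately and recombining at the end. Within a class where each relevant $k$ has degree $\approx 2^t$, a counting argument (as in Lemma~\ref{yusterzwicklemma}) bounds how many such indices there are, and more importantly bounds how many of them can appear in any fixed row/column once we also randomly partition the rows $[n]$ of $A$ and the columns $[n]$ of $B$ into $\widetilde\bigo(1)$ groups so that within a group the total work is balanced. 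After this preprocessing, each surviving sub-instance has the property that every row support and every column support has size $\widetilde\bigo(N/n)$; now hash $[N]$ into $\widetilde\bigo(N/n)$ buckets so that, with high probability (Las Vegas: verify and re-randomize on collision), every support is collision-free, i.e.\ injective into buckets. Then a row $i$ becomes a length-$\widetilde\bigo(N/n)$ vector whose $b$-th entry is the unique $k\in S_i$ hashed to bucket $b$ (and a special don't-care symbol $\star$ otherwise), similarly for columns; and $|S_i\cap T_j|$ equals the number of buckets where the two vectors agree on a non-$\star$ value, i.e.\ (bucket count) $-\ \vprod(\ham,\cdot,\cdot)$ after using Lemma~\ref{lem:nonsparse_hamming} to remove the $\star$'s — but since distinct indices landing in the same bucket from the two different supports must be distinguished, I would additionally tag each entry with a $\poly\log$-bit fingerprint of $k$ so that agreement in a bucket is equivalent to $S_i$ and $T_j$ sharing that index.

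The steps in order: (1) reduce to $0/1$ matrices via Lemma~\ref{lem:01matrices}; (2) partition indices $[N]$ into $\bigo(\log N)$ degree classes and rows/columns into $\widetilde\bigo(1)$ balanced groups, reducing to sub-instances where every support has size $\widetilde\bigo(N/n)$; (3) for each sub-instance, randomly hash $[N]$ into $\widetilde\bigo(N/n)$ buckets, check injectivity of all supports (re-draw if it fails — this is the Las Vegas part), and form the short vectors with $\star$'s and index-fingerprints; (4) compute the pairwise Hamming distances, subtract from the bucket count to obtain $|S_i\cap T_j|$ for each sub-instance, and sum over all $\bigo(\log N)\cdot\widetilde\bigo(1)$ sub-instances to recover $C=A\times B$. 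The dimension of every Hamming instance is $\widetilde\bigo(N/n)$ and the number of vectors is $\widetilde\bigo(n)$, as required. The main obstacle is step~(2): making precise the balancing so that after partitioning, no row or column support exceeds $\widetilde\bigo(N/n)$ while the number of sub-instances stays polylogarithmic; here the per-index degree bound combined with the $O(N)$ total nonzero budget (so there are $\bigo(N/2^t)$ indices of degree $\approx 2^t$, hence $\bigo(N/n)$ of them per part when we use $\widetilde\bigo(1)$ parts) is exactly what makes it work, but the bookkeeping — especially ensuring the hashing in step~(3) succeeds with high probability simultaneously for all $\widetilde\bigo(n)$ supports and is cheaply verifiable — is where the care is needed.
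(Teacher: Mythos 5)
Your overall strategy (bucket the $N$ inner indices into $\widetilde\bigo(N/n)$ slots, store the index itself in its slot, and read off $|S_i\cap T_j|$ as slot-wise agreements, i.e.\ dimension minus Hamming distance) is the same as the paper's, but the way you handle heavy rows and collisions has a genuine gap. Your step (2) cannot deliver what you claim: partitioning the rows of $A$ (columns of $B$) into $\widetilde\bigo(1)$ groups never shrinks the support of any \emph{individual} row, and partitioning the indices $[N]$ by the degree $|A_{*k}|\cdot|B_{k*}|$ does not help either --- take $A$ with a single row containing $\Theta(N)$ ones and $B$ with one $1$ per such row index; every index has degree $1$, so all of them land in one degree class, and that one row still has support $\Theta(N)\gg N/n$ inside a single sub-instance. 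Even if all supports did have size $\widetilde\bigo(N/n)$, your step (3) demands that a random hash of a size-$s$ support into $\widetilde\bigo(N/n)\approx s$ buckets be injective; by the birthday bound this fails with overwhelming probability (you would need $\Omega(s^2)$ buckets), so the Las Vegas ``re-draw until collision-free'' loop does not terminate in expected polynomial time. The index-fingerprint tag is also redundant once the entry stores $k$ itself.

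The missing idea, and the paper's fix, is to \emph{tolerate} collisions and heavy rows by splitting rows (resp.\ columns) rather than trying to avoid collisions: pick one uniformly random map $\pi:[N]\to[d]$ with $d=N/n$, let $r_{i,k}$ be the number of ones of row $i$ mapped to bucket $k$, and split row $i$ into $c_i=\max_k r_{i,k}$ copies, placing the $t$-th colliding index in the $t$-th copy (unset cells get fresh non-matching values). A Chernoff bound gives $c_i=\bigo(\log n/\log\log n)\cdot\lceil |A_{i*}|/d\rceil$ w.h.p., so the total number of vectors is $\sum_i c_i=\widetilde\bigo(n+|A|/d)=\widetilde\bigo(n)$ even in the presence of a row with $\Theta(N)$ ones, and $(A\times B)[i,j]$ is recovered by summing $d$ minus the Hamming distances over the $c_i\cdot d_j$ pairs of copies. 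With this splitting device in place of your steps (2)--(3), the rest of your outline (0/1 reduction via the folklore lemma, agreement counting) matches the paper's proof.
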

\begin{proof}
Let $A,B$ be the input 0/1 matrices. W.l.o.g. $N$ is divisible by $n$, as if it is not the case, we round $N$ up to the nearest multiplicity of $n$ and pad $A$ and $B$ with zeroes accordingly. Denote $d = N/n$. As a first step, we pick uniformly $\pi \in [d]^{[N]}$, which we use to decide for columns of $A$ (rows of $B$) contribute to which columns of output $U$ (rows of $V$) they contribute.

We want to construct matrix $U$, such that if $A[i,j] = 1$ then $U[i,\pi(j)] = j$. However, such mapping might not be well defined, as there might be conflicts of the form $j_1,j_2$ such that $A[i,j_1] = A[i,j_2] = 1$ and $\pi(j_1) = \pi(j_2)$. We deal with conflicts by row-splitting.

Let $r_{i,k} = \sum_{j : \pi(j) = k} A[i,j]$  be the number of ones that are mapped to a given $i,k$ cell. Denote $c_i = \max_j r_{i,j}$. Then $i$-th row of $A$ is split into $c_i$ rows in $U$, i.e. rows $C_i+1, \ldots, C_i+c_i$ where $C_i = c_1 + \ldots + c_{i-1}$. Then, for any $i,j$ if $A[i,j]=1$, then $U[C_i + t, \pi(j)] = j$, where $A[i,j]$ was $t$-th value 1 cell among all $A[i,x]$ such that $\pi(x) = \pi(j)$. 

To complete the construction of $U$, any value not yet set after processing all of $A$ is assigned an unique value in its column.

Observe that since $\pi$ was picked at random we have that $\mathbb{E}[r_{i,j}] = |A_{i*}|/d$. By Chernoff bound, w.h.p. $|r_{i,j}| = \bigo(\log n/\log \log n) \cdot \lceil |A_{i*}| / d\rceil$. Denote by $c_i = \max_{j} |r_{i,j}|$. As we split the $i$-th row of $A$ into $c_i$ rows in $U$, we bound the total number of rows in $U$ as
$\sum_i c_{i} = \bigo(\log n/\log \log n) \cdot \sum_{i} \lceil |A_{i*}| / d\rceil = \bigo(\log n/\log \log n) \cdot (n + |A|/d) = \bigo(n \log n/\log \log n)$.

The construction of $V$ from $B$ follows, switching row and column roles in the presented reduction.

 Let the $i$-th row of $A$ is mapped to rows $C_i+1, \ldots, C_i+c_i$ in $U$, and the $j$-th column of $B$ is mapped to columns $D_j+1, \ldots, D_j + d_j$ in $V$. It follows that $\mprod(\ham,U,V)$ encodes $A \times B$ since
 $$(A \times B)[i,j] = \sum_{a=1}^{c_i} \sum_{b=1}^{d_j} \left(d-\mprod(\ham,U,V)[C_i + a, D_j + b]\right).$$

To finish the argument, we observe that if $A,B$ are provided in a compressed form (which they need to, as an explicit representation is already too large), the $U$ and $V$ can be generated without any significant additional computational overhead, in time $\widetilde\bigo(|A|+|B|+N)$.
\end{proof}

We can use the same techniques to derive a relation between \AllPairsHammingShort on \emph{sparse inputs} with sparse matrix multiplication. We obtain the following:

\begin{corollary}
\label{cor:apham_sparse}
\AllPairsHammingShort on inputs $A$,$B$ of size $n$, with $m_1$ and $m_2$ relevant entries, respectively, takes $\bigo(\sparse(n, \frac{m_1m_2}{n^2}, n; m_1, m_2))$ time. 
\end{corollary}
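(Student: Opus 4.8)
The plan is to mirror the proof of Theorem~\ref{th:hamming_to_sparse}, with three modifications: (i) make the $0/1$-encoding \emph{sparsity preserving}, so that the two matrices have $\bigo(m_1)$ and $\bigo(m_2)$ nonzero entries rather than $\bigo(nd)$; (ii) add a small gadget that counts, for each pair $(i,j)$, the coordinates on which \emph{both} vectors carry a non-$\star$ value, since with don't-care marks the count $d$ is no longer the same for every pair; and (iii) apply the Yuster--Zwick reordering of Lemma~\ref{yusterzwicklemma} with truncation threshold $\ell = m_1 m_2/n^2$ in place of $d^2$.

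Concretely, let the inputs be $U = [u_1 \cdots u_n]^T$ and $V = [v_1 \cdots v_n]^T$ with $u_i,v_j \in (\mathbb{Z}\cup\{\star\})^n$, and let $m_1,m_2$ be the numbers of non-$\star$ entries of $U,V$. After a coordinate-wise renaming (in $\widetilde{\bigo}(m_1+m_2+n^2)$ time, which will be dominated) we may assume all non-$\star$ values lie in $[2n]$. Build $A$ with columns indexed by pairs $(k,v)$, $k\in[n]$, $v\in[2n]$, and by singletons $(k)$, $k\in[n]$, setting $A[i,(k,v)] = \indicator{u_i[k]=v}$ and $A[i,(k)] = \indicator{u_i[k]\ne\star}$; build $B$ on the same column set with $B[j,(k,v)] = -\indicator{v_j[k]=v}$ and $B[j,(k)] = \indicator{v_j[k]\ne\star}$. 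Then $A$ has at most $2m_1$ nonzeros, $B$ at most $2m_2$ (with entries in $\{-1,0,1\}$), both have $n$ rows and $\bigo(n^2)$ columns, and a direct expansion gives $(A\times B^T)[i,j] = \sum_k \indicator{u_i[k]\ne\star}\indicator{v_j[k]\ne\star} - \sum_k \indicator{u_i[k]=v_j[k]\ne\star} = \sum_k \indicator{u_i[k]\ne\star}\indicator{v_j[k]\ne\star}\indicator{u_i[k]\ne v_j[k]} = \ham(u_i,v_j)$. The signs in $B$ are eliminated at the cost of a constant blow-up by Lemma~\ref{lem:01matrices} (split $B$ into its positive and negative parts), yielding a genuine instance of $\sparse(n,\bigo(n^2),n;\bigo(m_1),\bigo(m_2))$.

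It remains to shrink the inner dimension. Reorder the columns of $A$ and the rows of $B^T$ simultaneously so that $|A_{*l}|\cdot|B^T_{l*}|$ is non-increasing, keep only the first $\ell = \lceil m_1 m_2/n^2\rceil$ of them (note $\ell = \bigo(n^2)$ since $m_1,m_2\le n^2$), and compute the contribution of the discarded columns directly: by Lemma~\ref{yusterzwicklemma} this contribution involves $\sum_{l>\ell}|A_{*l}|\cdot|B^T_{l*}| \le m_1 m_2/\ell = \bigo(n^2)$ elementary updates to the $n\times n$ output, hence $\bigo(n^2)$ extra time. The product that remains is an instance of $\sparse(n,m_1 m_2/n^2,n;m_1,m_2)$, and since this is $\Omega(n^2)$ (the $n^2$ output entries are all genuinely required), the additive $\widetilde{\bigo}(m_1+m_2+n^2)$ overhead from renaming, from generating the sparse matrices out of a compressed encoding of $U,V$ (done exactly as in the proof of Theorem~\ref{th:sparse_to_hamming}), and from the discarded columns is absorbed; the total running time is $\bigo(\sparse(n,m_1m_2/n^2,n;m_1,m_2))$.

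The only genuinely new point relative to Theorem~\ref{th:hamming_to_sparse} — and the step I expect to require the most care — is arranging the construction to be simultaneously (a) correct in the presence of $\star$, which is what forces the singleton columns and the negative sign, and (b) sparse, i.e.\ with $\bigo(m_1),\bigo(m_2)$ nonzeros rather than $\bigo(nd)$. Once that encoding is in place, the choice of truncation threshold is dictated exactly as in Theorem~\ref{th:hamming_to_sparse}, by balancing the Yuster--Zwick residual work $m_1m_2/\ell$ against the unavoidable $\Omega(n^2)$ output cost, which gives $\ell = m_1 m_2/n^2$.
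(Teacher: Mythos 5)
Your proposal is correct and follows essentially the same route as the paper: a sparsity-preserving $0/1$ encoding in the spirit of Theorem~\ref{th:hamming_to_sparse}, a correction term counting aligned relevant entries, and a reduction of the inner dimension to $m_1 m_2/n^2$ via Lemma~\ref{yusterzwicklemma} at an extra $\bigo(n^2)$ cost. Your signed single-product formulation, once split into positive and negative parts, is exactly the paper's pair of sparse multiplications (one for matches, one for aligned relevant entries), so the difference is only presentational.
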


\begin{proof}
Following the reasoning from the proof of Theorem~\ref{th:hamming_to_sparse}, we construct an instance of sparse matrix multiplication with parameters $\sparse(n,N,n;m_1,m_2)$ for some large integer $N$. However, to go from counting ``matches'' which this 0/1 matrix multiplication does, to counting mismatches, we need to count the number of aligned relevant entries between $A$ and $B$. This is done with a single multiplication of sparse matrices in time $\sparse(n, d, n; m_1, m_2)$. By Lemma~\ref{yusterzwicklemma} dimension $d$ is reduced to $m_1 m_2/n^2$ at the cost of $\bigo(n^2)$ additional computation.
\end{proof}

We also obtain explanation why for problems like \AllPairsHammingShort we observe similarly looking trade-offs for $d$ vs. time and for sparsity vs. time (since with proper fixing of parameters, they both reduce to the same type of sparse matrix multiplication instances).
\begin{corollary}
For $d\le n$, complexity of \AllPairsHammingShort on $n$ vectors of dimension $d$ is within polylog factor from \AllPairsHammingShort on $n$ vectors with $n\cdot d$ relevant entries.
\end{corollary}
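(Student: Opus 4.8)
The plan is to combine the two directions established in this section to route both problems through the common intermediate of sparse rectangular matrix multiplication. Concretely, the final corollary claims that, for $d \le n$, \AllPairsHammingShort on $n$ vectors of dimension $d$ and \AllPairsHammingShort on $n$ vectors with $n\cdot d$ relevant entries have the same complexity up to polylog factors. The natural strategy is to show each side reduces to the other through sparse matrix multiplication instances with matching parameters, invoking Theorem~\ref{th:hamming_to_sparse}, Theorem~\ref{th:sparse_to_hamming}, and Corollary~\ref{cor:apham_sparse}.

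First I would handle the easy direction: an \AllPairsHammingShort instance on $n$ vectors of dimension $d$ trivially yields an instance on $n$ vectors with at most $n\cdot d$ relevant entries, since the dense input literally has $nd$ entries total (and padding with $\star$-marks, which by Lemma~\ref{lem:nonsparse_hamming} do not increase complexity, lets us view it as a ``sparse'' instance with exactly $nd$ relevant entries). So that direction is essentially definitional. For the converse, I would start from \AllPairsHammingShort on $n$ vectors with $m_1 = m_2 = n\cdot d$ relevant entries. By Corollary~\ref{cor:apham_sparse}, this reduces to $\sparse(n, \frac{m_1 m_2}{n^2}, n; m_1, m_2) = \sparse(n, d^2, n; nd, nd)$. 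Now I would push this sparse matrix multiplication instance back to a dense-dimension \AllPairsHammingShort instance: Theorem~\ref{th:sparse_to_hamming} reduces $\sparse(n, N, n; N, N)$ to \AllPairsHammingShort on $\widetilde\bigo(n)$ vectors of dimension $\widetilde\bigo(N/n)$. To apply it with $N$ in the role of both the inner dimension and the sparsity parameter, I would observe that $\sparse(n, d^2, n; nd, nd)$ is no harder than $\sparse(n, nd, n; nd, nd)$ (embed the $d^2$-dimensional middle into an $nd$-dimensional one, padding with zero rows/columns; since $d \le n$ we have $d^2 \le nd$, so this is a valid enlargement), and then Theorem~\ref{th:sparse_to_hamming} with $N = nd$ gives \AllPairsHammingShort on $\widetilde\bigo(n)$ vectors of dimension $\widetilde\bigo(nd/n) = \widetilde\bigo(d)$, which is exactly the dense instance we wanted.

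The main obstacle I anticipate is bookkeeping the parameter alignment carefully: Corollary~\ref{cor:apham_sparse} produces a sparse instance with inner dimension $d^2$ rather than $nd$, while Theorem~\ref{th:sparse_to_hamming} is stated with the inner dimension equal to the number of nonzeros $N$, so one needs the monotonicity observation that decreasing the middle dimension (while keeping the nonzero count fixed) cannot increase the cost, hence one may as well take $N = \max(d^2, nd) = nd$. A second point of care is that Theorem~\ref{th:sparse_to_hamming} is a randomized Las Vegas reduction and Corollary~\ref{cor:apham_sparse} and Theorem~\ref{th:hamming_to_sparse} are deterministic, so the composite reduction is Las Vegas; this should be stated but is not a real difficulty. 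Finally one should double-check that all the $\widetilde\bigo$ factors are genuinely polylogarithmic in $n$ (the vector counts blow up by $\bigo(\log n/\log\log n)$ in Theorem~\ref{th:sparse_to_hamming}, and the dimension bounds by further polylog factors), so that the final equivalence is ``within a polylog factor'' as claimed. None of these steps requires new ideas beyond what is already in the section; the corollary is really a repackaging of Theorems~\ref{th:hamming_to_sparse}, \ref{th:sparse_to_hamming} and Corollary~\ref{cor:apham_sparse} with the specialization $m_1 = m_2 = nd$.
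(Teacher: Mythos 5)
Your argument is correct and matches what the paper leaves implicit; the corollary follows precisely from the observation that both the dense instance (via Theorem~\ref{th:hamming_to_sparse}) and the sparse-input instance (via Corollary~\ref{cor:apham_sparse}) reduce to the same $\sparse(n,\min(d^2,nd),n;nd,nd)$ problem, which then maps back to a dense instance of dimension $\widetilde\bigo(d)$ by Theorem~\ref{th:sparse_to_hamming}. The padding observation ($d^2 \le nd$ so one may enlarge the inner dimension with zero rows/columns at no cost) and the note that the composite reduction inherits the Las Vegas character of Theorem~\ref{th:sparse_to_hamming} are the right bookkeeping details, and the trivial embedding handles the easy direction exactly as intended.
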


We linked the complexity of \AllPairsHammingShort and sparse matrix multiplication. However, to improve the current upperbounds, one needs to improve the sparse matrix multiplication upperbound for almost square matrices -- this follows from allocating $\bigo(n^{\rho-\varepsilon})$ runtime for row/column elimination procedure.
\begin{corollary}
Any improvement to the exponent of $\sparse(n, n^{4 - \rho + \varepsilon}, n; n^2, n^2)$ beyond \cite{DBLP:journals/talg/YusterZ05} method runtime would improve the exponent of \AllPairsHammingShort.
\end{corollary}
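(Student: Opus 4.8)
The plan is to chain the ``expansion'' reduction of Theorem~\ref{th:hamming_to_sparse} with the column/row elimination of Lemma~\ref{yusterzwicklemma}, and then to observe that what is left after the elimination is precisely an instance of $\sparse(n, n^{4-\rho+\varepsilon}, n; n^2, n^2)$, while everything that was eliminated has been paid for within $\bigo(n^{\rho-\varepsilon})$ time. Throughout, ``current best bound for \AllPairsHammingShort'' means the $\widetilde\bigo(n^\rho)$ bound obtained via Corollary~\ref{cor:intro2} and \cite{DBLP:conf/soda/Yuster09}, which is itself an instance of the \cite{DBLP:journals/talg/YusterZ05} heavy-column/light-column method.

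First I would instantiate Theorem~\ref{th:hamming_to_sparse} (equivalently Theorem~\ref{th:sparsity_hamming}) with $d=n$: the $0/1$ matrices $A,B$ it outputs have dimensions $n\times 2n^2$, and each of their rows carries exactly one nonzero per coordinate of the underlying input vector, so $|A|=|B|=n^2$; hence \AllPairsHammingShort with $d=n$ reduces deterministically to an instance of $\sparse(n, 2n^2, n; n^2, n^2)$, together with an $\bigo(n^2)$ post-processing step converting match counts into Hamming distances. Next I would reorder the columns of $A$ and the rows of $B$ simultaneously so that $|A_{*i}|\cdot|B_{i*}|$ is non-increasing, fix $\ell \defeq n^{4-\rho+\varepsilon}$, and split $A\times B^T$ into the contribution of the first $\ell$ columns/rows and that of the rest. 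By Lemma~\ref{yusterzwicklemma} the second contribution is computable by brute force in time $\bigo(|A|\cdot|B|/\ell)=\bigo(n^4/n^{4-\rho+\varepsilon})=\bigo(n^{\rho-\varepsilon})$, which is $o(n^\rho)$ for every fixed $\varepsilon>0$; what remains is a single sparse product with inner dimension $\ell$ and at most $n^2$ nonzeros on either side, i.e.\ an instance of $\sparse(n, n^{4-\rho+\varepsilon}, n; n^2, n^2)$.

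The last step is to pin down the right yardstick. The \cite{DBLP:journals/talg/YusterZ05} method on $\sparse(n, n^{4-\rho+\varepsilon}, n; n^2, n^2)$ runs in time $\widetilde\bigo(n^\rho)$: it densifies the $n^{4-\rho}$ heaviest columns/rows --- feasible since $n^{4-\rho}\le n^{4-\rho+\varepsilon}$ --- at the cost of a rectangular product of time $n^{\omega(1,\,4-\rho,\,1)}=n^{\rho}$ (using $\rho=\omega(1,4-\rho,1)$), and clears the rest naively in $\bigo(n^4/n^{4-\rho})=\bigo(n^\rho)$ by Lemma~\ref{yusterzwicklemma}; this matches the current best bound $\widetilde\bigo(n^\rho)$ for \AllPairsHammingShort. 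Consequently, if some algorithm solved $\sparse(n, n^{4-\rho+\varepsilon}, n; n^2, n^2)$ in time $\bigo(n^{\rho-\delta})$ for a $\delta>0$, plugging it into the reduction above would solve \AllPairsHammingShort with $d=n$ in time $\bigo(n^2 + n^{\rho-\varepsilon}+n^{\rho-\delta})=\bigo(n^{\rho-\min(\varepsilon,\delta)})$, improving its exponent; by Corollary~\ref{cor:intro2} the same improvement transfers to the other \textsc{AllPairs}- problems.

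The main point that needs care is at the bookkeeping level rather than conceptual: the elimination budget must make $\ell$ simultaneously small enough that the brute-force clean-up costs \emph{strictly} less than $n^\rho$ --- which is exactly why $\varepsilon>0$ must be strict, since $\ell=n^{4-\rho}$ would only break even --- and large enough that the \cite{DBLP:journals/talg/YusterZ05} optimum of $n^{4-\rho}$ heavy columns still fits inside the reduced instance, so that ``improving beyond the \cite{DBLP:journals/talg/YusterZ05} method runtime on the reduced instance'' is indeed the meaningful hypothesis. Everything else is a direct composition of Theorem~\ref{th:hamming_to_sparse} and Lemma~\ref{yusterzwicklemma}, so I do not anticipate a genuine obstacle beyond this.
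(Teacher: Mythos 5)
Your argument is correct and is exactly the paper's intended justification: the paper proves this corollary only by the remark that one ``allocates $\bigo(n^{\rho-\varepsilon})$ runtime for the row/column elimination procedure,'' i.e.\ compose Theorem~\ref{th:hamming_to_sparse} (with $d=n$, so $|A|=|B|=n^2$) with Lemma~\ref{yusterzwicklemma} at cutoff $\ell=n^{4-\rho+\varepsilon}$, leaving precisely a $\sparse(n,n^{4-\rho+\varepsilon},n;n^2,n^2)$ instance whose improvement would beat $n^\rho$. Your write-up fills in the same steps (including checking that the Yuster--Zwick method on the reduced instance matches the current $\widetilde\bigo(n^\rho)$ bound), so there is nothing to correct.
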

Current bounds imply that one needs to improve $\sparse(n, n^{1.3167 + \varepsilon}, n; n^2, n^2)$.

\section{Conclusion}
\label{sec:concl}
There are several immediate applications of Theorem~\ref{th:completeness1} and Theorem~\ref{th:completeness2}. 
The first one is that the improvement to \AllPairsDominanceShort from \cite{DBLP:conf/soda/Yuster09} translates to other \textsc{AllPairs} problems:
\begin{corollary}  \AllPairsDominanceShort, \AllPairsLOneDistancesShort, \AllPairsLoddDistancesShort, \AllPairsThresholdShort, \AllPairsHammingShort and $(+,\min)$-\textsc{MatrixProduct} are solvable in time $\widetilde\bigo(n^\rho)$, where $\rho \le 2.6834$ is a solution to $\rho = \omega(1, 4-\rho, 1)$.
\end{corollary}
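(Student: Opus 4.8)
The plan is to combine the two technical reductions of the previous sections. First I would invoke the completeness direction (Theorem~\ref{th:completeness1} together with Theorem~\ref{th:completeness2}, packaged as Corollary~\ref{cor:intro2}) to reduce each of \AllPairsDominanceShort, \AllPairsLOneDistancesShort, \AllPairsLoddDistancesShort (for constant $p$), \AllPairsThresholdShort and $(+,\min)$-\textsc{MatrixProduct} to \AllPairsHammingShort with only a $\poly\log n$ blow-up in the number of instances. Indeed, each of these score functions is a piecewise polynomial of constant degree with a constant (for $\ell_1,\dom,\thr$) or $\poly\log n$ (for $\thr_\delta$, since $\delta \le \poly(n)$) number of summands and none of them is axis-orthogonal, so Theorem~\ref{th:completeness1} applies and the reduction target is \AllPairsHammingShort.

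The second ingredient is that \AllPairsHammingShort with $d = n$ is itself solvable in $\widetilde\bigo(n^\rho)$ time. For this I would point to Yuster's algorithm~\cite{DBLP:conf/soda/Yuster09}: either one notes that the reduction from \AllPairsHammingShort to sparse matrix multiplication from Theorem~\ref{th:hamming_to_sparse} produces an instance of $\sparse(n,\bigo(n^2),n;\bigo(n^2),\bigo(n^2))$ which, solved via Yuster--Zwick~\cite{DBLP:journals/talg/YusterZ05}, runs in $\widetilde\bigo(n^\rho)$ time; or, more directly, one recalls that \AllPairsDominanceShort itself is in $\bigo(n^\rho)$ by~\cite{DBLP:conf/soda/Yuster09} and \AllPairsHammingShort reduces to it via Corollary~\ref{cor:intro2}. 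Either way, \AllPairsHammingShort $\in \widetilde\bigo(n^\rho)$.

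Putting the two together: composing the one-to-polylog reduction (first paragraph) with the $\widetilde\bigo(n^\rho)$ algorithm for \AllPairsHammingShort (second paragraph) gives a $\poly\log n \cdot \widetilde\bigo(n^\rho) = \widetilde\bigo(n^\rho)$ algorithm for every problem in the list, which is exactly the claim. The only mild subtlety — and the step I would be most careful about — is bookkeeping the parameters of the reductions: one must check that the instances produced by Theorem~\ref{th:completeness1} are genuine \AllPairsHammingShort instances on $\bigo(n)$ vectors of dimension $\bigo(n)$ with entries still bounded by $\poly(n)$ (so that $\rho$ does not secretly depend on the blown-up dimension), and that for \AllPairsThresholdShort the dependence on $\delta$ stays polylogarithmic. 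None of this is deep, but it is where a careless argument could lose a polynomial factor, so that is the part I would write out in full.
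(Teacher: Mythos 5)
Your proposal is correct and follows essentially the same route as the paper, which presents this corollary as an immediate consequence of the completeness reductions (Theorems~\ref{th:completeness1} and~\ref{th:completeness2}, i.e.\ Corollary~\ref{cor:intro2}) combined with Yuster's $\bigo(n^\rho)$ algorithm for \AllPairsDominanceShort. One tiny inaccuracy that does not affect the argument: $\thr_\delta$ has only two summands (not $\poly\log n$ many), since $\delta$ enters only through the constants $C_i$, which are $\poly(n)$-bounded as required.
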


Observe that the reductions we presented map $\star$ to $\star$. Thus, e.g.~by \cite{vassilevska2008efficient},\cite{DBLP:journals/toc/VassilevskaWY09} and \cite{DBLP:conf/soda/DuanP09}, we immediately get that all considered \textsc{AllPairs} problems are of the same complexity even on sparse inputs, up to a $\poly\log n$ multiplicative term and a $\sparse(n,n,n;m_1,m_2)$ additive term.
\begin{corollary}
Consider sparse inputs where we denote by $m_1$ and $m_2$ the number of entries in $A$ and $B$ that contribute to the score, where $A$ and $B$ are matrices of $n$ vectors of dimension $n$. \AllPairsDominanceShort, \AllPairsLOneDistancesShort, \AllPairsLoddDistancesShort, \AllPairsThresholdShort, \AllPairsHammingShort and $(+,\min)$-\textsc{MatrixProduct} are solvable in time
$\widetilde\bigo(\min(n^\omega +\sqrt{m_1 m_2} \cdot n^{\frac{\omega - 1}{2}},n^2+(m_1 m_2)^{\frac{\omega - 2}{\omega - \alpha - 1}} n^{\frac{2 - \alpha \omega}{\omega - \alpha - 1}})).$
\end{corollary}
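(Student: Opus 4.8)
The statement to prove is the final Corollary, which transfers the sparse-input upper bound for \AllPairsDominanceShort (coming from \cite{vassilevska2008efficient}, \cite{DBLP:journals/toc/VassilevskaWY09}, \cite{DBLP:conf/soda/DuanP09}) to all the other \textsc{AllPairs}- problems in the list, with a $\poly\log n$ multiplicative loss and a $\sparse(n,n,n;m_1,m_2)$ additive loss. The plan is to combine the completeness results of Section~\ref{sec:hdc} with the observation, already emphasized in the paper, that all of the reductions in Corollary~\ref{cor:intro2} preserve the $\star$ (``don't care'') marks, hence they preserve the sparsity of the input.

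\textbf{Step 1: Reduce everything to \AllPairsDominanceShort.} By Corollary~\ref{cor:intro2}, each of \AllPairsLOneDistancesShort, \AllPairsLoddDistancesShort, \AllPairsThresholdShort, \AllPairsHammingShort and $(+,\min)$-\textsc{MatrixProduct} reduces under one-to-polylog reductions to \AllPairsDominanceShort (going through Hamming distance via Theorem~\ref{th:completeness1}, and from Hamming back to any non-axis-orthogonal target, in particular dominance, via Theorem~\ref{th:completeness2}; dominance itself is non-axis-orthogonal). So it suffices to show that every instance produced by these reductions is \emph{still sparse}, i.e.\ has $\widetilde\bigo(m_1)$ and $\widetilde\bigo(m_2)$ relevant entries. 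This is where the $\star$-preservation matters: in Definition~\ref{def:reduction} the maps $f_i,g_i$ are applied coordinate-wise, and by the arithmetic conventions fixed in the Warm-up section, $f_i(\star)=\star$; moreover all the concrete reductions (scaling, filtering, the decomposition in Theorem~\ref{th:monomials}, Lemma~\ref{lem:weighted_hamming}, Lemma~\ref{th:polynomials_are_easy}, and the linear substitutions in Theorem~\ref{th:completeness2}) only ever turn a non-$\star$ entry into possibly-$\star$, never a $\star$ into a number. Hence each of the $\poly\log n$ dominance instances has at most $m_1$ (resp.\ $m_2$) relevant entries, and summing $\poly\log n$ runs of a dominance algorithm costs only a $\poly\log n$ factor.

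\textbf{Step 2: Plug in the sparse dominance bound.} Apply the Vassilevska(--Williams--Yuster)/Duan--Pettie sparse dominance-product algorithm (quoted in Section~\ref{sec:related}) to each of the $\poly\log n$ instances. Its runtime on inputs with $m_1,m_2$ relevant entries is exactly $\bigo(\min(n^\omega+\sqrt{m_1 m_2}\cdot n^{(\omega-1)/2},\, n^2+(m_1 m_2)^{(\omega-2)/(\omega-\alpha-1)} n^{(2-\alpha\omega)/(\omega-\alpha-1)}))$. Multiplying by the $\poly\log n$ number of instances yields the claimed $\widetilde\bigo(\cdot)$ bound. (The additive $\sparse(n,n,n;m_1,m_2)$ term mentioned in the surrounding text is absorbed here: it bounds the cost of the one auxiliary dense-ish 0/1 product needed to count aligned relevant entries, exactly as in Corollary~\ref{cor:apham_sparse}; since the sparse dominance algorithm already contains such a product, its runtime dominates $\sparse(n,n,n;m_1,m_2)$.)

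\textbf{Main obstacle.} The only subtle point is bookkeeping the sparsity through the reductions: one must check that none of the building blocks inflates the number of relevant entries by more than a constant, and in particular that the ``$(\ast)$'' single-argument polynomial corrective terms (as in Theorems~\ref{th:Dominance_to_Hamming} and~\ref{th:completeness1}) do not secretly require filling in all entries. This is handled because those corrective terms reduce, by Lemma~\ref{th:polynomials_are_easy}, to ordinary (sparse) multiplications, which themselves transfer to the sparse-dominance framework at no extra cost; and the linear input substitutions $u=-Ax$, $v=By+C$ in Theorem~\ref{th:completeness2} are bijections on $\mathbb{Z}$ fixing $\star$, so they preserve $|A|$ and $|B|$ exactly. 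Everything else is a routine composition of the already-proved reductions.
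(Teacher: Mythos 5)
Your proposal is correct and matches the paper's own (very brief) justification: the paper proves this corollary exactly by observing that all presented reductions map $\star$ to $\star$, hence preserve the number of relevant entries, and then invoking the sparse dominance-product bounds of \cite{vassilevska2008efficient}, \cite{DBLP:journals/toc/VassilevskaWY09} and \cite{DBLP:conf/soda/DuanP09} on each of the $\poly\log n$ resulting instances. Your additional bookkeeping (that the corrective single-argument terms and the linear substitutions do not inflate sparsity, and that the auxiliary $\sparse(n,n,n;m_1,m_2)$ product is dominated by the stated bound) only fleshes out what the paper leaves implicit.
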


Since our reductions preserve the dimension of the problems, any tradeoff between $d \ll n$ and the runtime translates to all other problems as well, with a $\poly\log n$ multiplicative term and a $\widetilde\bigo(n^\omega)$ additive term. One can improve the runtime of the algorithm presented in \cite{DBLP:conf/cocoon/MinKZ09} using the trick of batch-processing via rectangular matrix multiplication in \cite{DBLP:conf/soda/Yuster09}, as done for Dominance Product in \cite{DBLP:conf/isaac/GoldS17}, to obtain the following time complexity:
\begin{corollary}
For $n$ vectors of dimension $d = n^s$ for $0 \le s \le 1$, \AllPairsDominanceShort, \AllPairsLOneDistancesShort, \AllPairsLoddDistancesShort, \AllPairsThresholdShort, \AllPairsHammingShort and $(+,\min)$-\textsc{MatrixProduct} are solvable in time
$\widetilde\bigo(n^{\rho(s)})$
where $\rho(s) = \inf\{ x : 2 \le x \le 3 \text{ and } \omega(1,2+2s-x,1) \ge x\}$.
 In particular, for $d = \bigo(n^{\alpha/2}) \supseteq \bigo(n^{0.156945})$ all those problems are solvable in time $\widetilde\bigo(n^2)$.
\end{corollary}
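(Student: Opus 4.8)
The goal is to establish the claimed runtime $\widetilde\bigo(n^{\rho(s)})$ for \AllPairsHammingShort (and hence all equivalent \textsc{AllPairs}- problems) on $n$ vectors of dimension $d = n^s$, and to derive the $\widetilde\bigo(n^2)$ bound for $d = \bigo(n^{\alpha/2})$. The plan is to combine Theorem~\ref{th:hamming_to_sparse}, which reduces \AllPairsHammingShort on $d$-dimensional vectors to a sparse matrix multiplication instance $\sparse(n, N, n; M, M)$ with $M = \bigo(nd)$ and $N = \bigo(\min(d^2, nd))$, with the Yuster--Zwick-style column/row elimination argument (Lemma~\ref{yusterzwicklemma}) and rectangular matrix multiplication bounds, mirroring the analysis of \cite{DBLP:conf/isaac/GoldS17} for Dominance Product.

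First I would set up the computation as in the proof of Theorem~\ref{th:hamming_to_sparse}: build the 0/1 matrices $A \in \{0,1\}^{n\times N}$ and $B \in \{0,1\}^{n \times N}$ with $N = \bigo(nd)$ and $|A|, |B| = \bigo(nd)$, so that $d - \ham(u_i, v_j) = (A B^T)[i,j]$. Then, following \cite{DBLP:conf/soda/Yuster09}, I split the product into a ``dense'' part and a ``sparse'' part by choosing a threshold $\ell$: reorder columns of $A$ (rows of $B^T$) so that $|A_{*i}|\cdot|B_{*i}|$ is non-increasing, take the first $\ell$ of them into a dense rectangular multiplication of matrices $n \times \ell$ and $\ell \times n$, costing $\bigo(n^{\omega(1, \log_n \ell, 1)})$, and handle the remaining columns directly in $\bigo(|A|\cdot|B| / \ell) = \bigo(n^2 d^2 / \ell)$ time by Lemma~\ref{yusterzwicklemma}. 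Writing $\ell = n^{2+2s-x}$ (so the sparse part costs $\bigo(n^{2+2s - (2+2s-x)}) = \bigo(n^x)$) the total runtime becomes $\widetilde\bigo(n^{\omega(1, 2+2s-x, 1)} + n^x)$, and optimizing over $x$ gives the balance point $\rho(s) = \inf\{x : 2 \le x \le 3,\ \omega(1, 2+2s-x, 1) \ge x\}$, which is exactly the claimed exponent. Note $\ell$ must lie in $[1, N]$, i.e. $2+2s-x \in [0, 1+s]$ when $d \le n$; this is where the $2 \le x \le 3$ clamp comes from and where one checks the endpoints behave (the constraint $\omega(1,0,1)=2$ forces $x \ge 2$, and $x \le 3$ is never binding once $s\le 1$).

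For the ``in particular'' clause, I would use the definition $\alpha = \sup\{0 \le r \le 1 : \omega(1,r,1) = 2+o(1)\}$. When $d = n^s$ with $s \le \alpha/2$, setting $x = 2$ gives $2+2s - x = 2s \le \alpha$, so $\omega(1, 2s, 1) = 2 + o(1) \ge 2 = x$; hence $x = 2$ satisfies the defining condition for $\rho(s)$, and since $\rho(s) \ge 2$ always (the output has $n^2$ entries), we get $\rho(s) = 2$ and a runtime of $\widetilde\bigo(n^2)$. Concretely $\alpha/2 \ge 0.31389/2 > 0.156945$, matching the stated threshold.

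The main obstacle is verifying the interchange of the reduction with the sparsity-vs-dimension tradeoff: one must confirm that after truncating to the $N = \bigo(d^2)$ columns (the second clause of Theorem~\ref{th:hamming_to_sparse}) the dense block still has the right dimension for $\omega(1, \cdot, 1)$ to apply cleanly, and that the elimination cost $\bigo(n^2)$ additive term never dominates $n^{\rho(s)}$ (true since $\rho(s) \ge 2$). One should also double-check monotonicity/continuity of $x \mapsto \omega(1, 2+2s-x, 1) - x$ so that the infimum defining $\rho(s)$ is attained at the crossing point and the two terms genuinely balance; this is a routine consequence of convexity of $\omega(1, r, 1)$ in $r$ but is the only place where some care with the rectangular exponent function is needed. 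Everything else — constructing $A, B$, the Chernoff-free deterministic truncation, bookkeeping the $\poly\log n$ factors — is routine given Theorem~\ref{th:hamming_to_sparse} and Lemma~\ref{yusterzwicklemma}.
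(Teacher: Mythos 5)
Your derivation via Theorem~\ref{th:hamming_to_sparse} and Lemma~\ref{yusterzwicklemma} carries out the same Yuster-style split (a dense $n\times\ell$ by $\ell\times n$ rectangular block plus brute force on the sparse tail, balanced over $\ell$) that the paper invokes by pointing at the MinKZ algorithm and the Gold--Sharir adaptation of Yuster's batch-processing. So this is essentially the same computation, but routed through the paper's own Section~\ref{sec:smm} sparse-matrix-multiplication machinery rather than by re-deriving a bucketing scheme for Hamming from scratch; that framing is cleaner and self-contained, and your range checks ($\ell\le N$, the $n^2$ additive term never dominating, transfer to the remaining problems via Corollary~\ref{cor:intro2}) are fine.

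The one step you explicitly flag as needing a double-check is the one that actually needs it, and you do not carry it out. As literally displayed, $\rho(s) = \inf\{x : 2 \le x \le 3 \text{ and } \omega(1,2+2s-x,1) \ge x\}$ does not select your balance point: since $\omega(1,2s,1) \ge 2$ for every $s\ge 0$, the point $x=2$ always lies in the set, so the infimum as written is $2$ for all $s$, which is wrong once $2s>\alpha$. Because $x\mapsto\omega(1,2+2s-x,1)$ is nonincreasing while $x\mapsto x$ is increasing, the crossing point $x^*$ you actually derive (where $\omega(1,2+2s-x^*,1)=x^*$, and where both the rectangular-product term and the $n^{2+2s}/\ell$ tail term are $n^{x^*}$) is characterized by $\inf\{x : 2 \le x \le 3,\ \omega(1,2+2s-x,1) \le x\}$, i.e.\ with the inequality reversed (equivalently, $\sup$ with $\ge$). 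Your runtime analysis is correct and yields the intended exponent, but the assertion that the balance point ``is exactly the claimed exponent'' simply identifies it with the displayed set without noticing that the sign must be flipped for the infimum to pick out the crossing rather than collapsing to $2$. With the sign fixed, your ``in particular'' clause reads as intended: for $2s \le \alpha$ one has $\omega(1,2s,1)=2+o(1)\le 2$, so $x=2$ already satisfies the condition and $\rho(s)=2$.
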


Similarly, one can look into the relation between sparsity and runtime for pattern matching problems. Here, we obtain the following result
\begin{theorem}
\label{th:sparse_pm}
For a text of length $n$ and a pattern of length $m$, $n\ge m$, with $s_t$ and $s_p$ relevant entries, respectively, the runtime of \PMHammingShort, \PMLessThanShort, \PMThresholdShort and \PMLoddShort is $\widetilde\bigo(\sqrt{n  s_t s_p} + n)$.
\end{theorem}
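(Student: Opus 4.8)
\textbf{Proof proposal for Theorem~\ref{th:sparse_pm}.}
The plan is to mimic, in the convolution setting, the reduction of Theorem~\ref{th:hamming_to_sparse} (together with Corollary~\ref{cor:apham_sparse}), which expressed \AllPairsHammingShort on sparse inputs in terms of sparse matrix multiplication. Since Corollary~\ref{cor:intro1} and Theorem~\ref{th:completeness1}/\ref{th:completeness2} already give one-to-polylog reductions among \PMHammingShort, \PMLessThanShort, \PMThresholdShort and \PMLoddShort \emph{that preserve $\star$ marks}, it suffices to prove the bound for \PMHammingShort alone; all other cases then follow with only a $\poly\log n$ overhead, because those reductions map relevant entries to relevant entries and $\star$ to $\star$, so $s_t,s_p$ are preserved up to constant factors. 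Thus the whole task is: given a text $\mathbf T$ of length $n$ with $s_t$ non-$\star$ entries and a pattern $\mathbf P$ of length $m$ with $s_p$ non-$\star$ entries, compute $\mathbf O[i] = |\{ j : \mathbf P[j]\neq \mathbf T[i+j]\}|$ for all $i$ in time $\widetilde\bigo(\sqrt{n s_t s_p}+n)$.

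The classical high/low-frequency trick specializes cleanly. First reduce to counting \emph{matches} rather than mismatches: the number of mismatches at alignment $i$ equals (number of positions $j$ where both $\mathbf P[j]$ and $\mathbf T[i+j]$ are relevant) minus (number of matches), and the first count is itself a Hamming-type quantity on the indicator strings $\indicator{\mathbf P[j]\neq\star}$, $\indicator{\mathbf T[i+j]\neq\star}$, computable by a single FFT convolution in $\bigo(n\log n)$ time. So it remains to count, for each alignment, the number of relevant positions at which the characters \emph{agree}. Fix a threshold $t$ (to be optimized). Call a character \emph{heavy} in the pattern if it occurs in at least $t$ relevant pattern positions, \emph{light} otherwise; do the same for the text with a threshold $t'$. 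There are at most $s_p/t$ heavy pattern characters and at most $s_t/t'$ heavy text characters. For each heavy pattern character $c$, form the 0/1 text-indicator string marking occurrences of $c$ in $\mathbf T$ and the 0/1 pattern-indicator string marking occurrences of $c$ in $\mathbf P$, and convolve: cost $\bigo((s_p/t)\cdot n\log n)$ total, and this handles all matched pairs whose common character is heavy in the pattern. Symmetrically, for each heavy text character, one convolution handles all matched pairs whose common character is heavy in the text, at cost $\bigo((s_t/t')\cdot n\log n)$. The remaining matched pairs have a common character that is light in \emph{both} strings; these we count directly: for each such character $c$, and for each of its $<t$ relevant pattern positions $j$ and each of its $<t'$ relevant text positions $k$, increment $\mathbf O$-correction at alignment $k-j$; summing $(\#\text{occ of }c\text{ in }\mathbf P)\cdot(\#\text{occ of }c\text{ in }\mathbf T)$ over light-light characters is at most $\min(t,t')\cdot\min(s_t,s_p)$ in the worst case, but more usefully is bounded by $t'\cdot s_p$ and by $t\cdot s_t$, so the direct-counting cost is $\bigo(\min(t\cdot s_t,\ t'\cdot s_p))$.

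Optimizing the single threshold $t=t'$: the FFT part costs $\widetilde\bigo((s_p/t+s_t/t)\cdot n)$ and the direct part costs $\bigo(t\cdot\min(s_t,s_p))\le\bigo(t\sqrt{s_ts_p})$; actually it is cleanest to keep two thresholds and balance $(s_p/t_p)\,n$ against $t_p s_t$ giving $t_p=\sqrt{n s_p/s_t}$ and contribution $\sqrt{n s_t s_p}$, and symmetrically $t_t=\sqrt{n s_t/s_p}$ giving $\sqrt{n s_t s_p}$ again, with the light–light direct count bounded by $t_p s_t = \sqrt{n s_t s_p}$ as well (one must check $1\le t_p,t_t\le$ the relevant string lengths, which holds after clamping and absorbing the boundary cases into the additive $\bigo(n)$ and a single full FFT of cost $\bigo(n\log n)$). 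This yields total $\widetilde\bigo(\sqrt{n s_t s_p}+n)$. I expect the main obstacle to be purely bookkeeping: carefully accounting for the three regimes (heavy-in-pattern, heavy-in-text, light-in-both) so that every matched relevant pair is counted exactly once and double-counting of pairs heavy in both strings is avoided — this is handled by counting a pair under the ``heavy in pattern'' case whenever the pattern character is heavy, and under ``heavy in text'' only when the pattern character is light but the text character is heavy — together with verifying that all threshold choices remain in the valid range and that the sparse inputs (given in compressed form) let every convolution be set up in time proportional to its output length rather than $n$ when that is smaller. None of these steps requires a new idea beyond what already appears in the proof of Theorem~\ref{th:hamming_to_sparse}.
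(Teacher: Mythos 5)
Your proposal is correct, but it takes a genuinely different route from the paper. The paper proves the sparse bound directly for \PMLessThanShort by adapting the Amir--Farach value-bucketing: the $s_p$ relevant pattern entries are sorted and split into $k$ buckets of $s_p/k$ values each, inter-bucket contributions are obtained from $k$ convolutions of 0/1 prefix-indicator strings ($P_i[j]=1$ iff $P[j]\in B_1\cup\dots\cup B_i$, $T_i[j]=1$ iff $T[j]\in B_i$) in $\bigo(kn\log m)$ time, intra-bucket contributions are brute-forced in $\bigo(s_t s_p/k)$, and $k=\max\bigl(1,\sqrt{s_t s_p/(n\log m)}\bigr)$ balances the two; the remaining three problems then follow from the $\star$-preserving reductions. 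You instead take \PMHammingShort as the base problem and adapt Abrahamson's character-frequency method: reduce mismatch counting to match counting with one FFT on relevance indicators, handle characters heavy in the pattern (resp.\ light-in-pattern but heavy-in-text) by per-character convolutions, brute-force the light--light pairs, and balance with thresholds $t_p=\sqrt{n s_p/s_t}$, $t_t=\sqrt{n s_t/s_p}$ (your clamping worry is vacuous since $s_t\le n$ and $s_p\le m$ force $t_p,t_t\ge 1$ whenever the instance is nontrivial). Your transfer step is also sound: Theorem~\ref{th:completeness1} (and Theorem~\ref{th:Dominance_to_Hamming} with its remark) explicitly supports don't-care symbols, applies coordinate-wise maps that never increase the number of relevant entries, and its auxiliary multiplication terms cost one FFT each, absorbed by the additive $\widetilde\bigo(n)$. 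The trade-off between the two choices of base problem: dominance as base reaches Hamming and threshold via $\bigo(1)$-instance reductions (appendix lemmas), whereas your Hamming base needs the polylogarithmic scaling reductions for dominance, threshold and $L_{2p+1}$ --- harmless under $\widetilde\bigo$, and in exchange your base algorithm handles $\star$ marks natively and is arguably the more standard sparse adaptation.
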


\begin{proof}
	Consider \PMLessThanShort. The proof follows the non-sparse case. W.l.o.g. all the $2 s_p$ actual entries are distinct integers (if it is not so, they can be made so using small $\varepsilon>0$ shifts and then re-arranged back into integers preserving order). The $s_p$ relevant entries of the pattern are sorted and partitioned into $k$ buckets $B_1,\ldots,B_k$ so that $B_1$ gets $s_p/k$ smallest elements, $B_2$ following $s_p/k$ smallest elements, etc. We get inter-bucket contribution for bucket $B_i$ from convolution of $P_i^R$ with $T_i$, where $P_i,T_i$ are binary strings such that $P_i[j] = 1$ iff $P[j] \in B_1 \cup \ldots \cup B_i$ and $T_i[j]=1$ iff $T[j] \in B_i$. This in a total takes $\bigo(kn\log m)$ time for all $k$ buckets. Intra-bucket contributions are captured in a brute force manner in $\bigo(s_t s_p/k)$ where each relevant text element is compared with at most $s_p/k$ elements in its corresponding bucket. Choosing $k$ to be $\max(1, \sqrt{(s_t s_p)/(n\log m)})$ gives the time bound of $\widetilde\bigo(n+\sqrt{n s_t s_p})$.
\end{proof}

We present the following application of the scaling/filtering framework: weighted mismatches. We distinguish between \emph{position weighted mismatches} and \emph{character weighted mismatches}.
In the pattern matching setting, the former asks for $\mathbf{O}[i] = \sum_{j: \mathbf{P}[j] \not= \mathbf{T}[i+j]} w(j)$, whereas the latter asks for $\mathbf{O}[i] = \sum_{j: \mathbf{P}[j] \not= \mathbf{T}[i+j]} w(\mathbf{P}[j])$, for some given weight function $w : \mathbb{Z} \to \mathbb{Z}$ .  We see that character weighted mismatches are expressible by a function $w(x)\cdot \indicator{x \not= y}$ and get by Lemma~\ref{lem:weighted_hamming} that Hamming Distance Pattern Matching with Character Weights is no harder than \PMHammingShort (up to a $\log n$ factor). For position weights, we present the following:
\begin{theorem}
\label{th:weighted_ham}
Hamming Distance Pattern Matching with Position Weights reduces to $\bigo(\log n)$ instances of \PMHammingShort.
\end{theorem}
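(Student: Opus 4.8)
The plan is to run the bit-decomposition argument of Lemma~\ref{lem:weighted_hamming}, but on the \emph{position} weights rather than on the input values. Write $\mathbf{O}[i] = \sum_j w(j)\cdot\ham(\mathbf{P}[j],\mathbf{T}[i+j])$ and let $W = \poly(n)$ bound $|w(j)|$. Splitting $w = w^+ - w^-$ into its non-negative and non-positive parts (which merely doubles the eventual number of instances), it suffices to treat a non-negative weight function, for which $w(j) = \sum_{k=0}^{\log W} 2^k b_k(j)$ with $b_k(j)\in\{0,1\}$ the $k$-th bit of $w(j)$. By linearity of the score,
$$\mathbf{O}[i] \;=\; \sum_{k=0}^{\log W} 2^k \sum_{j\,:\,b_k(j)=1} \ham(\mathbf{P}[j],\mathbf{T}[i+j]).$$

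Next I would isolate, for each fixed $k$, the inner sum over $S_k = \{j : b_k(j)=1\}$. The key point is that $S_k$ is a set of pattern positions that does not depend on the alignment $i$, so masking the pattern outside $S_k$ yields an honest Hamming-distance pattern-matching instance. Concretely, set $\mathbf{P}_k[j] = \mathbf{P}[j]$ for $j\in S_k$ and $\mathbf{P}_k[j] = D$ for $j\notin S_k$, where $D$ is a fixed symbol strictly larger than every symbol occurring in $\mathbf{T}$; since $D$ never appears in the text, each position $j\notin S_k$ contributes a mismatch in every alignment, so running \PMHammingShort on $(\mathbf{P}_k,\mathbf{T})$ returns $\sum_{j\in S_k}\ham(\mathbf{P}[j],\mathbf{T}[i+j]) + (m - |S_k|)$ at coordinate $i$, and subtracting the fixed integer $m - |S_k|$ from every coordinate recovers the inner sum exactly. (Alternatively one invokes Lemma~\ref{lem:nonsparse_hamming} on the $\star$-masked pattern directly.)

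Finally I would assemble the output: run the $\bigo(\log W)$ instances just described together with the analogous $\bigo(\log W)$ instances for $w^-$, and combine their coordinate vectors as $\sum_k 2^k(\cdot)$, which costs a further $\bigo(n\log n)$ time. As $W = \poly(n)$ this is $\bigo(\log n)$ instances of \PMHammingShort in total, as claimed. There is no real obstacle beyond the single bookkeeping observation stressed above — that the weight bits depend only on the position, so the $k$-th summand is genuinely a (sparse) Hamming-distance pattern-matching instance; unlike the character-weighted case there is no coupling between the weights and the text symbols, so no filtering of the text (and none of the recursive decomposition of Theorem~\ref{th:completeness1}) is required.
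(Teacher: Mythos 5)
Your proposal is correct and follows essentially the same route as the paper: bit-decompose the position weights, build one masked pattern per bit (keeping $\mathbf{P}[j]$ exactly where that bit of $w(j)$ is set), solve a \PMHammingShort instance for each, and recombine the outputs with coefficients $2^k$. The only difference is cosmetic: the paper masks with $\star$ and appeals to Lemma~\ref{lem:nonsparse_hamming}, while you mask with a sentinel symbol absent from the text and subtract the constant $m-|S_k|$ (also noting the $\star$ alternative), which is an equally valid implementation of the same idea.
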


\begin{proof}
We solve $\bigo(\log n)$ instances of \PMHammingShort with filtering involved. This is done by constructing different pattern strings where $\mathbf{P}_i$ is defined as follows:
$$\mathbf{P}_i[j] = \begin{cases}P[j]&\quad i\text{-th bit of }w(j)\text{ is }1\\\star&\quad\text{otherwise.}\end{cases} $$ 
Let $\mathbf{O}_i$ be the result vector of \PMHammingShort between text $\mathbf{T}$ and pattern $\mathbf{P}_i$. The final result vector, $\mathbf{O}$, for the Hamming distance pattern matching with position weights can be computed such that $\mathbf{O}[k] = \sum_{i = 0}^{\ceil{\log W}} 2^i \cdot \mathbf{O}_i[k]$ where $W$ is the maximum position weight. Given our assumption that $W = \poly(n)$, the result follows.
\end{proof}

What remains is to show that one can get rid of $\star$ when e.g. computing Hamming distance. We show this in the pattern matching setting for simplicity. However this can be easily extended for matrix multiplication problems as well.

\begin{lemma}
\label{lem:nonsparse_hamming}
	Hamming distance in $\mathbb{N}+\{\star\}$ reduces preserving linearity to two instances of Hamming distance in $\mathbb{N}$.
\end{lemma}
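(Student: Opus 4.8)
The plan is to reduce Hamming distance over $\mathbb{N}\cup\{\star\}$ to ordinary Hamming distance by encoding the ``ignore'' marker in the values themselves, in such a way that the marker behaves like a wildcard that never contributes to the count. Recall that in the sparse formalism $\ham(x,y)$ is defined to be $0$ whenever $x=\star$ or $y=\star$, and equals $\indicator{x\neq y}$ otherwise; the goal is to simulate this with two calls to the standard $\indicator{x\neq y}$ product, using a linearity-preserving reduction in the sense of Definition~\ref{def:reduction} (i.e. $x\diamond y=\sum_i\alpha_i\,(f_i(x)\,\square_i\,g_i(y))$ with $\square_i$ all equal to ordinary $\ham$).

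The key idea is to split on ``at least one argument is $\star$''. Write $\ham_{\star}(x,y)$ for the sparse version and $\ham$ for the ordinary one on $\mathbb{N}$. I would use the identity
$$
\ham_{\star}(x,y)\;=\;\ham(f_1(x),g_1(y))\;-\;\ham(f_2(x),g_2(y)),
$$
where $f_1,g_1$ are designed so that $f_1(x)\neq g_1(y)$ exactly when $x\neq y$ \emph{and} neither is $\star$, while $f_2,g_2$ are designed so that $f_2(x)\neq g_2(y)$ exactly when at least one of $x,y$ is $\star$ and $x\neq y$ (equivalently, to ``cancel'' the spurious contributions of the first term on the $\star$-positions). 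Concretely, let $2n$ be a value strictly larger than every real entry, and set $f_1(x)=2x$ for $x\in\mathbb{N}$ and $f_1(\star)=1$ (odd, hence distinct from every $f_1(y)=2y$); symmetrically $g_1(y)=2y$ for $y\in\mathbb{N}$ and $g_1(\star)=3$. Then $\ham(f_1(x),g_1(y))=1$ precisely when $x\neq y$ as naturals, or when exactly one of $x,y$ is $\star$, or when both are $\star$ (since $1\neq3$). The correction term must subtract $1$ on exactly the configurations where at least one argument is $\star$: take $f_2(x)=0$ for $x\in\mathbb{N}$ and $f_2(\star)=1$, and $g_2(y)=0$ for $y\in\mathbb{N}$, $g_2(\star)=2$; then $\ham(f_2(x),g_2(y))=1$ exactly when $x=\star$ or $y=\star$ (including both). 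Subtracting gives $0$ whenever some argument is $\star$, and $\indicator{x\neq y}$ otherwise, as required.

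The remaining point is to check that this is genuinely a reduction ``preserving linearity'': the maps $f_1,f_2$ depend only on $x$, the maps $g_1,g_2$ only on $y$, the coefficients are $\alpha_1=1,\alpha_2=-1$, and both instances use the \emph{same} binary function $\ham$ on $\mathbb{N}$ — so by the discussion after Definition~\ref{def:reduction} this lifts automatically to vector products, convolutions, and matrix products. I would also note that the new entry values stay bounded by $\poly(n)$, so the ambient assumption on input magnitude is preserved, and that $\star$-padding with a fresh out-of-range odd/even value can be done in linear time.

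The only mild obstacle is bookkeeping of the parity/range gadget: one must make sure the image of $\mathbb{N}$ under $f_1$ (the even numbers $2x$) is disjoint from $f_1(\star)$ and from $g_1(\star)$, and likewise that $f_1(\star)\neq g_1(\star)$ (so that the $(\star,\star)$ case is counted by the first term and then cancelled by the second), while simultaneously $f_2,g_2$ collapse all of $\mathbb{N}$ to a single common value $0$ distinct from $f_2(\star),g_2(\star)$ — a small finite case-check over the nine combinations $\{x\in\mathbb{N},x=\star\}\times\{y\in\mathbb{N},y=\star\}$ (further split by $x=y$ or $x\neq y$ in the all-$\mathbb{N}$ case) confirms correctness. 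There is no conceptual difficulty beyond choosing the constants consistently.
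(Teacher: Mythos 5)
Your proposal is correct and follows essentially the same route as the paper: write the sparse distance as a difference of two ordinary Hamming distances, the first encoding real mismatches (plus spurious ones at $\star$-positions) and the second counting exactly the pairs involving $\star$ so the spurious contributions cancel. The only differences are cosmetic gadget choices (parity encoding $2x$ with odd sentinels and asymmetric values for $\star$ on the two sides, versus the paper's $t\mapsto t+1$, $\star\mapsto 0$ and the constant map $t\mapsto 1$, $\star\mapsto 0$), and your unused ``value larger than every entry'' setup is harmless.
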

\begin{proof}
	Let $x,y \in \mathbb{N}+\{\star\}$. To compute $\ham(x,y)$, we first use mapping that puts $\star$ into separate integer, and then apply correction that fixes distances between $\star$.
	
	For the first instance:
	$$f(t) = \begin{cases}0 \quad&\text{if } t = \star \\ t+1\quad&\text{otherwise } \end{cases}$$
	
	As for the second instance:
	$$g(t) = \begin{cases}0 \quad&\text{if } t = \star \\ 1\quad&\text{otherwise } \end{cases}$$
	Observe that $\ham(x,y) = \ham(f(x),f(y)) - \ham(g(x),g(y)).$
	
\end{proof}

While it is no surprise that for example the technique of~\cite{DBLP:conf/soda/Yuster09} can be applied to other \textsc{AllPairs} problems, it is a nice side effect of our reduction that it can be applied ``automatically'' without looking deeper into the structure of any of the different \textsc{AllPairs} problems involved.
The reductions presented signify that regardless of whether we are looking for improved upper bounds, or new lower bounds, it is enough to concentrate on a single score function from the whole class of equivalent functions. In our opinion, Hamming distance is the ``cleanest'' score function, since it is the simplest -- it assumes no arithmetic underlying structure of the alphabet (unlike e.g.~$L_1$ distance) and not even an ordering of the alphabet. 

\bibliographystyle{alpha}
\bibliography{bib}

\appendix

\section{Supplementary reductions}
\begin{theorem}
\label{lem:l1_equiv_min}
$L_1$ distance reduces to $\min$ and multiplications. $\min$ reduces to $L_1$ and multiplications.
\end{theorem}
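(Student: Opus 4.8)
The plan is to establish both directions using the elementary identity $\min(x,y) = \tfrac12\bigl((x+y) - |x-y|\bigr)$, which converts between $\min$ and $L_1$ distance up to a single additive multiplication term (the $x+y$ part). First I would handle the direction ``$L_1$ reduces to $\min$ and multiplications'': from the identity we get $|x-y| = (x+y) - 2\min(x,y)$, so in Definition~\ref{def:reduction} we set $\alpha_1 = 1$ with $f_1(x)=x$, $\square_1$ being ordinary multiplication paired against $g_1(y)=1$ (to extract the $x$ term), symmetrically a term for $y$, and $\alpha_3 = -2$ with $\square_3 = \min$, $f_3 = g_3 = \mathrm{id}$. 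This is literally a three-summand linearity-preserving reduction, so it lifts to vector product, convolution, and matrix multiplication by the observation following Definition~\ref{def:reduction}.

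For the converse, ``$\min$ reduces to $L_1$ and multiplications'', I would use the same identity in the form $\min(x,y) = \tfrac12(x+y) - \tfrac12|x-y|$: set $\alpha_1 = \alpha_2 = \tfrac12$ for the two multiplication terms extracting $x$ and $y$, and $\alpha_3 = -\tfrac12$ for the $L_1$ term $|x-y|$. The only nuisance is the factor $\tfrac12$: since the paper's convention (the Remark at the start of Section~3) is that all inputs and coefficients are integers, I would note that one can either scale all inputs by $2$ beforehand (so $\min(2x,2y) = 2\min(x,y)$ and everything stays integral), or simply observe that the post-processing is a division by $2$ which is harmless — the footnote to Definition~\ref{def:reduction} already permits an omitted post-processing function. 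I would state it cleanly by first scaling.

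One subtlety to address is the ``don't care'' symbol $\star$: the reductions should map $\star \mapsto \star$ so they compose with the sparse-input machinery. Since $f_i$ and $g_i$ here are just the identity, multiplication by a constant, or the constant $1$, I would note that the $f_i$ which are the identity already send $\star\mapsto\star$ by the single-argument convention, while the constant-$1$ maps used to extract the linear terms are exactly the single-argument/double-argument bookkeeping already discussed in the ``Sparsity'' paragraph of Section~\ref{sec:hdc} (the $x+y$ term must be treated as a two-argument function evaluating to $0$ when either argument is $\star$, which is consistent with what $L_1$ and $\min$ do on $\star$). I do not expect any real obstacle here — the whole statement is a one-line algebraic identity dressed up in the reduction formalism; the only thing requiring a sentence of care is the integrality of the coefficient $\tfrac12$, which the pre-scaling trick disposes of.
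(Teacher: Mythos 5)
Your proposal is correct and takes essentially the same route as the paper, whose entire proof is the pair of identities $\min(x,y) = x/2 + y/2 - |x-y|/2$ and $|x-y| = x+y-2\min(x,y)$ (the paper actually writes the second identity with the factor $2$ missing, an evident slip that your version fixes). Your additional remarks on integrality of the coefficient $\tfrac12$ and on $\star$-handling are harmless bookkeeping already covered by the paper's conventions.
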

\begin{proof}
$\min(x,y) = x/2+y/2 - |x-y|/2$ and $|x-y| = x+y-\min(x,y)$.
\end{proof}
\begin{lemma}\label{lem:Dom_eq_Thr}
Dominance and $\delta$-threshold are equivalent.
\end{lemma}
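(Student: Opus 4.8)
The plan is to exhibit linearity-preserving reductions in both directions between $\dom(x,y) = \indicator{x \le y}$ and $\thr_\delta(x,y) = \indicator{|x-y| \ge \delta}$, in the sense of Definition~\ref{def:reduction}. Both functions are shift-invariant, so we may freely translate arguments, and both reductions will be one-to-$O(1)$, hence immediately lift to the vector product, convolution, and matrix product settings.

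For the direction from $\thr_\delta$ to $\dom$: recall the decomposition already noted in the excerpt, $\thr_\delta(x,y) = \indicator{x \le y - \delta} + \indicator{x \ge y + \delta}$. The first summand is exactly $\dom(x, y-\delta)$, i.e. a dominance instance after applying $g(y) = y - \delta$ to the second argument. The second summand is $\indicator{x - \delta \ge y} = \indicator{y \le x - \delta}$, which is $\dom(y, x - \delta)$ — a dominance instance with the roles of the two argument-families swapped and $f(x) = x - \delta$ applied. So $\thr_\delta$ reduces preserving linearity to two instances of $\dom$ with coefficients $\alpha_1 = \alpha_2 = 1$.

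For the converse, from $\dom$ to $\thr_\delta$: I would recover $\dom$ by taking a difference of two threshold instances at nearby offsets so that only the ``$x$ far below $y$'' half survives. Concretely, using $\thr_1(x,y) = \indicator{x \le y-1} + \indicator{x \ge y+1}$, note that $\indicator{x \le y-1} = \indicator{x < y} = \dom(x, y) - \indicator{x = y}$, and $\indicator{x=y}$ in turn equals $1 - \ham(x,y) = 1 - \indicator{x<y} - \indicator{x>y}$; a cleaner route is to write $\dom(x,y) = \indicator{x \le y} = \indicator{2x \le 2y+1} = \indicator{2x < 2y+2}$ and then use that $\indicator{u < v}$ for integers is $\indicator{u \le v-1}$, which is the ``lower'' half of $\thr_1(u, v)$; the ``upper'' half $\indicator{u \ge v+1} = \indicator{v \le u-1}$ is $\dom$ with arguments swapped again. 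More directly: scale by $2$ so that the ``$=$'' case of dominance becomes a strict inequality that $\thr_\delta$ can detect, i.e. set $u = 2x$, $v = 2y+1$, observe $u \ne v$ always and $u < v \iff x \le y$, so $\dom(x,y) = \indicator{u < v} = \indicator{u \le v - \delta}$ once we further rescale by $\delta$ (take $u = 2\delta x$, $v = \delta(2y+1)$, so $|u - v| \ge \delta$ always and $u \le v - \delta \iff x \le y$). Then $\indicator{u \le v-\delta} = \thr_\delta(u,v) - \indicator{u \ge v+\delta} = \thr_\delta(u,v) - \indicator{v \le u - \delta}$, and the last term is again $\thr_\delta$ with swapped argument-families minus its own complementary half — so I close the recursion by solving the $2\times 2$ linear system in the two unknowns $\indicator{u \le v-\delta}$ and $\indicator{v \le u-\delta}$, whose sum is $\thr_\delta(u,v)$ and whose difference I can read off (it is $\mathrm{sign}(v-u)$, which is $+1$ or $-1$ since equality never occurs), giving each as a linear combination of $\thr_\delta$ and constants. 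This expresses $\dom$ as a linearity-preserving reduction to $O(1)$ instances of $\thr_\delta$ plus trivial (polynomial/multiplication) terms, which by Lemma~\ref{th:polynomials_are_easy} cost no more than a single multiplication.

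The only mild obstacle is bookkeeping: making sure every auxiliary term that is ``purely a function of one argument'' (the constants, $\mathrm{sign}(v-u)$ expressed via the two threshold halves) is handled correctly under the $\star$ / ``ignore'' conventions and that the coefficients $\alpha_i$ and the pre-maps $f_i, g_i$ are genuinely of the form required by Definition~\ref{def:reduction}; this is routine given the shift- and scale-invariance of both functions. I expect the written proof to be only a line or two — essentially the two displayed identities above — mirroring the brevity of Theorem~\ref{lem:l1_equiv_min}'s proof.
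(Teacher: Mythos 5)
Your first direction ($\thr_\delta \to \dom$) is correct and is essentially the paper's: split $\indicator{|x-y|\ge\delta}$ into the two arms $\indicator{x\le y-\delta}$ and $\indicator{y\le x-\delta}$, each a single shifted (and possibly argument-swapped) dominance, two instances in total, just as the paper writes $\thr_\delta(x,y)=\dom(y+\delta,x)+\dom(x+\delta,y)$.

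Your converse direction ($\dom \to \thr_\delta$) has a genuine gap. After rescaling to $u=2\delta x$, $v=\delta(2y+1)$ so that $|u-v|\ge\delta$ always, you have $\thr_\delta(u,v)\equiv 1$, so the single equation $\indicator{u\le v-\delta}+\indicator{v\le u-\delta}=\thr_\delta(u,v)$ determines only the sum of the two unknowns. You claim to close the system by ``reading off'' the difference as $\mathrm{sign}(v-u)$, but $\mathrm{sign}(v-u)$ is exactly $2\cdot\indicator{v\ge u}-1=2\cdot\dom(u,v)-1$, i.e.\ the very quantity you are trying to compute; it is not a constant, not a polynomial in $(u,v)$ of the single-argument type Lemma~\ref{th:polynomials_are_easy} handles, and not obtainable from further $\thr_\delta$ queries without begging the question. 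The construction fails precisely because your shift keeps $u-v$ of both signs, so both arms of $\thr_\delta$ remain ``live''. The fix --- and what the paper does --- is to shift one argument by an amount exceeding the value range $M$, e.g.\ consider $\thr_\delta(x+\delta,y)$ with $\delta>M$ and $0\le x,y\le M$. Then $(x+\delta)-y>0$ always, so only one arm of the threshold can ever fire, and that arm is $\indicator{(x+\delta)-y\ge\delta}=\indicator{x\ge y}$, giving dominance directly from a \emph{single} $\thr_\delta$ instance with no linear system to solve. Your instinct to rescale so that the $=$ case cannot cause trouble is fine, but the crucial extra idea is to make the shift one-sided and large enough to kill one of the two arms outright.
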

\begin{proof}
Since both dominance and threshold are shift-invariant, we assume $0 \le x,y \le M$ for some $M$ bounded by $\poly(n)$.
Dominance reduces to one instance of threshold as $\dom(x,y) = \thr_{\delta}(x+\delta,y)$ for any $\delta>M$.
Threshold reduces to two instances of dominance as $\thr_\delta(x,y) = \dom(y+\delta, x) + \dom(x+\delta,y)$ for $\delta>0$.
\end{proof}

\begin{lemma}[\cite{vassilevska2008efficient}]
Hamming distance reduces to 2 instances of dominance.
\end{lemma}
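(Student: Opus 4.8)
The plan is to exhibit an explicit linearity-preserving reduction of $\ham(x,y) = \indicator{x \neq y}$ to two instances of $\dom(x,y) = \indicator{x \le y}$, in the sense of Definition~\ref{def:reduction}. The key observation is that $\indicator{x \neq y} = 1 - \indicator{x = y}$ and that, for integers $x,y$, we have $\indicator{x = y} = \indicator{x \le y} + \indicator{x \ge y} - 1 = \indicator{x \le y} + \indicator{y \le x} - 1$. Combining these gives
\begin{equation*}
\ham(x,y) \;=\; 1 - \indicator{x = y} \;=\; 2 - \dom(x,y) - \dom(y,x),
\end{equation*}
which is already of the required form $\sum_i \alpha_i \big(f_i(x) \square_i g_i(y)\big)$: the first summand uses $\square_1 = \dom$ with $f_1 = g_1 = \mathrm{id}$ and $\alpha_1 = -1$; the second uses $\square_2 = \dom$ with $f_2(x) = x$ placed in the \emph{second} argument slot and $g_2(y) = y$ in the first (equivalently, $\dom$ applied to the swapped pair), and $\alpha_2 = -1$; and the constant $2$ is absorbed as a degenerate single-argument term (a constant function of $x$ and $y$), exactly as constants are handled throughout Section~4.

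First I would state the identity $\indicator{x=y} = \dom(x,y) + \dom(y,x) - 1$ and verify it by the trivial case split on whether $x < y$, $x = y$, or $x > y$. Then I would substitute into $\ham(x,y) = 1 - \indicator{x=y}$ to obtain the displayed equation above. Finally I would note that this is a linearity-preserving reduction: by the remark following Definition~\ref{def:reduction}, it lifts automatically to $(+,\ham)$ vector products, convolutions, and matrix products, so that e.g.\ \AllPairsHammingShort reduces to two instances of \AllPairsDominanceShort (plus trivial additive bookkeeping for the constant term, which costs $O(1)$ per output entry).

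There is essentially no obstacle here — this is the easy direction, and it is the reduction attributed to Vassilevska~\cite{vassilevska2008efficient} and already pictured in Figure~\ref{fig:diag}. The only point requiring a word of care is the bookkeeping for the constant $2$ and the swapped-argument instance of $\dom$: one must observe that ``two instances of dominance'' is understood up to reversing the roles of the two argument vectors/matrices (which is free, since it is just transposition in the matrix setting and reversal in the convolution setting) and up to adding a fixed constant to every coordinate of the output (which is $\widetilde{O}(n)$ or $\widetilde{O}(n^2)$ post-processing, negligible compared to any nontrivial bound on $\dom$). With that understood, the statement follows immediately.
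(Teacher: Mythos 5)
Your identity $\ham(x,y) = 2 - \dom(x,y) - \dom(y,x)$ is correct (verify by casing on $x<y$, $x=y$, $x>y$), and the proof goes through, but it takes a different route from the paper and is slightly heavier in the bookkeeping. The paper's one-line proof is $\ham(x,y) = \dom(x+1,y) + \dom(-x+1,-y)$, which for integers is just the direct decomposition $\indicator{x\neq y} = \indicator{x<y} + \indicator{x>y}$ with the strict inequalities realized by shifting one argument by $1$ and negating. That version has two advantages over yours: it introduces no additive constant, and both summands are already literally of the form $f_i(x) \diamond g_i(y)$ required by Definition~\ref{def:reduction}, so there is nothing to absorb or reinterpret. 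You instead go through $1 - \indicator{x=y}$ and then $\indicator{x=y} = \dom(x,y)+\dom(y,x)-1$, which forces you to (a) argue that the constant $2$ is harmless and (b) argue that the swapped-argument term $\dom(y,x)$ is admissible. Point (b) is where your write-up is weaker than it needs to be: rather than appealing to transposition in the matrix setting and reversal in the convolution setting, you can simply observe that $\dom(y,x) = \indicator{y\le x} = \indicator{-x\le -y} = \dom(-x,-y)$, which puts the term in the canonical $f(x)\diamond g(y)$ shape with $f(x)=-x$, $g(y)=-y$ and eliminates the whole discussion of swaps. With that substitution your proof becomes $\ham(x,y) = 2 - \dom(x,y) - \dom(-x,-y)$ and is essentially as clean as the paper's, just using $\le$ twice with a correcting constant instead of $<$ and $>$ with no constant.
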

\begin{proof}
$\ham(x,y) = \dom(x+1,y) + \dom(-x+1,-y)$.
\end{proof}

\begin{lemma}[\cite{DBLP:journals/ipl/LipskyP08a}]
Dominance reduces to 2 instances of $L_1$, Hamming distance reduces to 3 instances of $L_1$.
\end{lemma}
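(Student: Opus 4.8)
The plan is to exhibit explicit affine substitutions together with an elementary piecewise identity, exactly in the style of the warm-up reductions (Theorem~\ref{th:L1_to_Dominance} and Theorem~\ref{th:Dominance_to_Hamming}). Both $\dom$ and $|x-y|$ are shift-invariant, so as in those proofs I may assume $0 \le x,y \le M$ for some $M = \poly(n)$.

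First I would handle dominance. The key observation is the identity
$$|x-y| - |x-y-1| = 1 - 2\cdot\dom(x,y),$$
which I would verify by splitting into the case $x \le y$ (both absolute values drop their sign and the left side equals $(y-x)-(y-x+1) = -1$) and the case $x \ge y+1$ (both arguments are nonnegative and the left side equals $(x-y)-(x-y-1)=1$). Rearranging gives
$$\dom(x,y) = \tfrac12 - \tfrac12\,|x-y| + \tfrac12\,|x-(y+1)|,$$
which is precisely a linearity-preserving reduction in the sense of Definition~\ref{def:reduction}: the two $L_1$ instances use $f(x)=x$ with $g(y)=y$ and $g(y)=y+1$ respectively, and the constant $\tfrac12$ is a degree-zero polynomial term that, by Lemma~\ref{th:polynomials_are_easy}, costs only a single multiplication inside a convolution or matrix product.

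For Hamming distance I would start from $\ham(x,y) = \indicator{x<y} + \indicator{x>y} = \dom(x,y-1) + \dom(y,x-1)$ and substitute the formula just derived into each term. Using $|y-(x-1)| = |x-y-1|$ and $|y-x| = |x-y|$, the second summand becomes $\tfrac12 - \tfrac12|x-y-1| + \tfrac12|x-y|$, while the first becomes $\tfrac12 - \tfrac12|x-y+1| + \tfrac12|x-y|$; adding them and collecting terms yields
$$\ham(x,y) = 1 + |x-y| - \tfrac12\,|x-(y-1)| - \tfrac12\,|x-(y+1)|,$$
a linearity-preserving reduction to the three $L_1$ instances $|x-y|$, $|x-(y-1)|$ and $|x-(y+1)|$ together with the constant $1$.

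There is no substantive obstacle here; the only points that need care are (i) that the affine shifts keep all arguments polynomially bounded, which is immediate since the shifts are by $\pm 1$, and (ii) the treatment of the constant terms under the $\star$-convention when these reductions are plugged into a convolution or matrix product — this is exactly the situation discussed in the remark following Theorem~\ref{th:Dominance_to_Hamming} and is absorbed by Lemma~\ref{th:polynomials_are_easy}.
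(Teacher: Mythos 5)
Your proposal is correct and arrives at exactly the same identities as the paper, namely $\dom(x,y) = |x-(y+1)|/2 - |x-y|/2 + 1/2$ and $\ham(x,y) = 1 + |x-y| - |x-(y-1)|/2 - |x-(y+1)|/2$ (the paper writes the last term as $|(x+1)-y|/2$, which is the same quantity); the paper merely states these formulas, while you additionally supply the case analysis and the decomposition $\ham(x,y)=\dom(x,y-1)+\dom(y,x-1)$ that derive them. No gap.
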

\begin{proof}
$\dom(x,y) = |x-(y+1)|/2-|x-y|/2+1/2$ and
$\ham(x, y) = 1+|x-y|- |x-(y+1)|/2 - |(x+1)-y|/2$
\end{proof}

\section{Example reductions}
\label{sec:examples}
Figures~\ref{fig:example-l1-to-less-than} and~\ref{fig:example-less-than-to-ham} illustrate our reductions from Theorems~\ref{th:L1_to_Dominance} and~\ref{th:Dominance_to_Hamming}, respectively.

\begin{figure}[!h]
	\centering
	\includegraphics[width=0.95\textwidth]{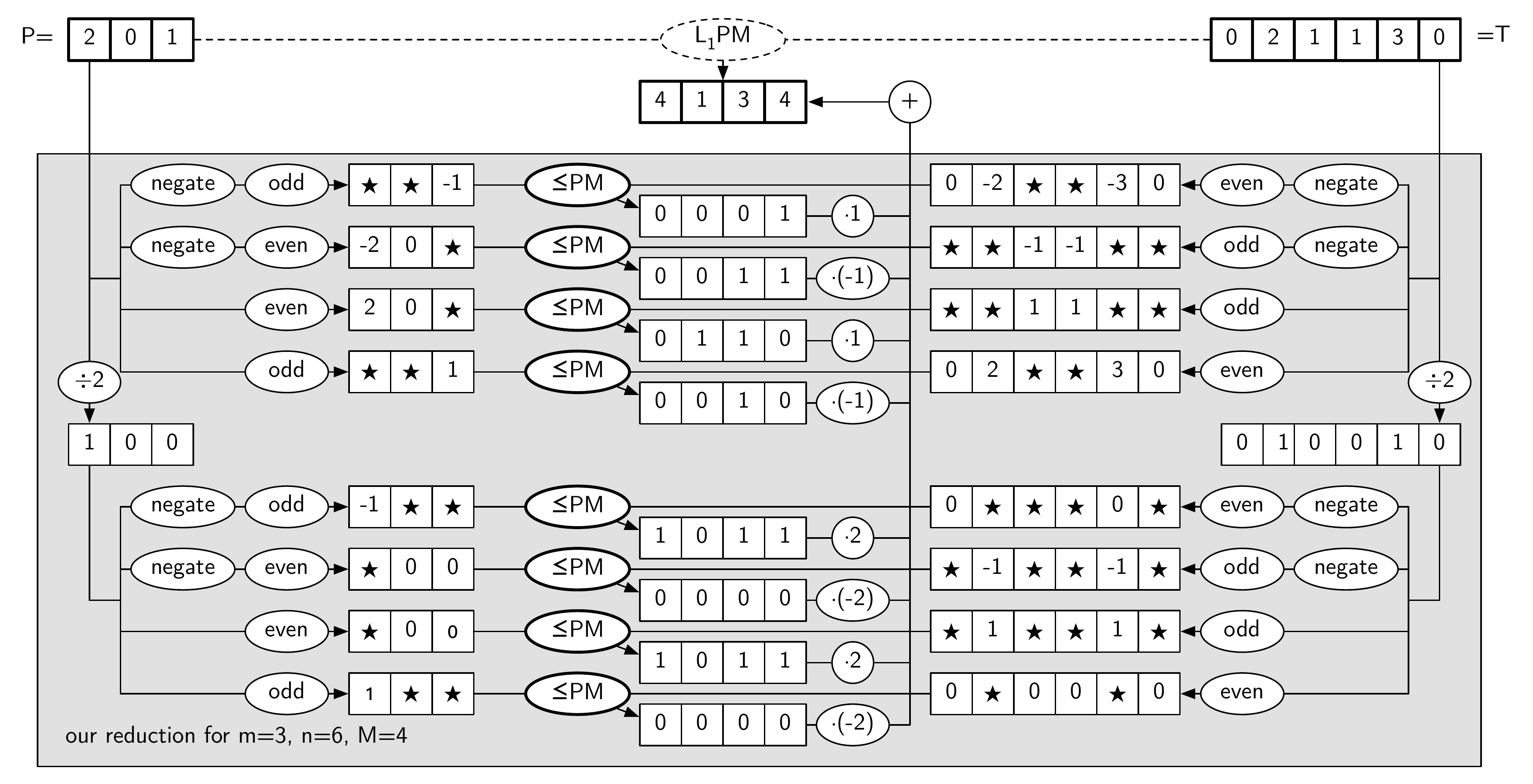}
	\caption{Our reduction from \PMLOne to \PMLessThan in Theorem~\ref{th:L1_to_Dominance} instantiated for a pattern $P$ of length $m=3$ and a text $T$ of length $n=6$ over an alphabet of integers $\{0,1,2,3\}$, so $M=2^2=4$.}
	\label{fig:example-l1-to-less-than}
\end{figure}

\begin{figure}[!h]
	\centering
	\includegraphics[width=0.95\textwidth]{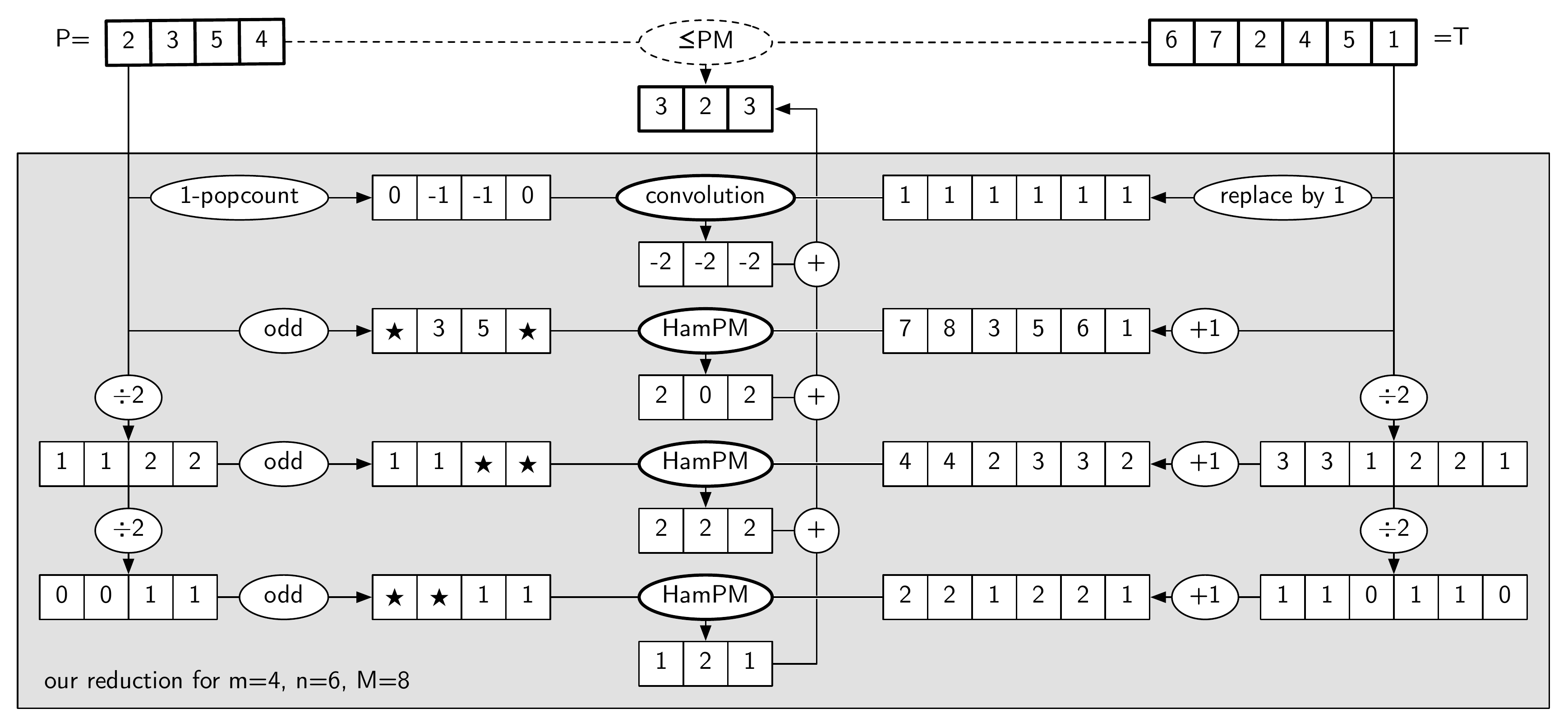}
	\caption{Our reduction from \PMLessThan to \PMHamming in Theorem~\ref{th:Dominance_to_Hamming} instantiated for a pattern $P$ of length $m=4$ and a text $T$ of length $n=6$ over an alphabet of integers $\{0,1,2,3,4,5,6,7\}$, so $M=2^3=8$.}
	\label{fig:example-less-than-to-ham}
\end{figure}

\end{document}